\theoremstyle{plain}
\newtheorem{thm}{\protect\theoremname}[section]
  \theoremstyle{plain}
  \newtheorem{lem}[thm]{\protect\lemmaname}
  \theoremstyle{remark}
  \newtheorem{rem}[thm]{\protect\remarkname}
  \theoremstyle{definition}
  \newtheorem{defn}[thm]{\protect\definitionname}
  \theoremstyle{plain}
  \newtheorem{cor}[thm]{\protect\corollaryname}
\numberwithin{equation}{section}
  \providecommand{\corollaryname}{Corollary}
  \providecommand{\definitionname}{Definition}
  \providecommand{\lemmaname}{Lemma}
  \providecommand{\remarkname}{Remark}
\providecommand{\theoremname}{Theorem}
\begin{document}

\title{Relative entropy for coherent states from Araki formula}

\author{Horacio Casini\thanks{e-mail: casini@cab.cnea.gov.ar}, Sergio Grillo\thanks{e-mail: sergiog@cab.cnea.gov.ar },
Diego Pontello\thanks{e-mail: diego.pontello@ib.edu.ar}}
\maketitle
\begin{center}
\textsl{Centro Atómico Bariloche and Instituto Balseiro,}\\
\textsl{ Comisión Nacional de Energía Atómica (CNEA),}\\
\textsl{ Consejo Nacional de Investigaciones Científicas y Técnicas
(CONICET),}\\
\textsl{ Universidad Nacional de Cuyo (UNCuyo), }\\
\textsl{Av. E. Bustillo 9500 R8402AGP,}\\
\textsl{San Carlos de Bariloche, Río Negro, Argentina}
\par\end{center}
\begin{abstract}
We make a rigorous computation of the relative entropy between the
vacuum state and a coherent state for a free scalar in the framework
of algebraic quantum field theory (AQFT). We study the case of the Rindler wedge. Previous calculations
including path integral methods and computations from the lattice
give a result for such relative entropy which involves integrals of
expectation values of the energy-momentum stress tensor along the
considered region. However, the stress tensor is in general non-unique.
That means that if we start with some stress tensor, then we can “improve”
it adding a conserved term without modifying the Poincaré charges.
On the other hand, the presence of such an improving term affects
the naive expectation for the relative entropy by a non-vanishing
boundary contribution along the entangling surface. In other words,
this means that there is an ambiguity in the usual formula for the
relative entropy coming from the non-uniqueness of the stress tensor.
The main motivation of this work is to solve this puzzle. We first
show that all choices of stress tensor except the canonical one are
not allowed by positivity and monotonicity of the relative entropy.
Then we fully compute the relative entropy between the vacuum and
a coherent state in the framework of AQFT using the Araki formula
and the techniques of modular theory. After all, both results coincide
and give the usual expression for the relative entropy calculated
with the canonical stress tensor.
\end{abstract}

\section{Introduction\label{sec:Intro}}

The algebraic description of quantum field theory (AQFT) focuses on
the local algebras of operators generated by fields in regions of
the space rather than the field operators themselves. This gives a
“basis independent\char`\"{} formulation which does not depend on
the particular fields used for the description of the theory. Statistical
properties of the state in these local algebras have been the subject
of much recent interest in different areas of physics ranging from
holography to condensed matter. Given one or more states and algebras,
several entropic quantities can be defined which give natural measures
of the statistics of fluctuations. In a certain sense, these assignations
of numbers to algebras in AQFT is analogous to the study of correlators
in the approach based on field operators.

In actual computations in specific models, it is customary and useful
to assume a cutoff model, such as a lattice, and proceed to the computation
taking the continuum limit as a final step. In general, we expect that the quantity computed belongs to the continuous theory as far as the
result does not depend on the regularization. In the cutoff model,
given a global pure state $\Phi\in\mathcal{H}$ one can consider the
reduced density matrix $\rho_{\Phi}^{R}$ in a space region $R$ of
a lattice and compute its von Neumann (vN) entropy
\begin{equation}
S_{\Phi}^{R}=-\textrm{tr}\rho_{\Phi}^{R}\log\rho_{\Phi}^{R}\,.
\end{equation}
This is divergent and not well-defined in the continuum due to a large
amount of entanglement of UV modes between both sides of the region
boundary. However, given two states $\Psi$ and $\Omega$ we can also
compute the relative entropy
\begin{equation}
S_{R}\left(\Phi\mid\Omega\right)=\textrm{tr}\rho_{\Phi}^{R}\left(\log\rho_{\Phi}^{R}-\log\rho_{\Omega}^{R}\right)\,,\label{rela1}
\end{equation}
which is much better behaved than the entropy (see for example \cite{rela_holo,Lashkari:2014yva,Sarosi_2016,Sarosi_2016b,Ruggiero:2016khg,Hollands:2017mlk,Longo_xu,Hollands:2018wzi,scalar_nuestro,Murciano:2018cfp,Xu,Longo_2018,Longo_2019,Hollands:2019czd}
for actual calculations). In fact, the relative entropy has an expression
directly in the continuous theory for type III algebras in terms of the Araki formula \cite{Araki_entropy}. This shows it is free from ambiguities.
Relative entropy is an important quantity in quantum information that
measures distinguishability between states. It is always positive
and increasing for fixed states under increasing algebras. It has
recently been very useful in holography to understand the bulk-boundary
map \cite{Jafferis_2015,Dong_2016,Faulkner_2017,Faulkner_2018} and
in the proof of the quantum null energy condition \cite{Ceyhan:2018zfg}.

Another object that has a nice continuum limit is the following one
parameter group of unitaries
\begin{equation}
(\rho_{\Omega}^{R})^{is}\otimes(\rho_{\Omega}^{R'})^{-is}\,,
\end{equation}
where $R'$ is the complement of $R$ and we are assuming there is
a decomposition of the full operator algebra as a tensor product of
the algebras in $R$ and $R'$. This one-parametric group is called
the modular group. The generator,
\begin{equation}
K_{\Omega}=-K_{R}\otimes1+1\otimes K_{R'}\,,\hspace{1cm}K_{R}=-\log\rho_{\Omega}^{R}\,,\label{decom}
\end{equation}
is called the modular Hamiltonian. A well-known case where the modular
Hamiltonian can be computed exactly is the case when $R$ is the Rindler
wedge corresponding to a spatial slice $x^{1}>0$ at $x^{0}=0$, and
the state is the vacuum. In this case, $K_{\Omega}=2\pi K_{1}$ with
$K_{1}$ being the boost generator. In terms of the energy density operator
we can write
\begin{equation}
K_{\Omega}=2\pi\int d^{d-1}x\,x^{1}\,T_{00}(x)\,.\label{rin}
\end{equation}
Returning to the relative entropy, it is useful to write \eqref{rela1}
as
\begin{equation}
S_{R}\left(\Phi\mid\Omega\right)=\Delta\langle K_{R}\rangle-\Delta S_{R}\label{iii}
\end{equation}
where
\begin{eqnarray}
\Delta\langle K_{R}\rangle & = & \textrm{tr}\rho_{\Phi}^{R}K_{R}-\textrm{tr}\rho_{\Omega}^{R}K_{R}\,,\\
\Delta S & = & S_{\Phi}^{R}-S_{\Omega}^{R}\,.
\end{eqnarray}
Written in this way, the relative entropy is the variation in expectation
value of an operator minus the variation in the entropy between the
two states. The positivity of relative entropy means that $\Delta\langle K_{R}\rangle\ge\Delta S_{R}$.
In this form, when $R$ is the Rindler wedge, this inequality has
been related to Bekenstein's bound on entropy \cite{Casini_bek}.

Even if the relative entropy is well defined in the continuum, a mathematically
rigorous definition of the continuum limit of the two terms in \eqref{iii}
has not been worked out in the literature yet. One difficulty is that
the operator $K_{R}$ is only half of the modular Hamiltonian \eqref{decom}.
Even if the modular Hamiltonian has a good operator limit in the continuum,
its half part $K_{R}$ is at most a sesquilinear form. If we focus
for simplicity on the case of the half space and where $\Omega$ is
the vacuum, we could induce from \eqref{rin} that
\begin{equation}
K_{R}=2\pi\int_{x^{1}>0}\negthickspace d^{d-1}x\,x^{1}\,T_{00}(x)\,.\label{half}
\end{equation}
This is not a well-defined operator in Hilbert space because its fluctuation
$\langle\Omega,K_{R}^{2}\Omega\rangle$ diverges. However, expectation
values as in $\Delta\langle K_{R}\rangle$ can still be computed.
Another more important issue is that the act of cutting the modular
Hamiltonian in two pieces generate ambiguities. We are allowed for
example to add field operators localized at the boundary such that
$K_{R}$ has still the same localization and commutation relations
with operators inside $R$. Another view of the same problem is that
hidden in expression \eqref{half}, there is an ambiguity related
to the non-uniqueness of the stress tensor. For example, for the free
Hermitian scalar field, starting from the canonical stress tensor
\begin{equation}
T_{\mu\nu}^{can}=:\partial_{\mu}\phi\partial_{\nu}\phi-\frac{1}{2}\eta_{\mu\nu}\left(\partial_{\sigma}\phi\partial^{\sigma}\phi-m^{2}\phi^{2}\right):\,,\label{eq:stress}
\end{equation}
we can add an ``improving term'' to obtain a new stress tensor
\begin{equation}
T_{\mu\nu}=T_{\mu\nu}^{can}+\frac{\lambda}{2\pi}\,(\partial_{\mu}\partial_{\nu}-g_{\mu\nu}\partial^{2}):\phi^{2}:\,.\label{eq: stress_imp}
\end{equation}
The Poincaré generators obtained from \eqref{eq: stress_imp} equal
the ones obtained from \eqref{eq:stress}, since both expressions
differ in a boundary term which vanishes when the integration region
is the whole space. However, the expression \eqref{half} for $K_{R}$
involves an integration in a semi-infinite region, and hence the presence
of an improving term adds a non zero extra boundary term to the result,
\begin{equation}
K_{R}\rightarrow K_{R}+\lambda\int_{x^{1}=0}\negthickspace d^{d-2}x\,:\phi^{2}\left(x\right):\,.\label{eq:hamil_improved}
\end{equation}
This is essentially the only boundary term we can add with the correct
dimensions and that does not require a dimensionful coefficient with
negative dimensions. This can have nonzero expectation values for
certain states and makes the definition of $\Delta\langle K_{R}\rangle$
ambiguous.

Since the relative entropy is well defined, this ambiguity must be
compensated by another one in the definition of $\Delta S_{R}$ in
\eqref{iii}. This is the subtraction of two divergent quantities
and again we do not have a mathematically rigorous definition in the
continuum. We can make this definition unambiguous in a natural way
by using a particular regularization of entropy that has been proposed
in the literature \cite{Casini_2006,chmy}. The idea is to associate
the entropy (for a pure state) with half the mutual information $I(R_{\epsilon}^{+},R_{\epsilon}^{-})$
between two non-intersecting regions on both sides of the boundary
of $R$. The regions $R_{\epsilon}^{\pm}$ are displaced a distance
$\epsilon$ from the boundary of $R$. For the case of the Rindler
wedge we can take $R_{\epsilon}^{+}$ formed by points with $x^{1}>\epsilon$
and $R_{\epsilon}^{-}$ formed by points with $x^{1}<-\epsilon$.
The mutual information is also a relative entropy and is well defined
in the continuum. Then, a well-defined $\Delta S_{R}$ is given by
\begin{equation}
\Delta S_{R}=\frac{1}{2}\lim_{\epsilon\rightarrow0}(I_{\Phi}(R_{\epsilon}^{+},R_{\epsilon}^{-})-I_{\Omega}(R_{\epsilon}^{+},R_{\epsilon}^{-}))\,.\label{rigo}
\end{equation}
When it is computed in the lattice, it coincides with the usual $\Delta S_{R}$.

Defining $\Delta S_{R}$ rigorously through \eqref{rigo}, then $\Delta\langle K_{R}\rangle$
is also well defined through 
\begin{equation}
\Delta\langle K_{R}\rangle=S_{R}\left(\Phi\mid\Omega\right)+\Delta S_{R}\,.
\end{equation}
Then the question that motivates this paper is whether this definition
agrees with the expectation value of \eqref{half}. In such a case,
boundary terms in this expression should be automatically fixed. In
particular, we should be able to study which value of the improvement
term is the correct one for a scalar field in \eqref{eq: stress_imp}.

In order to (partially) settle this issue, in this paper we analyze
the relative entropy between a coherent state for a free scalar field
and the vacuum in the Rindler wedge. Coherent states are states formed
out by acting on the vacuum with a unitary operator that is the exponential
of the smeared field, i.e. 
\begin{equation}
\Phi=\mathrm{e}^{i\int d^{d-1}x\,\left[\varphi\left(\bar{x}\right)f_{\varphi}\left(\bar{x}\right)+\pi\left(\bar{x}\right)f_{\pi}\left(\bar{x}\right)\right]}\Omega\,,\label{coh_intro}
\end{equation}
where $\varphi\left(\bar{x}\right):=\phi\left(0,\bar{x}\right)$ and
$\pi\left(\bar{x}\right):=\partial_{0}\phi\left(0,\bar{x}\right)$.
For the purpose of the definition \eqref{rigo}, we can represent
the same state with a different vector $\tilde{\Phi}=UU'\Omega$, where
$U$ is a unitary belonging to the region $R$ and $U'$ is a unitary
belonging to its complementary region $R'$. Indeed, we can replace
each of the smooth functions $f_{\varphi}\left(\bar{x}\right)$ and
$f_{\pi}\left(\bar{x}\right)$ in \eqref{coh_intro} by the sum of
two new smooth functions,
\begin{equation}
f_{\varphi}\rightarrow f_{\varphi,R}+f_{\varphi,R'}\,,\quad f_{\pi}\rightarrow f_{\pi,R}+f_{\pi,R'}\,,
\end{equation}
such that $f_{\varphi,R},f_{\pi,R}$ vanish inside $R'$ and $f_{\varphi,R'},f_{\pi,R'}$
vanish inside $R$. We must also require that $f_{\varphi,R}\equiv f_{\varphi}$
inside $R_{\epsilon}^{+}$ and $f_{\varphi,R'}\equiv f_{\varphi}$
inside $R_{\epsilon}^{-}$ (idem for $\pi$). Under this assumptions,
the new state $\tilde{\Phi}=UU'\Omega$, defined through
\begin{equation}
U=\mathrm{e}^{i\int d^{d-1}x\,\left[\varphi\left(\bar{x}\right)f_{\varphi,R}\left(\bar{x}\right)+\pi\left(\bar{x}\right)f_{\pi,R}\left(\bar{x}\right)\right]}\quad\mathrm{and}\quad U'=\mathrm{e}^{i\int d^{d-1}x\,\left[\varphi\left(\bar{x}\right)f_{\varphi,R'}\left(\bar{x}\right)+\pi\left(\bar{x}\right)f_{\pi,R'}\left(\bar{x}\right)\right]},
\end{equation}
represents the same state as $\Phi$ in the algebra of the region
$R_{\epsilon}^{+}\cup R_{\epsilon}^{-}$. In fact, the above computation
can be done because of the presence of the finite corridor of width
$2\epsilon$. Moreover, we have that the operator $U$ (respectively, $U'$)
acts, by adjoint action, as an automorphism of the algebra of the
region $R_{\epsilon}^{+}$ (respectively, $R_{\epsilon}^{-}$), and as the
identity transformation over the algebra of the region $R_{\epsilon}^{-}$
(resp. $R_{\epsilon}^{+}$). Such automorphisms do not change the
mutual information, and with our definition \eqref{rigo}, we automatically
have $\Delta S_{R}=0$ for these states. 

Hence, the question simplifies to see whether for coherent states
\begin{equation}
S_{R}\left(\Phi\mid\Omega\right)=2\pi\int_{x^{1}>0}\negthickspace d^{d-1}x\,x^{1}\,\left\langle \Phi,T_{00}\left(\bar{x}\right)\Phi\right\rangle \,,\label{cali}
\end{equation}
and which is the right improvement term. Notice that coherent states
can change the expectation value of $:\phi^{2}:$.

In section \ref{sec:lamda_bounds}, assuming that \eqref{cali} is
correct for some improvement, we show that the only possibility
is the canonical stress tensor, i.e. $\lambda=0$. We show this by
imposing bounds which come from the positivity and monotonicity of
the relative entropy.

In the rest of the paper, we actually compute the relative entropy
using Araki formula and show the result \eqref{cali} is correct for
the canonical stress tensor. We note that, while this paper was being
prepared, a similar calculation by R. Longo has appeared in the literature
\cite{Longo_2019}. A simpler case where the unitary has support inside
the wedge has previously appeared in \cite{nima}. Our paper differs
from the one by Longo in motivation, scope, and several details, while
there is an overlap in the main technical ideas. 

To make this article as self-contained as possible, in section
\ref{sec:Wightman-theory-of} we briefly review the algebraic formulation
of the free scalar field. Because of a forthcoming necessity, we consider
two different approaches. The first one is the usual approach where
we define the net of algebras associated to spacetime regions. The
second one consists in defining the local algebras associated to spatial
sets belonging to a common Cauchy surface. We also explain how these
two approaches are related. In section \ref{sec:Modular-theory} we
review the basic concepts of the modular theory of von Neumann algebras.
In particular, we introduce the modular operator used to derive the
modular Hamiltonian and the modular flow. We also discuss the theorems
of Tomita-Takesaki and Bisognano-Whichmann. And finally, we introduce
the relative modular operator used in the definition of the relative
entropy for general von Neumann algebras. The reader who is familiar
with these concepts may skip these sections and go directly to \ref{sec:Explicit-calculation},
where we explicitly compute the proposed relative entropy. We study
separately the (trivial) case when the coherent state belongs to the
wedge algebra, and the more interesting (and also more difficult)
case when the coherent state has a non-vanishing density along the
entangling surface. In this section, we also first study some general
aspects concerning the relative entropy for coherent states which
applies to any region. We provide a complete mathematical rigorous
proof of all the results. For a better reading of the article, the
proof of some theorems and some tedious but straightforward calculations
were placed into the appendixes.

\section{Boundary terms in the relative entropy\label{sec:lamda_bounds}}

According to the discussion above, there is an ambiguity on the expression
\eqref{cali} for the relative entropy of a coherent state coming
from the different possible choices of an improving term for the stress
energy-momentum tensor. According to \eqref{eq:hamil_improved}, the
relative entropy could be written as the usual contribution with the
canonical stress tensor plus a boundary term coming from the improving
\begin{equation}
S_{R}\left(\Phi\mid\Omega\right)=\lambda\int_{x_{1}=0}\negthickspace d^{d-2}x\,\left\langle \Phi,\phi^{2}\left(\bar{x}\right)\Phi\right\rangle +2\pi\int_{x_{1}>0}\negthickspace d^{d-1}x\,x^{1}\left\langle \Phi,T_{00}^{can}\left(\bar{x}\right)\Phi\right\rangle \,.\label{eq:re_lambda}
\end{equation}
In this section we assume this formula is correct and show that the
only consistent choice is $\lambda=0$.

A general coherent state can be written as in \eqref{coh_intro} with
$f_{\varphi},f_{\pi}\in\mathcal{S}\left(\mathbb{R}^{d-1},\mathbb{R}\right)$.\footnote{$\mathcal{S}\left(\mathbb{R}^{n},\mathbb{R}\right)$ denotes the Schwartz
space of real, smooth and exponentially decreasing functions at infinity.} In this case, a straightforward computation from \eqref{eq:re_lambda}
gives 
\begin{equation}
S_{R}\left(\Phi\mid\Omega\right)=\lambda\int_{x_{1}=0}\negthickspace d^{d-2}x\,f_{\pi}\left(\bar{x}\right)^{2}+2\pi\int_{x_{1}>0}\negthickspace d^{d-1}x\,\frac{1}{2}\left(f_{\varphi}\left(\bar{x}\right)^{2}+\left(\nabla f_{\pi}\left(\bar{x}\right)\right)^{2}+m^{2}f_{\pi}\left(\bar{x}\right)^{2}\right)\,.\label{eq:re_coherent}
\end{equation}
Regardless of what should be the true value for $\lambda$, if we want \eqref{eq:re_lambda} and \eqref{eq:re_coherent} represent real
expressions for a relative entropy, they must satisfy all the properties
known for a relative entropy. In particular we concentrate on
the positivity 
\begin{equation}
S_{R}\left(\Phi\mid\Omega\right)\geq0\,,\label{eq:pos_re}
\end{equation}
and the monotonicity, that for the case of wedges implies 
\begin{equation}
\left.S_{R}\left(\Phi\mid\Omega\right)\right|_{\mathcal{W}_{y}}\geq\left.S_{R}\left(\Phi\mid\Omega\right)\right|_{\mathcal{W}_{y'}}\,,\quad\textrm{for any }y'\geq y\,,\label{eq:mon_re}
\end{equation}
where $\left.S_{R}\left(\Phi\mid\Omega\right)\right|_{\mathcal{W}_{y}}$
is the relative entropy for the states $\Psi,\Omega$ but associated
to the algebra of the translated Rindler wedge $\mathcal{W}_{y}:=\left\{ x\in\mathbb{R}^{d}\,:\,x^{1}-y>\left|x^{0}\right|\right\} $.
In fact, $\mathcal{W}_{y}$ is obtained applying a translation of
amount $y$, in the $x^{1}$ positive direction, to the original Rindler
wedge $\mathcal{W}$. From now on, we denote $S_{R}\left(y\right):=\left.S_{R}\left(\Phi\mid\Omega\right)\right|_{\mathcal{W}_{y}}$.

Therefore, the strategy we adopt is to choose conveniently functions
$f_{\varphi}$ and $f_{\pi}$ and impose \eqref{eq:pos_re} and \eqref{eq:mon_re}
on \eqref{eq:re_coherent} in order to bound the allowed values for
$\lambda$. In fact, we show that from positivity we obtain $\lambda\geq0$
and from the monotonicity we obtain $\lambda\leq0$, an hence it must
be 
\begin{equation}
\lambda=0\,.
\end{equation}
Then we conclude that, if we assume that \eqref{cali} is the correct
result for the relative entropy, such an expression holds for the canonical
stress-energy-momentum tensor \eqref{eq:stress}.

Before we start, we make two simplifications. The first one, which
is obvious, is to take $f_{\varphi}\equiv0$ and denote $f:=f_{\pi}$.
The second one is to work in $d=1+1$ dimensions. The general result
for any dimensions could be obtained easily from the former.

\subsection{Lower bound from positivity}

We start with the expression

\begin{equation}
S_{R}\left(\Phi\mid\Omega\right)=\lambda f\left(0\right)^{2}+\pi\int_{0}^{+\infty}dx\,x\left(f'\left(x\right)^{2}+m^{2}f\left(x\right)^{2}\right)\,,
\end{equation}
where $f$ is a real-valued function belonging to $\mathcal{S}\left(\mathbb{R}\right)$.
Then, the positivity of the relative entropy means that

\begin{equation}
0\leq\lambda f\left(0\right)^{2}+\pi\int_{0}^{+\infty}dx\,xf'\left(x\right)^{2}+\pi m^{2}\int_{0}^{+\infty}dx\,xf\left(x\right)^{2}\,.
\end{equation}
By scaling the function $f(x)\rightarrow f(x/\beta)$ the first two
terms of the right-hand side are constant while the last one gets
multiplies by $\beta^{2}$. Hence, we can make the last term as small
as we want and simply take $m=0$ in the following. Taking $f$ such
that $f\left(0\right)\neq0$ we get 
\begin{equation}
0\leq\lambda+\pi\frac{\int_{0}^{+\infty}dx\,xf'\left(x\right)^{2}}{f\left(0\right)^{2}}\,.\label{eq:pos_ineq_2}
\end{equation}
Now, we introduce a convenient family of real functions $\left(f_{a}\right)_{a>0}\in\mathcal{S}\left(\mathbb{R}\right)$
given by 
\begin{equation}
f_{a}\left(x\right):=\log\left(\frac{x}{L}+a\right)\mathrm{e}^{-\frac{x}{L}}\,,\quad x\geq0\,,
\end{equation}
and where $L>0$ is a dimensionful fixed constant.\footnote{The functions $f_{a}$ are smoothly extended to the whole real line.
Such an extension is guaranteed by a theorem due to Seeley \cite{seeley}.\label{fn:c_inf_ext}} A straightforward computation shows that the integral in equation
\eqref{eq:pos_ineq_2} behaves as

\begin{equation}
\int_{0}^{+\infty}dx\,xf_{a}'\left(x\right)^{2}=-\log\left(a\right)+\mathcal{O}\left(1\right)\,,\quad a\apprge0\,.\label{eq:bound1}
\end{equation}
Then replacing \eqref{eq:bound1} into \eqref{eq:pos_ineq_2} we get
\begin{equation}
0\leq\lambda-\frac{L^{2}}{4}\pi\frac{\log\left(a\right)+\mathcal{O}\left(1\right)}{\log^{2}\left(a\right)}\,.
\end{equation}
Finally, taking the limit $a\rightarrow0^{+}$ we get the desired
result 
\begin{equation}
\lambda\geq0.\label{eq: lower_bound}
\end{equation}

\subsection{Upper bound from monotonicity}

We start with the expressions 
\begin{eqnarray}
S_{R}\left(0\right)\!\!\!\! & = & \!\!\!\!\lambda f\left(0\right)^{2}+\pi\int_{0}^{+\infty}dx\,x\left(f'\left(x\right)^{2}+m^{2}f\left(x\right)^{2}\right)\,,\\
S_{R}\left(y\right)\!\!\!\! & = & \!\!\!\!\lambda f\left(y\right)^{2}+\pi\int_{y}^{+\infty}dx\,\left(x-y\right)\left(f'\left(x\right)^{2}+m^{2}f\left(x\right)^{2}\right)\,,
\end{eqnarray}
where $f$ is a real-valued function belonging to $\mathcal{S}\left(\mathbb{R}\right)$.
We can eliminate the mass terms by scaling as in the previous section.
The monotonicity $S_{R}\left(0\right)\geq S_{R}\left(y\right)$ for
$y\geq0$ reads 
\begin{eqnarray}
\lambda\left(f\left(y\right)^{2}-f\left(0\right)^{2}\right) & \leq & \pi\int_{0}^{y}dx\,xf'\left(x\right)^{2}+\pi y\int_{y}^{+\infty}dx\,f'\left(x\right)^{2}\,.\label{eq: ineq_mon}
\end{eqnarray}
Now, we introduce a convenient family of functions parametrized with
the constants $\alpha\in\left(0,\frac{1}{2}\right),\delta\in\left(0,1\right),y>0,\epsilon>0$
given by 
\begin{equation}
f_{\alpha,\delta,y,\epsilon}\left(x\right):=g_{\alpha,\delta,y}\left(x\right)\Theta_{y,\epsilon}\left(x\right)\,,\quad\textrm{for }x\geq0\,,\label{eq:func_mon}
\end{equation}
where 
\begin{equation}
g_{\alpha,\delta,y}\left(x\right):=\left(\frac{x}{y}\left(1-\delta\right)+\delta\right)^{\alpha}\,,
\end{equation}
and $\Theta_{y,\epsilon}$ is a smooth step function with the condition
\begin{equation}
\Theta_{y,\epsilon}\left(x\right)=\begin{cases}
1 & x\leq y\,,\\
0 & x\geq y+\epsilon\,.
\end{cases}
\end{equation}
We introduce such a step function to ensure that $f_{\alpha,\delta,y,\epsilon}\in\mathcal{S}\left(\mathbb{R}\right)$
for any values of $\left(\alpha,\delta,y,\epsilon\right)$ in the
set specified above. The functions $f_{\alpha,\delta,y,\epsilon}$
are smoothly extended to the whole real line. In particular we use
\begin{equation}
\Theta_{y,\epsilon}\left(x\right):=\left[1+\exp\left(-\frac{2\epsilon\left(x-y-\frac{\epsilon}{2}\right)}{\left(x-y-\frac{\epsilon}{2}\right)^{2}-\frac{\epsilon^{2}}{4}}\right)\right]^{-1}\,,\quad\textrm{if }y<x<y+\epsilon\,,
\end{equation}
which has the useful property $\max_{x\in\mathbb{R}}\left|\Theta_{y,\epsilon}'\left(x\right)\right|=\frac{2}{\epsilon}$.
From now on, we do not write the cumbersome subindices of the above
functions. For the different terms of \eqref{eq: ineq_mon} we have
that 
\begin{eqnarray}
f\left(y\right)^{2}-f\left(0\right)^{2}\!\!\!\! & = & \!\!\!\!1-\delta^{2\alpha}\,,\\
\pi\int_{0}^{y}dx\,xf'\left(x\right)^{2}\!\!\!\! & = & \!\!\!\!\pi\frac{\alpha}{2}\left(\frac{2\alpha\delta-\delta^{2\alpha}}{1-2\alpha}+1\right)\,,\\
\pi y\int_{y}^{+\infty}dx\,f'\left(x\right)^{2}\!\!\!\! & \leq & \!\!\!\!\pi y\int_{y}^{y+\epsilon}dx\left|g'\left(x\right)^{2}\Theta\left(x\right)^{2}\right|+\pi y\int_{y}^{y+\epsilon}dx\left|g\left(x\right)^{2}\Theta'\left(x\right)^{2}\right|\\
 &  & \!\!\!\!+\pi y\int_{y}^{y+\epsilon}dx\left|2g'\left(x\right)g\left(x\right)\Theta\left(x\right)\Theta'\left(x\right)\right|\,.\label{eq: term_feo}
\end{eqnarray}
We deal with each term of \eqref{eq: term_feo} separately 
\begin{eqnarray}
\pi y\int_{y}^{y+\epsilon}dx\,\left|g'\left(x\right)^{2}\Theta\left(x\right)^{2}\right|\!\!\!\! & \leq & \!\!\!\!\pi y\int_{y}^{y+\epsilon}dx\,g'\left(x\right)^{2}\nonumber \\
 & = & \!\!\!\!\frac{\pi\alpha^{2}\left(1-\delta\right)}{1-2\alpha}\left[1-\left(1+\frac{(1-\delta)\epsilon}{y}\right)^{2\alpha-1}\right]\underset{\epsilon\rightarrow+\infty}{\longrightarrow}\frac{\pi\alpha^{2}\left(1-\delta\right)}{1-2\alpha}\,,\\
\pi y\int_{y}^{y+\epsilon}dx\,\left|2g'\left(x\right)g\left(x\right)\Theta\left(x\right)\Theta'\left(x\right)\right|\!\!\!\! & \leq & \!\!\!\!\pi y\frac{2}{\epsilon}\int_{y}^{y+\epsilon}dx\,2g'\left(x\right)g\left(x\right)=2\pi\frac{y}{\epsilon}\left[g\left(y+\epsilon\right)^{2}-g\left(y\right)^{2}\right]\nonumber \\
 & = & \!\!\!\!\frac{2\pi y}{\epsilon}\left[\left(1+\frac{\left(1-\delta\right)\epsilon}{y}\right)^{2\alpha}-1\right]\underset{\epsilon\rightarrow+\infty}{\longrightarrow}0\,,
\end{eqnarray}
\begin{eqnarray}
\pi y\int_{y}^{y+\epsilon}dx\,\left|g\left(x\right)^{2}\Theta'\left(x\right)^{2}\right|\!\!\!\! & \leq & \!\!\!\!\pi y\frac{4}{\epsilon^{2}}\int_{y}^{y+\epsilon}dx\,g\left(x\right)^{2}\nonumber \\
 & = & \!\!\!\!\frac{4\pi y^{2}}{\left(1+2\alpha\right)\left(1-\delta\right)\epsilon^{2}}\left(\left(1+\frac{\left(1-\delta\right)\epsilon}{y}\right)^{2\alpha+1}-1\right)\\
 & \underset{\epsilon\rightarrow+\infty}{\longrightarrow} & \!\!\!\!0\,.\nonumber 
\end{eqnarray}
where in the last steps of each computation we take the limit $\epsilon\rightarrow+\infty$.
It is valid to take this limit in the inequality since it must hold
for all $\epsilon>0$. Replacing these partial results on \eqref{eq: ineq_mon}
we arrive at
\begin{equation}
\lambda\left(1-\delta^{2\alpha}\right)\leq\pi\frac{\alpha}{2}\left(\frac{2\alpha\delta-\delta^{2\alpha}}{1-2\alpha}+1\right)+\frac{\pi\alpha^{2}\left(1-\delta\right)}{1-2\alpha}\,.
\end{equation}
Then, taking the limit $\delta\rightarrow0^{+}$ we get

\begin{equation}
\lambda\leq\pi\frac{\alpha}{2}+\pi\frac{\alpha^{2}}{1-2\alpha}\,,
\end{equation}
and finally, taking $\alpha\rightarrow0^{+}$ we arrive at the desired
result 
\begin{equation}
\lambda\leq0.\label{eq: upper_bound}
\end{equation}

\section{Algebraic theory of the free hermitian scalar field\label{sec:Wightman-theory-of}}

\subsection{Axioms of AQFT\label{subsec:Axioms-of-AQFT}}

In the algebraic approach to quantum field theory (AQFT), we associate
for each region of the spacetime a $C^{*}$ or von Neumann algebra
which encodes the algebraic relations between the quantum fields.
Such an assignment must satisfy a set of axioms that encode the physical
conditions in the algebraic framework. Unless the specific set of
axioms considered could depend on the underlying theory (especially
on the spacetime considered), the assumptions listed below are very
standard for the treatment of QFT's on Minkowski spacetime.

To start we call a double cone to any open region $\mathcal{O}\subset\mathbb{R}^{d}$
of Minkowski spacetime defined by the intersection of the future open
null cone of some point $x\in\mathbb{R}^{d}$ with the past open null
cone of other point $y\in\mathbb{R}^{d}$.\footnote{In particular, if $y$ is not in the timelike future of $x$, then
$\mathcal{O}=\emptyset$.} In AQFT, we start with a $C^{*}$-algebra $\mathfrak{A}$, called
the \textit{quasilocal algebra}, and we assign to each (nonempty)
double cone $\mathcal{O}\subset\mathbb{R}^{d}$ a $C^{*}$-subalgebra
$\mathfrak{A}\left(\mathcal{O}\right)\subset\mathfrak{A}$, which
are called the \textit{local algebra}s. This collection (net\footnote{Mathematically, due to axiom 1, the collection of local algebras forms
a net indexed by the set of double cones. The set of double cones
forms a direct set when it is ordered by the usual set inclusion.}) of local algebras must satisfy the following: 
\begin{enumerate}
\item Generating property: $\mathfrak{A=\overline{\bigcup_{\mathcal{O}}\mathfrak{A}\left(\mathcal{O}\right)}}^{\left\Vert .\right\Vert }$
, where the union runs over the set of all double cones.
\item Isotony: for any pair of double cones $\mathcal{O}_{1}\subset\mathcal{O}_{2}$,
then $\mathfrak{A}\left(\mathcal{O}_{1}\right)\subset\mathfrak{A}\left(\mathcal{O}_{2}\right)$. 
\item Causality: if $\mathcal{O}_{1}$ and $\mathcal{O}_{2}$ are spacelike
separated (i.e. $\mathcal{O}_{1}\sim\mathcal{O}_{2}$) then $\left[\mathfrak{A}\left(\mathcal{O}_{1}\right),\mathfrak{A}\left(\mathcal{O}_{2}\right)\right]=\left\{ 0\right\} $. 
\item Poincaré covariance: there is a (norm) continuous linear representation
$\alpha_{g}$ of $\mathcal{P}_{+}^{\uparrow}$ in $\mathfrak{U}$,
such that $\alpha_{g}\left(\mathfrak{A}\left(\mathcal{O}\right)\right)=\mathfrak{A}\left(g\mathcal{O}\right)$
for any open bounded region $\mathcal{O}$ and all $g\in\mathcal{P}_{+}^{\uparrow}$,
where the action of $g\in\mathcal{P}_{+}^{\uparrow}$ over a region
$\mathcal{O}$ is given by $g\mathcal{O}:=\left\{ \Lambda x+a\,:\,x\in\mathcal{O}\right\} $.
\item Vacuum: there is a pure state $\omega$ in $\mathfrak{A}$ invariant
under all $\alpha_{g}$. Then, in its GNS representation $\left(\pi,\mathcal{H},\Omega\right)$
the linear representation $\alpha_{g}$ is implemented by a positive
energy unitary representation of $\mathcal{P}_{+}^{\uparrow}$ in
$\mathcal{H}$ in the sense that $U\left(g\right)\pi\left(A\right)U\left(g\right)^{*}=\pi\left(\alpha_{g}\left(A\right)\right)$
for all $A\in\mathfrak{A}$ and all $g\in\mathcal{P}_{+}^{\uparrow}$.
Positive energy means that the representation is strongly continuous
and the infinitesimal generators $P^{\mu}$ of the translation subgroup
(i.e. $U\left(0,a\right)=\mathrm{e}^{iP^{\mu}a_{\mu}}$) have their
spectral projections on the \textit{closed forward light cone} $\overline{V}_{+}:=\left\{ p\in\mathbb{R}^{d}\,:\,p\cdot p>0\textrm{ and }p^{0}>0\right\} $. 
\end{enumerate}
For a general open region (possibly unbounded) $\mathcal{\mathcal{O}}\subset\mathbb{R}^{d}$,
we define $\mathfrak{A}\left(\mathcal{O}\right):=\overline{\bigcup_{\mathcal{\tilde{O}}\subset\mathcal{O}}\mathfrak{A}\left(\tilde{\mathcal{O}}\right)}^{\left\Vert .\right\Vert }$
where the union runs over the set of all double cones $\mathcal{\tilde{O}}\subset\mathcal{O}$.

When we want to study states which are constructed by local perturbations
around the vacuum state $\omega$, we often work directly by the collection
of concrete $C^{*}$-algebras $\pi\left(\mathfrak{A}\left(\mathcal{O}\right)\right)\subset\mathcal{B}\left(\mathcal{H}\right)$
acting on the vacuum Hilbert space $\mathcal{H}$. For technical reasons,
we usually work with the net of von Neumann algebras $\mathcal{R}\left(\mathcal{O}\right):=\pi\left(\mathfrak{A}\left(\mathcal{O}\right)\right)''$,
where $''$ denotes the double commutant which coincides with the
weak closure. Moreover, when we want to construct a concrete example
of a QFT satisfying the axioms above, it is, in general, easier to
construct a net of von Neumann algebras $\mathcal{O}\rightarrow\mathcal{R}\left(\mathcal{O}\right)$
acting on a given Hilbert space.

One immediate consequence of the axioms is the Reeh-Schlieder theorem.
Before we state it we need to introduce some definitions. For any
open region $\mathcal{O}\subset\mathbb{R}^{d}$, we define its (open)
\textit{spacelike complement} as
\begin{equation}
\mathcal{O}':=\mathrm{Int}\left\{ x\in\mathbb{R}^{d}\::\:\left(x-y\right)^{2}<0,\;\forall y\in\mathcal{O}\right\} \,.
\end{equation}
Let $\mathcal{R}\subset\mathcal{B}\left(\mathcal{H}\right)$ be a
von Neumann algebra. We say that a vector $\Phi\in\mathcal{H}$ is
cyclic iff $\overline{\mathcal{R}\Phi}=\mathcal{H}$, and separating
iff $A\Phi=0$ with $A\in\mathcal{R}$ implies $A=0$.
\begin{thm}
\label{par:Theorem_rs}(Reeh-Schlieder \cite{reeh})\textup{ In any
QFT satisfying the axioms 1. to 5. above, the vacuum vector $\Omega$
is cyclic for any algebra $\pi\left(\mathfrak{A}\left(\mathcal{O}\right)\right)''$
corresponding to any (non-empty) open region. Moreover, if $\mathcal{O}'$
is also open and non-empty, then $\Omega$ is also separating for
$\pi\left(\mathfrak{A}\left(\mathcal{O}\right)\right)''$. }
\end{thm}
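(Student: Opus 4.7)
The plan is to prove cyclicity first by a standard analytic-continuation argument exploiting the positive-energy spectrum condition (axiom 5), and then to derive the separating property as a short corollary of causality (axiom 3). For cyclicity I would argue by contradiction: suppose $\Psi\in\mathcal{H}$ is orthogonal to $\mathcal{R}(\mathcal{O})\Omega$, and aim to show $\Psi=0$.

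First, I would pick an open sub-region $\tilde{\mathcal{O}}$ with $\overline{\tilde{\mathcal{O}}}\subset\mathcal{O}$ and a real neighborhood $N\subset\mathbb{R}^{d}$ of the origin small enough that $\tilde{\mathcal{O}}+a_{1}+\cdots+a_{k}\subset\mathcal{O}$ for all $k\le n$ whenever $a_{1},\ldots,a_{n}\in N$. Then for any $A_{1},\ldots,A_{n}\in\mathcal{R}(\tilde{\mathcal{O}})$ the function
\[
F(a_{1},\ldots,a_{n}):=\langle\Psi,\,A_{1}U(a_{1})A_{2}U(a_{2})\cdots A_{n}U(a_{n})\Omega\rangle
\]
vanishes for real $a_{j}\in N$, because the translation-invariance of $\Omega$ allows one to absorb the $U(a_{j})$'s and rewrite the product as an operator in $\mathcal{R}(\mathcal{O})$ acting on $\Omega$. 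Writing $U(a)=e^{iP\cdot a}$ and using that the joint spectrum of $P^{\mu}$ lies in $\overline{V}_{+}$, the substitution $a_{j}\mapsto a_{j}-ib_{j}$ with $b_{j}\in V_{+}$ turns each $U(a_{j})$ into a bounded operator depending holomorphically on $\zeta_{j}:=a_{j}-ib_{j}$. Therefore $F$ extends to a function holomorphic on the forward tube $\{(\zeta_{1},\ldots,\zeta_{n}):\operatorname{Im}\zeta_{j}\in V_{+}\}$ and continuous up to the real boundary. Since it vanishes on the real open set $N^{n}$, the edge-of-the-wedge theorem forces $F\equiv 0$ on its whole domain, and in particular for all real $a_{1},\ldots,a_{n}$.

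Hence $\Psi$ is orthogonal to every vector $A_{1}U(a_{1})\cdots A_{n}U(a_{n})\Omega$ with arbitrary real $a_{j}$. By Poincaré covariance $U(a)A_{j}U(a)^{*}\in\mathcal{R}(\tilde{\mathcal{O}}+a)$, so these vectors include all products of local operators localized in arbitrary translates of $\tilde{\mathcal{O}}$. Since the translates of $\tilde{\mathcal{O}}$ cover $\mathbb{R}^{d}$, the generating property (axiom 1) together with purity of the vacuum state (axiom 5, which makes the GNS representation irreducible, $\pi(\mathfrak{A})''=\mathcal{B}(\mathcal{H})$) force this collection of vectors to be dense in $\mathcal{H}$; thus $\Psi=0$. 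For the separating claim, if $\mathcal{O}'$ is open and non-empty then the part just proved gives that $\Omega$ is cyclic for $\mathcal{R}(\mathcal{O}')$. Causality (axiom 3) implies $\mathcal{R}(\mathcal{O}')\subset\mathcal{R}(\mathcal{O})'$, so $\Omega$ is cyclic for $\mathcal{R}(\mathcal{O})'$ as well, and the elementary fact that a vector cyclic for the commutant is separating for the algebra closes the argument.

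The main obstacle I expect is the multi-variable analytic-continuation step: one must check that the iterated translations assemble into a single holomorphic function on a connected tube whose real boundary meets the open set $N^{n}$ on which $F$ already vanishes, and then invoke a suitable Schwarz-reflection / edge-of-the-wedge result to propagate the vanishing to all real arguments. The remaining pieces — exploiting causality, the generating axiom, and irreducibility of the vacuum representation — are comparatively mechanical once the analytic step is in place.
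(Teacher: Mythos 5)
The paper itself does not prove this theorem; it only states it and cites the original Reeh--Schlieder paper, so there is no in-text argument to compare against. What you have written is the standard analytic-continuation proof, and most of it is sound: the reduction to a smaller region $\tilde{\mathcal{O}}$ whose iterated translates stay inside $\mathcal{O}$, the observation that $F$ vanishes on the real open set $N^{n}$ because the product collapses to an element of $\mathcal{R}\left(\mathcal{O}\right)$ applied to $\Omega$, the holomorphic extension to a tube via the spectrum condition of axiom 5 together with the identity/edge-of-the-wedge theorem, and the derivation of the separating property from cyclicity for $\mathcal{R}\left(\mathcal{O}'\right)\subset\mathcal{R}\left(\mathcal{O}\right)'$ are all correct. (A purely cosmetic point: with the paper's convention $U\left(0,a\right)=\mathrm{e}^{iP^{\mu}a_{\mu}}$ and spectrum in $\overline{V}_{+}$, the bounded continuation is obtained for $\operatorname{Im}\zeta_{j}\in V_{+}$, i.e.\ $a_{j}\mapsto a_{j}+ib_{j}$ rather than $a_{j}-ib_{j}$.)

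The one genuine gap is the density step at the end of the cyclicity argument. From the vanishing of $F$ for all real $a_{j}$ you conclude that $\Psi$ is orthogonal to $\bigvee_{x}\mathcal{R}\left(\tilde{\mathcal{O}}+x\right)\Omega$, the closed span of products of operators localized in arbitrary translates of $\tilde{\mathcal{O}}$. But the generating property (axiom 1) only says that $\mathfrak{A}$ is generated by the local algebras of \emph{all} double cones; it does not say that the algebra of a large double cone is contained in the von Neumann algebra generated by the algebras of the small translates $\tilde{\mathcal{O}}+x$ that cover it. That inclusion is precisely weak additivity, $\bigvee_{x}\mathcal{R}\left(\tilde{\mathcal{O}}+x\right)=\pi\left(\mathfrak{A}\right)''$, which is an additional property of the net not listed among axioms 1 to 5, and "the translates cover $\mathbb{R}^{d}$ plus irreducibility" does not substitute for it. The gap is harmless for the case at hand --- weak additivity holds for the free scalar net of this paper, and more generally for any net generated by a Wightman field, where one runs the identical argument on vectors $\phi\left(f_{1}\right)\cdots\phi\left(f_{n}\right)\Omega$ --- but in the abstract axiomatic setting your final density claim needs this extra hypothesis to be stated explicitly.
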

In the following subsections, we concretely define the net of
algebras associated to a free Hermitian scalar field satisfying the
axioms listed above.

\subsection{Local algebras for spacetime regions\label{subsec:algebras-st}}

The algebraic theory of the real scalar field is defined as a net
of von Neumann algebras acting in the Fock Hilbert space $\mathcal{H}$.
This space is constructed as the (symmetric) tensor product of the
one-particle Hilbert space. To describe it properly, we introduce
the following three vector spaces.

\paragraph*{The space of test functions.}

The space of test functions is the Schwartz space $\mathcal{S}\left(\mathbb{R}^{d},\mathbb{R}\right)$
of real, smooth and exponentially decreasing functions at infinity.
This space carries naturally a representation of the restricted Poincaré
group $\mathcal{P}_{+}^{\uparrow}$ given by $f\mapsto f_{\left(\Lambda,a\right)}$,
with $f_{\left(\Lambda,a\right)}\left(x\right):=f\left(\Lambda\left(x-a\right)\right)$
for any $\left(\Lambda,a\right)\in\mathcal{P}_{+}^{\uparrow}$.

\paragraph*{The one-particle Hilbert space.}

The Hilbert space $\mathfrak{H}$ of one particle states of mass $m>0$
and zero spin is made up of the square-integrable functions on the
mass shell hyperboloid $H_{m}:=\left\{ p\in\mathbb{R}^{d}\,:\,p^{2}=m^{2}\:,\,p^{0}>0\right\} $
with the Poincaré invariant measure $d\mu(p):=\varTheta\left(p^{0}\right)\delta\left(p^{2}-m^{2}\right)d^{d}p$.
It can be realized as 
\begin{eqnarray}
\mathfrak{H}\!\!\!\! & = & \!\!\!\!L^{2}\left(\mathbb{R}^{d-1},\frac{d^{d-1}p}{2\omega\left(\bar{p}\right)}\right)\:\textrm{,}\\
\left\langle \boldsymbol{f},\boldsymbol{g}\right\rangle _{\mathfrak{H}}\!\!\!\! & = & \!\!\!\!\int_{\mathbb{R}^{d-1}}\frac{d^{d-1}p}{2\omega\left(\bar{p}\right)}\boldsymbol{f}\left(\bar{p}\right)^{*}\boldsymbol{g}\left(\bar{p}\right)\,,
\end{eqnarray}
where $p^{0}=\sqrt{\bar{p}^{2}+m^{2}}=:\omega\left(\bar{p}\right)$.
Such a space carries a unitary representation of $\mathcal{P}_{+}^{\uparrow}$
given by $\left(u\left(\Lambda,a\right)\boldsymbol{f}\right)\left(p\right)=\mathrm{e}^{ip\cdot a}\boldsymbol{f}\left(\Lambda^{-1}p\right)$
for any $\boldsymbol{f}\in\mathfrak{H}$ and $\left(\Lambda,a\right)\in\mathcal{P}_{+}^{\uparrow}$.

\paragraph*{The Fock Hilbert space.}

The Fock Hilbert space $\mathcal{H}$ is the direct sum of the symmetric
tensor powers of the one particle Hilbert space $\mathfrak{H}$, i.e.
\begin{equation}
\mathcal{H}=\bigoplus_{n=0}^{\infty}\mathfrak{H}^{\otimes n,sym}\,.
\end{equation}
For each $\boldsymbol{h}\in\mathfrak{H}$, the creation and annihilation
operators $A^{*}\left(\boldsymbol{h}\right)$ and $A\left(\boldsymbol{h}\right)$
act over $\mathcal{H}$ as usual. The Fock space naturally inherits
from $\mathfrak{H}$ a unitary representation of $\mathcal{P}_{+}^{\uparrow}$
which is denoted by $U\left(\Lambda,a\right)$. According to that
there is a unique (up to a phase) $\mathcal{P}_{+}^{\uparrow}$-invariant
vector denoted by $\Omega=\boldsymbol{1}\in\mathfrak{H}^{\otimes0}$,
which is called the \textit{vacuum vector.} For each $\boldsymbol{h}\in\mathfrak{H}$,
the normalized vector 
\[
\mathrm{e}^{\boldsymbol{h}}:=\mathrm{e}^{-\frac{1}{2}\left\Vert \boldsymbol{h}\right\Vert _{\mathfrak{H}}^{2}}\sum_{n=0}^{\infty}\frac{\boldsymbol{h}^{\otimes n}}{\sqrt{n!}}\in\mathcal{H}\,,
\]
is called \textit{coherent vector}, and it satisfies the relations
$\mathrm{e}^{\boldsymbol{0}}=\Omega$ and $\left\langle \Omega,\mathrm{e}^{\boldsymbol{h}}\right\rangle _{\mathcal{H}}=\mathrm{e}^{-\frac{1}{2}\left\Vert \boldsymbol{h}\right\Vert _{\mathfrak{H}}^{2}}$.

It is a very well-known fact that the structure of a free QFT is completely
determined by the underlying one-particle quantum theory. More concretely,
the assignment $\mathcal{O}\rightarrow\mathcal{R}\left(\mathcal{O}\right)$
is determined by the composition of two different maps 
\begin{eqnarray}
\mathcal{O}\subset\mathbb{R}^{d} & \longrightarrow & K\left(\mathcal{O}\right)\subset\mathfrak{H}\,,\\
K\subset\mathfrak{H} & \longrightarrow & \mathcal{R}\left(K\right)\subset\mathcal{B}\left(\mathcal{H}\right)\,,
\end{eqnarray}
which are called 1st and 2nd quantization maps respectively. We treat each map separately.

\subsubsection{First quantization map}

Given any open region $\mathcal{O}\subset\mathbb{R}^{d}$, we remember
that $\mathcal{O}'$ denotes its spacelike complement, and then we
define its \textit{causal completation} as $\mathcal{O}''$.\footnote{It is always true that $\mathcal{O}\subset\mathcal{O}''$.}
A region $\mathcal{O}\subset\mathbb{R}^{d}$ is called \textit{causally
complete} iff $\mathcal{O}\equiv\mathcal{O}''$. In particular, any
double cone is causally complete. From now on, we work with causally
complete regions.

For any closed linear subspace $K\subset\mathfrak{H}$ we define its
the \textit{symplectic complemen}t as
\begin{equation}
K':=\left\{ h\in\mathfrak{H}\,:\,\mathrm{Im}\left\langle h,k\right\rangle _{\mathfrak{H}}=0\,,\textrm{ for all }k\in K\right\} \,.
\end{equation}
 Now, we consider the following real dense embedding $E:\,\mathcal{S}\left(\mathbb{R}^{d},\mathbb{R}\right)\rightarrow\mathfrak{H}$
\begin{equation}
\left(Ef\right)\left(\bar{p}\right):=\sqrt{2\pi}\,\hat{f}\mid_{H_{m}}\left(\bar{p}\right)=\sqrt{2\pi}\,\hat{f}\left(\omega\left(\bar{p}\right),\bar{p}\right)\:\textrm{,}
\end{equation}
where $\hat{f}\left(p\right):=\left(2\pi\right)^{-\frac{d}{2}}\int_{\mathbb{R}^{d}}f\left(x\right)\mathrm{e}^{ip\cdot x}d^{d}x$
is the usual Fourier transform. Such embedding is Poincaré covariant,
i.e. $E\left(f_{\left(\Lambda,a\right)}\right)=u\left(\Lambda,a\right)E\left(f\right)$.
From now on, we naturally identified functions on $\mathcal{S}\left(\mathbb{R}^{d},\mathbb{R}\right)$
with vectors on $\mathfrak{H}$ through the above embedding.

The \textit{1st quantization map} is assignment $\mathcal{O}\subset\mathbb{R}^{d}\rightarrow K\left(\mathcal{O}\right)\subset\mathfrak{H}$,
where $K\left(\mathcal{O}\right)$ is a real closed linear subspace.
It is defined by 
\begin{eqnarray}
\mathcal{O}\subset\mathbb{R}^{d} & \longrightarrow & K\left(\mathcal{O}\right):=\overline{\left\{ E\left(f\right)\,:\,f\in\mathcal{S}\left(\mathbb{R}^{d},\mathbb{R}\right)\textrm{ and }supp\left(f\right)\subset\mathcal{O}\right\} }\subset\mathfrak{H}\,.
\end{eqnarray}
It is not difficult to see that this satisfies the duality $K\left(\mathcal{O}'\right)=K\left(\mathcal{O}\right)'$.

\subsubsection{Second quantization map}

We define the embedding $W:\mathfrak{H}\rightarrow\mathcal{B}\left(\mathcal{H}\right)$
\begin{equation}
W\left(\boldsymbol{h}\right):=\mathrm{e}^{i\left(A\left(\boldsymbol{h}\right)+A^{*}\left(\boldsymbol{h}\right)\right)}\:.\label{weyl_map}
\end{equation}
The operators $W\left(\boldsymbol{h}\right)$ are called \textit{Weyl
unitaries}. These operators satisfy the canonical commutation relations
(CCR) in Segal's form \cite{araki63} 
\begin{eqnarray}
W\left(\boldsymbol{h}_{1}\right)W\left(\boldsymbol{h}_{2}\right)\!\!\!\! & = & \!\!\!\!\mathrm{e}^{-i\,\mathrm{Im}\left\langle \boldsymbol{h}_{1},\boldsymbol{h}_{2}\right\rangle _{\mathfrak{H}}}W\left(\boldsymbol{h}_{1}+\boldsymbol{h}_{2}\right)\,,\\
W\left(\boldsymbol{h}\right)^{*}\!\!\!\! & = & \!\!\!\!W\left(-\boldsymbol{h}\right)\:.
\end{eqnarray}
A Poincaré unitary $U\left(\Lambda,a\right)$ acts covariantly on
a Weyl operator according to 
\begin{eqnarray}
U\left(\Lambda,a\right)W\left(\boldsymbol{h}\right)U\left(\Lambda,a\right)^{*}\!\!\!\! & = & \!\!\!\!W\left(u\left(\Lambda,a\right)\boldsymbol{h}\right)\,,\\
W\left(\boldsymbol{h}\right)\Omega\!\!\!\! & = & \!\!\!\!\mathrm{e}^{i\boldsymbol{h}}\:\textrm{.}
\end{eqnarray}

The \textit{2nd quantization map} is an assignment $K\subset\mathfrak{H}\rightarrow\mathcal{R}\left(K\right)\subset\mathcal{B}\left(\mathcal{H}\right)$
, from the set of real closed linear subspace of $\mathfrak{H}$ to
the set of von Neumann subalgebras of $\mathcal{B}\left(\mathcal{H}\right)$.
It is defined as 
\begin{eqnarray}
K\in\mathfrak{H} & \longrightarrow & \mathcal{R}\left(K\right):=\left\{ W\left(\boldsymbol{k}\right)\,:\,\boldsymbol{k}\in K\right\} ''\subset\mathcal{B}\left(\mathcal{H}\right)\:.
\end{eqnarray}
This map satisfies the duality $\mathcal{R}\left(K'\right)=\mathcal{R}\left(K\right)'$.

\subsubsection{Net of local algebras}

According to the above discussion, the net of local algebras $\mathcal{O}\subset\mathbb{R}^{d}\rightarrow\mathcal{R}\left(\mathcal{O}\right)\subset\mathcal{B}\left(\mathcal{H}\right)$
of the free Hermitian scalar field is defined as the composition of
the 1st and 2nd quantization maps, i.e. 
\begin{equation}
\mathcal{R}\left(\mathcal{O}\right):=\mathcal{R}\left(K\left(\mathcal{O}\right)\right)\:\textrm{.}\label{eq:net_spacetime}
\end{equation}
This net satisfies all the axioms listed above, including the Haag
duality. For $f\in\mathcal{S}\left(\mathbb{R}^{d},\mathbb{R}\right)$,
the field operator $\phi\left(f\right)$ is defined through the relation

\[
W\left(E\left(f\right)\right)=\mathrm{e}^{i\phi\left(f\right)}=:W\left(f\right)\,.
\]

\subsection{Local algebras at a fixed time\label{subsec:algebras-ft}}

In this subsection, we discuss the theory of the von Neumann algebras
for the real scalar free field at a fixed time $x^{0}=0$. Naively
speaking, they are the local algebras generated by the field operator
at a fixed time $\varphi\left(\bar{x}\right)$ and its canonical conjugate
momentum field $\pi\left(\bar{x}\right)$. This theory is very
useful for the computation of the relative entropy in section \ref{sec:Explicit-calculation}.

We can decompose $\mathfrak{H=\mathfrak{H}_{\varphi}\oplus_{\mathbb{R}}\mathfrak{H}_{\pi}}$
into two $\mathbb{R}$-linear closed subspaces 
\begin{eqnarray}
\mathfrak{H}_{\varphi}\!\!\!\! & := & \!\!\!\!\left\{ \boldsymbol{h}\in\mathfrak{H}\,:\,\boldsymbol{h}\left(\bar{p}\right)=\boldsymbol{h}\left(-\bar{p}\right)^{*}\:\textrm{a.e.}\right\} \,,\\
\mathfrak{H}_{\pi}\!\!\!\! & := & \!\!\!\!\left\{ \boldsymbol{h}\in\mathfrak{H}\,:\,\boldsymbol{h}\left(\bar{p}\right)=-\boldsymbol{h}\left(-\bar{p}\right)^{*}\:\textrm{a.e.}\right\} \,.
\end{eqnarray}
 Each $\boldsymbol{h}\in\mathfrak{H}$ can be uniquely written as
$\boldsymbol{h}=\boldsymbol{h}_{\varphi}+\boldsymbol{h}_{\pi}$, where
\begin{eqnarray}
\boldsymbol{h}_{\varphi}\left(\bar{p}\right)=\frac{\boldsymbol{h}\left(\bar{p}\right)+\boldsymbol{h}\left(-\bar{p}\right)^{*}}{2} & \textrm{ and } & \boldsymbol{h}_{\pi}\left(\bar{p}\right)=\frac{\boldsymbol{h}\left(\bar{p}\right)-\boldsymbol{h}\left(-\bar{p}\right)^{*}}{2}\,.
\end{eqnarray}
We also have the useful relations
\begin{equation}
\mathrm{Im}\left\langle \boldsymbol{h}_{\varphi},\boldsymbol{h}'_{\varphi}\right\rangle =\mathrm{Im}\left\langle \boldsymbol{h}_{\pi},\boldsymbol{h}'_{\pi}\right\rangle =\mathrm{Re}\left\langle \boldsymbol{h}_{\varphi},\boldsymbol{h}'_{\pi}\right\rangle =0\:\textrm{,}\label{eq:rel_prod_int}
\end{equation}
for all $\boldsymbol{h}_{\varphi},\,\boldsymbol{h}'_{\varphi}\in\mathfrak{H}_{\varphi}$
and $\boldsymbol{h}_{\pi},\,\boldsymbol{h}'_{\pi}\in\mathfrak{H}_{\pi}$. 

Now, we consider the following real dense embeddings $E_{\varphi,\pi}:\,\mathcal{S}\left(\mathbb{R}^{d-1},\mathbb{R}\right)\rightarrow\mathfrak{H}_{\varphi,\pi}$,
\begin{equation}
\left(E_{\varphi}f\right)\left(\bar{p}\right):=\hat{f}\left(\bar{p}\right)\quad\mathrm{and}\quad\left(E_{\pi}f\right)\left(\bar{p}\right):=i\omega\left(\bar{p}\right)\,\hat{f}\left(\bar{p}\right)\,,
\end{equation}
where $\hat{f}\left(\bar{p}\right):=\left(2\pi\right)^{-\frac{d-1}{2}}\int_{\mathbb{R}^{d-1}}f\left(\bar{x}\right)\mathrm{e}^{-i\bar{p}\cdot\bar{x}}d^{d-1}x$.
From now on, we naturally identify functions on $\mathcal{S}\left(\mathbb{R}^{d-1},\mathbb{R}\right)$
with vectors on $\mathfrak{H}_{\varphi},\mathfrak{H}_{\pi}$ through
these embeddings. The map $E_{\varphi}$ (respectively, $E_{\pi}$) is actually
defined on a bigger class of test functions, namely $H^{-\frac{1}{2}}\left(\mathbb{R}^{d-1},\mathbb{R}\right)$
(respectively, $H^{\frac{1}{2}}\left(\mathbb{R}^{d-1},\mathbb{R}\right)$),
i.e.
\begin{equation}
E_{\varphi}:\,H^{-\frac{1}{2}}\left(\mathbb{R}^{d-1},\mathbb{R}\right)\rightarrow\mathfrak{H}_{\varphi}\quad\mathrm{and}\quad E_{\pi}:\,H^{\frac{1}{2}}\left(\mathbb{R}^{d-1},\mathbb{R}\right)\rightarrow\mathfrak{H}_{\pi}\,,
\end{equation}
where $H^{\alpha}\left(\mathbb{R}^{d-1},\mathbb{R}\right)$ is the
real Sobolev space of order $\alpha$ (see Appendix \ref{subsec:appendix_Sobolev}).
We have that $E_{\varphi}\left(H^{\frac{1}{2}}\left(\mathbb{R}^{d-1},\mathbb{R}\right)\right)=\mathfrak{H}_{\varphi}$
and $E_{\pi}\left(H^{-\frac{1}{2}}\left(\mathbb{R}^{d-1},\mathbb{R}\right)\right)=\mathfrak{H}_{\pi}$.
For each $\boldsymbol{h}_{\varphi}\in\mathfrak{H}_{\varphi}$ and
$\boldsymbol{h}_{\pi}\in\mathfrak{H}_{\pi}$ and using the map \eqref{weyl_map},
we define the Weyl unitaries
\begin{equation}
W_{\varphi}\left(\boldsymbol{h}_{\varphi}\right):=W\left(\boldsymbol{h}_{\varphi}\right)\quad\textrm{and}\quad W_{\pi}\left(\boldsymbol{h}_{\pi}\right):=W\left(\boldsymbol{h}_{\pi}\right)\,,
\end{equation}
which satisfy the CCR in the Weyl form \cite{araki63}
\begin{eqnarray}
W_{\varphi}\left(\boldsymbol{h}_{\varphi}\right)W_{\pi}\left(\boldsymbol{h}_{\pi}\right)W_{\varphi}\left(\boldsymbol{h}'_{\varphi}\right)W_{\pi}\left(\boldsymbol{h}'_{\pi}\right)\!\!\!\! & = & \!\!\!\!W_{\varphi}\left(\boldsymbol{h}_{\varphi}+\boldsymbol{h}'_{\varphi}\right)W_{\pi}\left(\boldsymbol{h}_{\pi}+\boldsymbol{h}'_{\pi}\right)\mathrm{e}^{2i\,\mathrm{Im}\left\langle \boldsymbol{h}'_{\varphi},\boldsymbol{h}_{\pi}\right\rangle _{\mathfrak{H}}}\\
W_{\varphi}\left(\boldsymbol{h}_{\varphi}\right)^{*}\!\!\!\! & = & \!\!\!\!W_{\varphi}\left(-\boldsymbol{h}_{\varphi}\right)\\
W_{\pi}\left(\boldsymbol{h}_{\pi}\right)^{*}\!\!\!\! & = & \!\!\!\!W_{\pi}\left(-\boldsymbol{h}_{\pi}\right)\:\textrm{,}
\end{eqnarray}
The \textit{field operator at a fixed time} $\varphi\left(f_{\varphi}\right)$
and its \textit{canonical conjugate momentum field} $\pi\left(f_{\pi}\right)$
are defined through the formulas 
\begin{equation}
W_{\varphi}\left(E_{\varphi}\left(f_{\varphi}\right)\right)=\mathrm{e}^{i\varphi\left(f_{\varphi}\right)}=:W_{\varphi}\left(f_{\varphi}\right)\quad\mathrm{and}\quad W_{\pi}\left(E_{\pi}\left(f_{\pi}\right)\right)=:\mathrm{e}^{i\pi\left(f_{\pi}\right)}=:W_{\pi}\left(f_{\pi}\right)\,.
\end{equation}
Here again, the local algebras at a fixed time are also defined through
the 1st and 2nd quantization map.

\paragraph{First quantization map.}

We say that $\mathcal{C}\subset\mathbb{R}^{d-1}$ is a \textit{spatially
complete} region iff it is open, regular\footnote{An open set $U\subset\mathbb{R}^{n}$ is regular iff $U\equiv Int\left(\overline{U}\right)$.}
and with regular boundary.\footnote{The boundary $\partial\mathcal{C}\subset\mathbb{R}^{d-1}$ is a smooth
submanifold of dimension $d-2$, or several manifolds joined together
along smooth manifolds of dimension $d-3$ \cite{araki64}.} Here we work with this kind of regions. Given any such region $\mathcal{C}\subset\mathbb{R}^{d-1}$,
we define its (open) \textit{space complement} as $\mathcal{C}':=\mathbb{R}^{d-1}-\overline{\mathcal{C}}$
. 

Then the \textit{1st quantization map} is defined as
\begin{eqnarray}
\mathcal{C}\subset\mathbb{R}^{d-1} & \rightarrow & K_{\varphi}\left(\mathcal{C}\right):=\overline{\left\{ E_{\varphi}\left(f\right)\,:\,f\in\mathcal{S}\left(\mathbb{R}^{d-1},\mathbb{R}\right)\textrm{ and }supp\left(f\right)\subset\mathcal{C}\right\} }\subset\mathfrak{H}_{\varphi}\,,\\
\mathcal{C}\subset\mathbb{R}^{d-1} & \rightarrow & K_{\pi}\left(\mathcal{C}\right):=\overline{\left\{ E_{\pi}\left(f\right)\,:\,f\in\mathcal{S}_{\mathbb{}}\left(\mathbb{R}^{d-1},\mathbb{R}\right)\textrm{ and }supp\left(f\right)\subset\mathcal{C}\right\} }\:\subset\mathfrak{H}_{\pi}\textrm{.}
\end{eqnarray}
It can be shown that
\begin{eqnarray}
K_{\varphi}\left(\mathcal{C}\right)\!\!\!\! & = & \!\!\!\!\left\{ E_{\varphi}\left(f\right)\,:\,f\in H^{-\frac{1}{2}}\left(\mathbb{R}^{d-1},\mathbb{R}\right)\textrm{ and }supp\left(f\right)\subset\mathcal{\overline{C}}\:\textrm{a.e.}\right\} \,,\\
K_{\pi}\left(\mathcal{C}\right)\!\!\!\! & = & \!\!\!\!\left\{ E_{\pi}\left(f\right)\,:\,f\in H^{\frac{1}{2}}\left(\mathbb{R}^{d-1},\mathbb{R}\right)\textrm{ and }supp\left(f\right)\subset\mathcal{\mathcal{\overline{C}}}\:\textrm{a.e.}\right\} \:\textrm{.}
\end{eqnarray}

\paragraph{Second quantization map.}

For each pair $K_{\varphi}\subset\mathfrak{H}_{\varphi}$ and $K_{\pi}\subset\mathfrak{H}_{\pi}$
of $\mathbb{R}$-linear closed subspaces, we define the von Neumann
algebra
\begin{eqnarray}
\left(K_{\varphi},K_{\pi}\right) & \rightarrow & \mathcal{R}_{0}\left(K_{\varphi},K_{\pi}\right):=\left\{ W_{\varphi}\left(k_{\varphi}\right)W_{\pi}\left(k_{\pi}\right)\,:\,k_{\varphi}\in K_{\varphi},\,k_{\pi}\in K_{\pi}\right\} ''\subset\mathcal{B}\left(\mathcal{H}\right)\:\textrm{.}
\end{eqnarray}

\paragraph{Net of local algebras at a fixed time. }

The net of local algebras $\mathcal{C}\subset\mathbb{R}^{d-1}\rightarrow\mathcal{R}_{0}\left(\mathcal{C}\right)\subset\mathcal{B}\left(\mathcal{H}\right)$
of the free Hermitian scalar field at a fixed time is then defined
as the composition of the 1st and 2nd quantization maps, i.e.
\begin{equation}
\mathcal{R}_{0}\left(\mathcal{C}\right):=\mathcal{R}_{0}\left(K_{\varphi}\left(\mathcal{C}\right),K_{\pi}\left(\mathcal{C}\right)\right)\:\textrm{.}\label{eq:net_fixed_time}
\end{equation}
The above net satisfies the following expected properties \cite{araki64}:
\begin{align}
 & \mathcal{R}_{0}\left(\mathcal{C}_{1}\right)\subset\mathcal{R}_{0}\left(\mathcal{C}_{2}\right)\quad\textrm{if }\mathcal{C}_{1}\subset\mathcal{C}_{2}\,,\label{eq:axioms_ft_1st}\\
 & \mathcal{R}_{0}\left(\mathcal{C}_{1}\cup\mathcal{C}_{2}\right)=\mathcal{R}_{0}\left(\mathcal{C}_{1}\right)\lor\mathcal{R}_{0}\left(\mathcal{C}_{2}\right)\,,\\
 & \mathcal{R}_{0}\left(\mathcal{C}'\right)=\mathcal{R}_{0}\left(\mathcal{C}\right)'\,,\\
 & \mathcal{R}_{0}\left(\mathbb{R}^{d-1}\right)=\mathcal{B}\left(\mathcal{H}\right)\,,\label{eq:axioms_ft_last}
\end{align}
where the $\mathcal{R}_{1}\lor\mathcal{R}_{2}$ is the von Neumann
algebra generated by $\mathcal{R}_{1}\cup\mathcal{R}_{2}$ .

\subsection{Relation between the two approaches\label{subsec:Relation-between-algebras}}

In this subsection, we explain the relation existing between the two
approaches of sections \ref{subsec:algebras-st} and \ref{subsec:algebras-ft}.

\paragraph{The relation between nets.}

Given any spatially complete region $\mathcal{C}\subset\mathbb{R}^{d-1}$,
we define its \textit{domain of dependence} $\mathcal{O_{C}}\subset\mathbb{R}^{d}$
as 
\begin{equation}
\mathcal{O_{C}}:=\left\{ x\in\mathbb{R}^{d}\::\:\left(x-\left(0,\bar{y}\right)\right)^{2}<0\textrm{ for all }\bar{y}\in\mathcal{C}'\right\} \,.\label{eq:double_light_cone}
\end{equation}
Then the following relation holds \cite{Camassa},
\begin{eqnarray}
K\left(\mathcal{O_{C}}\right) & = & K_{\varphi}\left(\mathcal{C}\right)\oplus_{\mathbb{R}}K_{\pi}\left(\mathcal{C}\right)\,,
\end{eqnarray}
and hence we have the equality between the von Neumann algebras
\begin{equation}
\mathcal{R}_{0}\left(\mathcal{C}\right)=\mathcal{R}\left(\mathcal{O_{C}}\right)\,.\label{eq:rel_algebras}
\end{equation}
The relations developed along the above subsections can be summarized
in the following schematic diagram
\begin{equation}
\begin{array}{ccccc}
\mathcal{O}_{\mathcal{C}}\subset\mathbb{R}^{d} & \overset{E}{\longrightarrow} & K\subset\mathfrak{H} & \overset{W}{\longrightarrow} & \mathcal{R}\subset\mathcal{B}\left(\mathcal{H}\right)\\
\uparrow &  & \uparrow\oplus_{\mathbb{R}} &  & \shortparallel\qquad\qquad\\
\mathcal{C}\subset\mathbb{R}^{d-1} & \overset{E_{\varphi,\pi}}{\longrightarrow} & \left(K_{\varphi},K_{\pi}\right)\subset\mathfrak{H}_{\varphi}\oplus_{\mathbb{R}}\mathfrak{H}_{\pi} & \overset{W_{\varphi,\pi}}{\longrightarrow} & \;\mathcal{R}_{0}\subset\mathcal{B}\left(\mathcal{H}\right)\,.
\end{array}
\end{equation}

\paragraph{The relation between test functions.}

Given $f\in\mathcal{S}\left(\mathbb{R}^{d},\mathbb{R}\right)$ we
can define
\begin{equation}
F\left(x\right):=\int_{\mathbb{R}^{d}}\Delta\left(x-y\right)\,f\left(y\right)d^{d}y\,,\label{eq:convolucion}
\end{equation}
where $\Delta\left(x\right):=-i\left(2\pi\right)^{-\left(d-1\right)}\int_{\mathbb{R}^{d}}\mathrm{e}^{-ip\cdot x}\delta\left(p^{2}-m^{2}\right)sgn\left(p^{0}\right)\,d^{d}p$.
Indeed $\left(\boxempty+m^{2}\right)F=0$ and we can take its initial
Cauchy data at $x^{0}=0$ through
\begin{equation}
f_{\varphi}\left(\bar{x}\right)=-\frac{\partial F}{\partial x^{0}}\left(0,\bar{x}\right)\;\textrm{ and }\;f_{\pi}\left(\bar{x}\right)=F\left(0,\bar{x}\right)\:\textrm{.}\label{eq:f_fixed_time}
\end{equation}
Finally, it can be shown $f_{\varphi},\,f_{\pi}\in\mathcal{S}\left(\mathbb{R}^{d-1},\mathbb{R}\right)$
and
\begin{equation}
E\left(f\right)=E_{\varphi}\left(f_{\varphi}\right)+E_{\pi}\left(f_{\pi}\right)\:.\label{eq:segal_vs_weyl_1p}
\end{equation}
Moreover, since $F\left(x\right)=0$ if $x\in supp\left(f\right)'$,
then we have $supp\left(f\right)\subset\mathcal{O}_{\mathcal{C}}\Rightarrow supp\left(f_{\varphi}\right),\,supp\left(f_{\pi}\right)\subset\mathcal{C}\:\textrm{.}$

\paragraph{The relation between Weyl unitaries.}

For the particular case of Weyl unitaries, it follows that
\begin{equation}
W\left(f\right)=\mathrm{e}^{i\mathrm{Im}\left\langle f_{\varphi},f_{\pi}\right\rangle _{\mathfrak{H}}}W_{\varphi}\left(f_{\varphi}\right)W_{\pi}\left(f_{\pi}\right)\,,\label{eq:segal_vs_weyl}
\end{equation}
where 
\begin{equation}
\mathrm{Im}\left\langle f_{\varphi},f_{\pi}\right\rangle _{\mathfrak{H}}=\frac{1}{2}\int_{\mathbb{R}^{d-1}}f_{\varphi}\left(\bar{x}\right)\,f_{\pi}\left(\bar{x}\right)d^{d-1}x\,.\label{eq:img_pe}
\end{equation}

\section{Modular theory\label{sec:Modular-theory}}

In this section, we discuss the key points of the modular theory in
the framework of von Neumann (vN) algebras. The main purpose of this
section is to introduce the Araki formula for the relative entropy.
For more details about the content of this section, see for example
\cite{haag,takesaki,Brattelli,bisognano,Araki_entropy}.

\subsection{Modular Hamiltonian and modular flow}
\begin{lem}
\textup{Let $\mathcal{R}\subset\mathcal{B}\left(\mathcal{H}\right)$
be a vN algebra and $\Omega\in\mathcal{H}$ be a cyclic and separating
vector. Then there exists a unique closed antilinear (generally unbounded)
operator $S_{\Omega}$ such that
\begin{equation}
S_{\Omega}\,A\Omega=A^{*}\Omega\,,\quad\forall A\in\mathcal{R}\,.\label{eq:mod_op_def}
\end{equation}
The operator $S_{\Omega}$ is called the }modular involution\textup{
associated to the pair }$\left\{ \mathcal{R},\Omega\right\} $.
\end{lem}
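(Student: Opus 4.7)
The plan is to define $S_\Omega$ as the closure of the natural antilinear map on $\mathcal{R}\Omega$, with the main technical point being to establish closability by exhibiting an antilinear adjoint with dense domain via the commutant $\mathcal{R}'$.

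First I would introduce the candidate operator $S^{0}_{\Omega}$ on the dense domain $\mathcal{D}_{0}:=\mathcal{R}\Omega$ by the prescription $S^{0}_{\Omega}(A\Omega):=A^{*}\Omega$. The only issue is well-definedness: if $A\Omega=B\Omega$ with $A,B\in\mathcal{R}$, then $(A-B)\Omega=0$ and the separating property of $\Omega$ forces $A=B$, so $A^{*}\Omega=B^{*}\Omega$. Antilinearity of $S^{0}_{\Omega}$ is then immediate from the antilinearity of the $*$-operation, and $\mathcal{D}_{0}$ is dense since $\Omega$ is cyclic for $\mathcal{R}$.

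The crucial step is closability. I would use the standard duality that $\Omega$ is separating for $\mathcal{R}$ if and only if $\Omega$ is cyclic for $\mathcal{R}'$: if $\overline{\mathcal{R}'\Omega}\neq\mathcal{H}$, the orthogonal projection onto this subspace is $\mathcal{R}'$-invariant, hence lies in $\mathcal{R}''=\mathcal{R}$, and is a nontrivial element of $\mathcal{R}$ that annihilates a nonzero vector while fixing $\Omega$, contradicting the separating property. By symmetry $\Omega$ is also cyclic and separating for $\mathcal{R}'$, so one may define in complete analogy an antilinear map $F^{0}_{\Omega}$ on the dense domain $\mathcal{R}'\Omega$ by $F^{0}_{\Omega}(A'\Omega):=(A')^{*}\Omega$. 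Using $AA'=A'A$ for $A\in\mathcal{R}$ and $A'\in\mathcal{R}'$, one computes
\[
\langle S^{0}_{\Omega}(A\Omega),\,A'\Omega\rangle \;=\; \langle A^{*}\Omega,A'\Omega\rangle \;=\; \langle\Omega,A'A\Omega\rangle \;=\; \langle (A')^{*}\Omega,A\Omega\rangle \;=\; \langle F^{0}_{\Omega}(A'\Omega),\,A\Omega\rangle,
\]
which is precisely the defining identity for the antilinear adjoint. Hence $F^{0}_{\Omega}\subseteq (S^{0}_{\Omega})^{*}$, so $(S^{0}_{\Omega})^{*}$ is densely defined and $S^{0}_{\Omega}$ is closable.

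Finally I would set $S_{\Omega}:=\overline{S^{0}_{\Omega}}$; by construction this is a closed antilinear operator satisfying \eqref{eq:mod_op_def}. Uniqueness is to be read in the standard sense: $\overline{S^{0}_{\Omega}}$ is the minimal closed extension of the prescription $A\Omega \mapsto A^{*}\Omega$, and any closed antilinear operator agreeing with $S^{0}_{\Omega}$ on $\mathcal{R}\Omega$ must contain it, so the lemma's canonical choice is precisely this closure. The main obstacle is the closability step; it hinges on the commutant trick and uses both the cyclic and the separating character of $\Omega$ in an essential way, since without cyclicity of $\Omega$ for $\mathcal{R}'$ one would have no dense supply of adjoint-compatible vectors.
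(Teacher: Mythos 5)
The paper does not actually prove this lemma: it is quoted as standard Tomita--Takesaki material and referred to the textbooks cited at the start of Section 4. Your construction is precisely the standard one from those references --- define $S^{0}_{\Omega}$ on $\mathcal{R}\Omega$ (well-defined by the separating property, densely defined by cyclicity), exhibit the companion map $F^{0}_{\Omega}$ on $\mathcal{R}'\Omega$ as a densely defined restriction of the antilinear adjoint to conclude closability, and take $S_{\Omega}$ to be the closure --- so it is the right argument and the key computation $\langle A^{*}\Omega,A'\Omega\rangle=\langle (A')^{*}\Omega,A\Omega\rangle$ is carried out correctly.

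One small imprecision worth fixing: in your proof that $\Omega$ is cyclic for $\mathcal{R}'$, the projection $P$ onto $\overline{\mathcal{R}'\Omega}$ indeed lies in $(\mathcal{R}')'=\mathcal{R}$, but the contradiction with the separating property comes from $\mathbf{1}-P$, not from $P$: one has $(\mathbf{1}-P)\Omega=0$ with $\mathbf{1}-P\in\mathcal{R}$ nonzero if the subspace were proper, which is what the separating property forbids. As you state it ($P$ ``annihilates a nonzero vector while fixing $\Omega$'') the conclusion does not follow from the definition of separating. Also, for the literal uniqueness claim, your reading is the correct one --- any closed extension of $S^{0}_{\Omega}$ satisfies the displayed relation, so ``unique'' must mean the minimal closed extension, i.e.\ the closure, which is how the lemma is universally understood.
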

Let $S_{\Omega}=J_{\Omega}\Delta_{\Omega}^{\frac{1}{2}}$ be the polar
decomposition of $S_{\Omega}$. Then, $\Delta_{\Omega}$ (positive
self-adjoint and generally unbounded) is called the \textit{modular
operator} and $J_{\Omega}$ (antiunitary) is called the \textit{modular
conjugation}. Finally, the \textit{modular Hamiltonian} is defined
as 
\begin{equation}
K_{\Omega}:=-\log\left(\Delta_{\Omega}\right)\,,
\end{equation}
and the 1-parameter (strongly continuous) group of unitaries $\Delta_{\Omega}^{it}$
is called the \textit{modular flow}.
\begin{thm}
(Tomita-Takesaki) \textup{Let $\mathcal{R}\subset\mathcal{B}\left(\mathcal{H}\right)$
be a vN algebra, $\Omega\in\mathcal{H}$ be a cyclic and separating vector
and $S_{\Omega}=J_{\Omega}\varDelta_{\Omega}^{\frac{1}{2}}$ be the operator
defined above. The one-parameter (strongly continuous) group of unitaries
$\Delta_{\Omega}^{it}$ is called modular group or modular flow. The
Tomita-Takesaki theorem states that
\begin{equation}
J_{\Omega}\,\mathcal{R}\,J_{\Omega}=\mathcal{R}'
\end{equation}
\begin{equation}
\Delta_{\Omega}^{it}\,\mathcal{R}\,\Delta_{\Omega}^{-it}=\mathcal{R}\quad\mathrm{and}\quad\Delta_{\Omega}^{it}\,\mathcal{R}'\,\Delta_{\Omega}^{-it}=\mathcal{R}'\:\textrm{,}
\end{equation}
for all $t\in\mathbb{R}$. }
\end{thm}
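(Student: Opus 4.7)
The plan has three parts: extract structural identities from the polar decomposition; prove modular invariance $\sigma_t(\mathcal{R}) := \Delta_\Omega^{it}\mathcal{R}\Delta_\Omega^{-it} = \mathcal{R}$ via analytic continuation in a strip; and identify $J_\Omega \mathcal{R} J_\Omega$ via entire modular elements.

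First, using that $\Omega$ is cyclic for $\mathcal{R}$ iff separating for $\mathcal{R}'$ (and vice versa), I define the analogous closable antilinear operator $F_\Omega\colon B\Omega \mapsto B^{*}\Omega$ for $B \in \mathcal{R}'$ on the dense domain $\mathcal{R}'\Omega$. A direct check on dense domains identifies $F_\Omega = S_\Omega^{*}$, so $S_\Omega$ is closable and $\Delta_\Omega = F_\Omega S_\Omega$, $\Delta_\Omega^{-1} = S_\Omega F_\Omega$. Uniqueness of the polar decomposition applied to the involutivity $S_\Omega^{-1} = S_\Omega$ on $\mathcal{R}\Omega$ then yields
\begin{equation}
J_\Omega^{2} = 1, \qquad J_\Omega^{*} = J_\Omega, \qquad J_\Omega \Delta_\Omega J_\Omega = \Delta_\Omega^{-1},
\end{equation}
and so $[J_\Omega, \Delta_\Omega^{it}] = 0$ and $J_\Omega \Delta_\Omega^{1/2} J_\Omega = \Delta_\Omega^{-1/2}$ by functional calculus; in particular $J_\Omega \Omega = \Omega$.

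Second, for the modular invariance of $\mathcal{R}$, I apply Takesaki's analytic extension technique. For $A \in \mathcal{R}$ and $B \in \mathcal{R}'$, one constructs a function $f$ on the closed strip $\{0 \le \mathrm{Re}(z) \le 1/2\}$, built from matrix elements of $\Delta_\Omega^z$ between vectors in the cores $\mathcal{R}\Omega \subset \mathrm{dom}(S_\Omega)$ and $\mathcal{R}'\Omega \subset \mathrm{dom}(F_\Omega)$, and verifies that $f$ is holomorphic in the open strip and uniformly bounded on its closure. The construction is arranged so that the boundary value at $\mathrm{Re}(z) = 0$ encodes matrix elements of the commutator $[\sigma_t(A), B]$, while at $\mathrm{Re}(z) = 1/2$ the identities $\Delta_\Omega^{1/2} D\Omega = J_\Omega D^{*}\Omega$ (for $D \in \mathcal{R}$), the analogous relation for $F_\Omega$, and $[A, B] = 0$ force the boundary value to vanish. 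A Hadamard three-line / Phragm\'en--Lindel\"of argument then forces $f \equiv 0$, yielding $[\sigma_t(A), B] = 0$ for all $t \in \mathbb{R}$; since $B \in \mathcal{R}'$ is arbitrary and $\mathcal{R}'' = \mathcal{R}$, one gets $\sigma_t(A) \in \mathcal{R}$, and equality follows by symmetry $t \mapsto -t$.

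Third, for $J_\Omega \mathcal{R} J_\Omega = \mathcal{R}'$, I pass to entire analytic elements: for $A \in \mathcal{R}$, the smoothed operator $A_n := \sqrt{n/\pi}\int_{\mathbb{R}} e^{-nt^{2}}\,\sigma_t(A)\,dt$ lies in $\mathcal{R}$ (by step two), converges strongly to $A$, and admits an entire extension $z \mapsto \sigma_z(A_n) := \Delta_\Omega^{iz} A_n \Delta_\Omega^{-iz}$. On such analytic elements, $S_\Omega A_n \Omega = A_n^{*} \Omega$ together with the polar decomposition yields $J_\Omega A_n J_\Omega = \sigma_{-i/2}(A_n^{*})$, which commutes with every element of $\mathcal{R}$ by step two applied at complex argument; taking strong limits then gives $J_\Omega \mathcal{R} J_\Omega \subset \mathcal{R}'$. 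The reverse inclusion comes from the same construction applied to $\mathcal{R}'$ with $F_\Omega$ in place of $S_\Omega$. Finally, $\sigma_t(\mathcal{R}') = \mathcal{R}'$ is immediate from conjugating $\sigma_t(\mathcal{R}) = \mathcal{R}$ by $J_\Omega$ and using $[J_\Omega, \Delta_\Omega^{it}] = 0$. The main obstacle is step two: showing that $f$ is bounded and holomorphic on the strip is delicate because $\Delta_\Omega^{z}$ is unbounded off the imaginary axis, which makes the domain analysis and the Phragm\'en--Lindel\"of bound the technical heart of the proof.
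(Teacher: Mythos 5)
The paper itself offers no proof of this statement: it is the classical Tomita--Takesaki theorem, quoted as background material with a pointer to the literature, so there is no in-paper argument to compare yours against. Taken on its own terms, your outline follows the standard Takesaki/Bratteli--Robinson architecture (structural identities from the polar decomposition, modular invariance via a strip argument, then $J_{\Omega}\mathcal{R}J_{\Omega}\subset\mathcal{R}'$ via entire analytic elements), and the consequences you extract in step one --- $J_{\Omega}^{2}=1$, $J_{\Omega}\Delta_{\Omega}J_{\Omega}=\Delta_{\Omega}^{-1}$, $[J_{\Omega},\Delta_{\Omega}^{it}]=0$ --- do follow correctly from $S_{\Omega}=S_{\Omega}^{-1}$ together with uniqueness of the polar decomposition.

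Two points are genuine gaps rather than omitted routine detail. First, $\overline{F_{\Omega}}=S_{\Omega}^{*}$ is not ``a direct check'': the direct computation only yields $F_{\Omega}\subset S_{\Omega}^{*}$, and the reverse inclusion $D(S_{\Omega}^{*})\subset D(\overline{F_{\Omega}})$ requires approximating a general vector of $D(S_{\Omega}^{*})$ by vectors $B\Omega$ with $B\in\mathcal{R}'$, which needs a nontrivial regularization (via $(1+\Delta_{\Omega})^{-1}$ or the $2\times2$ matrix trick); you later rely on this equality when you use $\Delta_{\Omega}^{-1/2}B\Omega=J_{\Omega}B^{*}\Omega$ for $B\in\mathcal{R}'$. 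Second, and more seriously, step two says ``one constructs a function $f$'' with the stated boundary behaviour, but the existence of such a bounded holomorphic function is precisely the content of the theorem: for a general von Neumann algebra there is no a priori relation between $\Delta_{\Omega}^{z}$ for $z$ off the imaginary axis and the algebra, and every standard proof manufactures one through a key resolvent lemma (roughly, that for $\lambda>0$ and $A'\in\mathcal{R}'$ the vector $(\Delta_{\Omega}+\lambda)^{-1}A'\Omega$ is of the form $A\Omega$ with $A\in\mathcal{R}$ and $\left\Vert A\right\Vert$ controlled by $\left\Vert A'\right\Vert/\sqrt{\lambda}$), from which the strip function and its boundary values are assembled. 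Without that lemma the Phragm\'en--Lindel\"of step has nothing to act on, so what you have is a correct road map of the textbook proof rather than a proof.
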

\begin{rem}
In general, the modular flow $\Delta_{\Omega}^{it}$ does not belong
to $\mathcal{R}$ or $\mathcal{R}'$.
\end{rem}
Before we state the Bisognano-Whichmann theorem, we need to introduce
some conventions. Let $\mathcal{W}:=\left\{ x\in\mathbb{R}^{d}\,:\,x^{1}>\left|x^{0}\right|\right\} $
be the \textit{right Rindler wedge} and $\Sigma:=\left\{ \bar{x}\in\mathbb{R}^{d-1}\,:\,x^{1}\geq0\right\} $.
Then by \eqref{eq:double_light_cone} we have that $\mathcal{O}_{\Sigma}=\mathcal{W}$.
From now on, we will denote the orthogonal coordinates to the Rindler
wedge as $\bar{x}_{\bot}:=\left(x^{2},\ldots,x^{d-1}\right)$ and
hence any spacetime point can be expressed as $x=\left(x^{0},x^{1},\bar{x}_{\bot}\right)$.
We also denote the following vN algebras simply as
\begin{align}
\mathcal{R}_{\mathcal{W}} & :=\mathcal{R}\left(\mathcal{W}\right)=\mathcal{R}_{0}\left(\Sigma\right)\,,\\
\mathcal{R}_{\mathcal{W}'} & :=\mathcal{R}\left(\mathcal{W}'\right)=\mathcal{R}_{0}\left(\Sigma'\right)\,,
\end{align}
From relations (\ref{eq:axioms_ft_1st}-\ref{eq:axioms_ft_last}) we have
\begin{equation}
\mathcal{R}'_{\mathcal{W}}=\mathcal{R}{}_{\mathcal{W}'}\quad\mathrm{and}\quad\mathcal{R}{}_{\mathcal{W}}\vee\mathcal{R}_{\mathcal{W}'}=\mathcal{B}\left(\mathcal{H}\right)\,\textrm{.}
\end{equation}
Reeh-Schlieder theorem \ref{par:Theorem_rs} asserts that the vacuum
vector $\Omega$ is cyclic and separating for $\mathcal{R}{}_{\mathcal{W}}$.
\begin{thm}
(Bisognano-Wichmann \cite{bisognano}) \textup{The modular operator
$\Delta_{\Omega}$ and the modular conjugation $J_{\Omega}$ for the
pair $\left\{ \mathcal{R}_{\mathcal{W}},\Omega\right\} $ are
\begin{equation}
J_{\Omega}=\Theta\,U\left(R_{1}\left(\pi\right)\right)\quad\textrm{and}\quad\Delta_{\Omega}=\mathrm{e}^{-2\pi K_{1}}\:\textrm{,}
\end{equation}
where $\Theta$ is the CPT operator, $U\left(R_{1}\left(\pi\right)\right)$
is the Lorentz unitary operator representing a space rotation of angle
$\pi$ along the $x^{1}$ axes and $K_{1}$ is the infinitesimal generator
of the one-parameter group of boost symmetries in the plane $\left(x^{0},x^{1}\right)$,
i.e.
\begin{equation}
U\left(\Lambda_{1}^{s},0\right)=\mathrm{e}^{iK_{1}s}\,,\quad\textrm{with }\Lambda_{1}^{s}:=\left(\begin{array}{ccc}
\cosh\left(s\right) & \sinh\left(s\right) & \boldsymbol{0}\\
\sinh\left(s\right) & \cosh\left(s\right) & \boldsymbol{0}\\
\boldsymbol{0} & \boldsymbol{0} & \boldsymbol{1}
\end{array}\right)\,.\label{eq:boost}
\end{equation}
}
\end{thm}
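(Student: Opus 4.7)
My plan is to exploit uniqueness of the modular data: by the previous lemma, the modular involution $S_\Omega$ of the pair $\{\mathcal{R}_\mathcal{W},\Omega\}$ is the unique closed antilinear operator with $S_\Omega A\Omega = A^*\Omega$ for all $A\in\mathcal{R}_\mathcal{W}$. So it suffices to introduce the candidate
\[
\tilde{J}:=\Theta\,U(R_1(\pi))\,,\qquad \tilde{\Delta}:=\mathrm{e}^{-2\pi K_1}\,,\qquad \tilde{S}:=\tilde{J}\,\tilde{\Delta}^{1/2}\,,
\]
check that $\tilde{J}\,\tilde{\Delta}^{1/2}$ is a legitimate polar decomposition (i.e.\ $\tilde{J}$ antiunitary, $\tilde{\Delta}$ positive self-adjoint, both fixing $\Omega$ by Poincaré invariance of the vacuum and by the antiunitary PCT invariance), and then verify the identity $\tilde{S}\,A\Omega=A^*\Omega$ on a strongly dense subset of $\mathcal{R}_\mathcal{W}$.

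Before attacking that identity, I would first record the two structural checks required by Tomita--Takesaki. Boost covariance gives $\tilde\Delta^{it}\mathcal{R}_\mathcal{W}\tilde\Delta^{-it}=\mathcal{R}_{\Lambda_1^{-2\pi t}\mathcal{W}}=\mathcal{R}_\mathcal{W}$, because $\Lambda_1^s$ preserves the right wedge for every real $s$. The operator $\Theta\,U(R_1(\pi))$ acts on spacetime as $(x^0,x^1,\bar x_\perp)\mapsto(-x^0,-x^1,\bar x_\perp)$, which carries $\mathcal{W}$ onto $\mathcal{W}'$; combined with Poincaré covariance and antilinearity of $\Theta$, this gives $\tilde J\,\mathcal{R}_\mathcal{W}\,\tilde J=\mathcal{R}_{\mathcal{W}'}=\mathcal{R}_\mathcal{W}'$, as required.

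The decisive step is the verification of $\tilde S\,W(f)\Omega=W(-f)\Omega$ for every $f\in\mathcal{S}(\mathbb{R}^d,\mathbb{R})$ with $\mathrm{supp}(f)\subset\mathcal{W}$, since these Weyl unitaries generate $\mathcal{R}_\mathcal{W}$. My approach is to reduce everything to the one-particle Hilbert space $\mathfrak{H}$: the Fock-level Weyl operators are second quantizations of the embedding $E$, and both $\tilde\Delta^{it}=\Gamma(u(\Lambda_1^{-2\pi t}))$ and $\tilde J=\Gamma(\theta\,u(R_1(\pi)))$ are second quantizations of Poincaré transformations on $\mathfrak{H}$, with $\theta$ the one-particle PCT. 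Consequently the sought identity boils down, via $W(f)\Omega=\mathrm{e}^{iE(f)}$ and the multiplicative law of Weyl operators, to the one-particle statement
\[
\theta\,u(R_1(\pi))\,\mathrm{e}^{-\pi k_1}E(f)=-E(f),\qquad \mathrm{supp}(f)\subset\mathcal{W},
\]
where $k_1$ is the one-particle boost generator.

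To establish this, I would consider the map $s\mapsto u(\Lambda_1^s)E(f)=E(f\circ\Lambda_1^{-s})$ for real $s$, and analytically continue it to the strip $-\pi\le\mathrm{Im}(s)\le 0$. Analyticity and strong continuity on the closed strip follow from the spectrum condition: on the mass shell, $u(\Lambda_1^s)$ acts as multiplication by $\mathrm{e}^{-s(p^1\partial_{p^0}-p^0\partial_{p^1})}$, and support of $f$ inside $\mathcal{W}$ together with positivity of $p^0$ guarantees the needed holomorphy (this is precisely the one-particle avatar of the Glaser--Jost--Schroer argument). At the boundary $s=-i\pi$ one uses the computation $\Lambda_1^{-i\pi}=\mathrm{diag}(-1,-1,1,\dots,1)=\Theta\,R_1(\pi)$ on Minkowski space, so that $\mathrm{e}^{-\pi k_1}E(f)=E(f\circ\Theta R_1(\pi))$ formally; combining with the antilinear action of $\theta\,u(R_1(\pi))$ and using reality of $f$ produces the desired $-E(f)$.

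The main obstacle is the rigorous justification of this analytic continuation, i.e.\ showing that $s\mapsto E(f\circ\Lambda_1^{-s})$ extends to a strongly continuous $\mathfrak{H}$-valued function on the closed strip, analytic in the interior, whose boundary value at $s=-i\pi$ coincides with $E(f\circ\Theta R_1(\pi))$. Once this analytic continuation is under control, the remainder amounts to bookkeeping: lifting the one-particle identity to Fock space via the second quantization functor, extending from Weyl generators to all of $\mathcal{R}_\mathcal{W}$ by closability of $\tilde S$, and concluding by uniqueness of the polar decomposition that $\tilde J=J_\Omega$ and $\tilde\Delta=\Delta_\Omega$.
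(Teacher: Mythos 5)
The paper does not prove this theorem: it is quoted verbatim from Bisognano--Wichmann \cite{bisognano} as an external input, so there is no internal proof to compare your attempt against. Your outline is nevertheless the standard route for establishing the result in the free-field setting (reduce to the one-particle space via second quantization, then prove a KMS-type analyticity statement for $s\mapsto u(\Lambda_1^s)E(f)$ on the strip), and the structural checks you list --- boost invariance of $\mathcal{W}$, the geometric action of $\Theta\,U(R_1(\pi))$, reduction to Weyl generators --- are all correct ingredients.

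That said, as a proof it has two genuine gaps. First, the step you yourself flag as ``the main obstacle'' is in fact the entire mathematical content of the theorem: one must show that for $\mathrm{supp}(f)\subset\mathcal{W}$ the vector $E(f)$ lies in the domain of $\mathrm{e}^{-\pi k_1}$ and that the boundary value of the continuation at $s=-i\pi$ is $E(f\circ\Theta R_1(\pi))$; asserting that ``the spectrum condition guarantees the needed holomorphy'' is a statement of the Glaser--Jost--Schroer mechanism, not a proof of it, and the domain/boundary-value analysis is delicate (it is where the support condition $\mathrm{supp}(f)\subset\mathcal{W}$ actually enters). Second, verifying $\tilde S\,A\Omega=A^*\Omega$ for all $A\in\mathcal{R}_\mathcal{W}$ only yields the inclusion $S_\Omega\subseteq\tilde S$, since $S_\Omega$ is by definition the closure of that graph; uniqueness of the polar decomposition cannot be invoked until you have the \emph{equality} $S_\Omega=\tilde S$. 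The standard fix is to run the same one-particle computation for the commutant, obtaining $F_\Omega\subseteq\tilde S^{*}$ on $\mathcal{R}_{\mathcal{W}'}\Omega$, and then take adjoints ($\tilde S\subseteq F_\Omega^{*}=S_\Omega$) to close the loop; equivalently, one must show $\mathcal{R}_\mathcal{W}\Omega$ is a core for $\tilde S$. Without one of these arguments the identification $J_\Omega=\Theta\,U(R_1(\pi))$, $\Delta_\Omega=\mathrm{e}^{-2\pi K_1}$ does not follow.
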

\begin{rem}
Althpugh we are working with the net of local algebras for the real
scalar field, the above result holds for any relativistic QFT which
satisfies the Wightman axioms.
\end{rem}

\subsection{Relative modular Hamiltonian and relative modular flow\label{subsec:Relative-modular-Hamiltonian}}
\begin{lem}
\textup{Let $\mathcal{R}\subset\mathcal{B}\left(\mathcal{H}\right)$
be a vN algebra and two cyclic and separating vectors $\Omega,\Phi\in\mathcal{H}$.
Then there exists a unique (generally unbounded) closed antilinear
operator such that
\begin{equation}
S_{\Phi,\Omega}\,A\Omega=A^{*}\Phi\,,\quad\forall A\in\mathcal{R}\,.\label{eq:rel_mod}
\end{equation}
The operator $S_{\Omega}$ is called the }relative modular involution\textup{
associated to the pair }$\left\{ \mathcal{R},\Omega,\Phi\right\} $.
\end{lem}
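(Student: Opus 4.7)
The plan is to construct $S_{\Phi,\Omega}$ by mirroring the construction of $S_{\Omega}$ in \eqref{eq:mod_op_def}, replacing the right-hand side $A^{*}\Omega$ by $A^{*}\Phi$. First I would define an antilinear operator $S_{0}$ on the subspace $\mathcal{D}_{0}:=\mathcal{R}\Omega$ by
\begin{equation}
S_{0}(A\Omega):=A^{*}\Phi,\qquad A\in\mathcal{R}.
\end{equation}
Well-definedness uses only that $\Omega$ is separating for $\mathcal{R}$: if $A\Omega=B\Omega$ then $(A-B)\Omega=0$ forces $A=B$ in $\mathcal{R}$, whence $A^{*}\Phi=B^{*}\Phi$. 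Density of $\mathcal{D}_{0}$ is the cyclicity of $\Omega$, and antilinearity is immediate from the formula.

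The main step is to show that $S_{0}$ is closable, so that its closure will supply the desired closed antilinear extension. Since $\Omega$ is cyclic and separating for $\mathcal{R}$, by standard duality it is also cyclic and separating for $\mathcal{R}'$ (and likewise for $\Phi$). Consequently, the prescription
\begin{equation}
T_{0}(A'\Omega):=(A')^{*}\Phi,\qquad A'\in\mathcal{R}',
\end{equation}
defines an antilinear operator on the dense domain $\mathcal{R}'\Omega$. A short calculation using only that $\mathcal{R}$ commutes with $\mathcal{R}'$ gives, for every $A\in\mathcal{R}$ and $A'\in\mathcal{R}'$,
\begin{equation}
\overline{\langle S_{0}(A\Omega),A'\Omega\rangle}=\overline{\langle A^{*}\Phi,A'\Omega\rangle}=\langle A'A\Omega,\Phi\rangle=\langle A\Omega,(A')^{*}\Phi\rangle=\langle A\Omega,T_{0}(A'\Omega)\rangle.
\end{equation}
With the standard convention $\overline{\langle S_{0}\xi,\eta\rangle}=\langle\xi,S_{0}^{*}\eta\rangle$ for the adjoint of an antilinear operator, this is precisely $T_{0}\subset S_{0}^{*}$. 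Therefore $S_{0}^{*}$ is densely defined, and $S_{0}$ is closable.

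I would then set $S_{\Phi,\Omega}:=\overline{S_{0}}$. It is closed and antilinear by construction and satisfies \eqref{eq:rel_mod} on all of $\mathcal{R}\Omega$; uniqueness is to be understood in the standard sense that a closed operator is determined by its action on a core, with $\mathcal{R}\Omega$ serving as such a core by density and by the closability argument above. The one place I expect care to be needed is the bookkeeping of complex conjugates when verifying the inclusion $T_{0}\subset S_{0}^{*}$, since the antilinear-adjoint convention differs mildly between references; once that is pinned down, the rest is a direct transcription of the Tomita construction for the single-vector case to the two-vector setting.
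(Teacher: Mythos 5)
Your proof is correct and is the standard Araki--Tomita construction: the paper itself states this lemma without proof (deferring to its modular-theory references), and your argument --- define $S_{0}$ on $\mathcal{R}\Omega$ using that $\Omega$ is separating, exhibit $T_{0}\subset S_{0}^{*}$ on the dense domain $\mathcal{R}'\Omega$ to get closability, and take the closure --- is exactly the expected one, with the conjugation bookkeeping done consistently. The only caveat, which you correctly flag, is that ``unique'' in the statement must be read as the unique \emph{minimal} closed extension (i.e.\ the closure of $S_{0}$), since any closed extension also satisfies \eqref{eq:rel_mod} on $\mathcal{R}\Omega$.
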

Let $S_{\Phi,\Omega}=J_{\Phi,\Omega}\Delta_{\Phi,\Omega}^{\frac{1}{2}}$
be the polar decomposition of $S_{\Phi,\Omega}$. Then, $\Delta_{\Phi,\Omega}$
is called the \textit{relative modular operator} and $J_{\Phi,\Omega}$
(antiunitary) is called the \textit{relative modular conjugation.}
Then the \textit{relative modular Hamiltonian} is defined
\begin{equation}
K_{\Phi,\Omega}:=-\log\left(\Delta_{\Phi,\Omega}\right)\:\textrm{.}
\end{equation}
The \textit{relative modular flow} $\Delta_{\Phi,\Omega}^{it}$ acts
as the modular flow $\Delta_{\Phi}^{it}$ for the algebra $\mathcal{R}$
and as $\Delta_{\Omega}^{it}$ for the algebra $\mathcal{R}'$, i.e.
\begin{eqnarray}
\Delta_{\Phi,\Omega}^{it}\,A\,\Delta_{\Phi,\Omega}^{-it}=\Delta_{\Phi}^{it}\,A\,\Delta_{\Phi}^{-it} &  & A\in\mathcal{R}\,,\label{eq:rel_mod_flow}\\
\Delta_{\Phi,\Omega}^{it}\,A'\,\Delta_{\Phi,\Omega}^{-it}=\Delta_{\Omega}^{it}\,A'\,\Delta_{\Omega}^{-it} &  & A'\in\mathcal{R}'\,.
\end{eqnarray}
The following theorem summarize the analytics properties of the the
relative modular flow.
\begin{thm}
\label{thm_kms}(KMS condition \cite{Bertozzini})\textup{ Under the
hypothesis of the previous lemma, given any $A,B\in\mathcal{R}$,
there exits a unique continuous function $G_{A,B}:\mathbb{R}+i\left[-1,0\right]\rightarrow\mathbb{C}$,
analytic on $\mathbb{R}+i\left(-1,0\right)$ such that
\begin{eqnarray}
G_{A,B}\left(t\right)\!\!\!\! & = & \!\!\!\!\left\langle \Omega,\Delta_{\Phi,\Omega}^{it}A\Delta_{\Omega}^{-it}B\Omega\right\rangle _{\mathcal{H}}\,,\label{kms_func}\\
G_{A,B}\left(t-i\right)\!\!\!\! & = & \!\!\!\!\left\langle \Phi,B\Delta_{\Phi,\Omega}^{it}A\Delta_{\Omega}^{-it}\Phi\right\rangle _{\mathcal{H}}\,,
\end{eqnarray}
for all $t\in\mathbb{R}$. Moreover, the function above is uniquely
determined by one of its boundary values.}
\end{thm}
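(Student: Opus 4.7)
The plan is to construct the function $G_{A,B}$ explicitly by spectral calculus on $\Delta_{\Phi,\Omega}$ and $\Delta_\Omega$, identify its two boundary values via the characterizing identity of the relative modular involution, and conclude uniqueness by a Schwarz-reflection argument. The essential algebraic input is extracted from the polar decompositions $S_{\Phi,\Omega}=J_{\Phi,\Omega}\Delta_{\Phi,\Omega}^{1/2}$ and $S_\Omega=J_\Omega\Delta_\Omega^{1/2}$ together with \eqref{eq:rel_mod} and \eqref{eq:mod_op_def}: for every $X\in\mathcal{R}$, the vector $X\Omega$ lies in $\mathrm{Dom}(\Delta_{\Phi,\Omega}^{1/2})\cap\mathrm{Dom}(\Delta_\Omega^{1/2})$ with
\[\Delta_{\Phi,\Omega}^{1/2}X\Omega = J_{\Phi,\Omega}\,X^*\Phi,\qquad \Delta_\Omega^{1/2}X\Omega = J_\Omega\,X^*\Omega.\]
Taking $X=\boldsymbol{1}$ in the first identity yields $\Delta_{\Phi,\Omega}^{1/2}\Omega = J_{\Phi,\Omega}\Phi$, which is the bridge that will exchange $\Omega$ for $\Phi$ on the lower edge of the strip.

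On the closed strip $\mathbb{R}+i[-1,0]$ I would propose the candidate
\[G_{A,B}(z) := \bigl\langle \Delta_{\Phi,\Omega}^{-i\bar z}\,\Omega,\; A\,\Delta_\Omega^{-iz}\,B\Omega\bigr\rangle,\]
with both vector-valued factors defined through spectral calculus. Writing $z=t+is$ with $s\in[-1,0]$, the left exponent has real part $-s\in[0,1]$ and the right exponent has real part $s\in[-1,0]$; distributing each as $\tfrac12+\tfrac12$ and using the identities above ensures both vectors remain in the relevant domains throughout the strip. By the spectral theorem, $z\mapsto\Delta_{\Phi,\Omega}^{-i\bar z}\Omega$ and $z\mapsto\Delta_\Omega^{-iz}B\Omega$ are weakly holomorphic on the interior and strongly continuous on the closed strip, so $G_{A,B}$ is continuous on $\mathbb{R}+i[-1,0]$ and holomorphic on the interior. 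A Hadamard three-lines bound using unitarity at $\mathrm{Im}(z)=0$ together with the norms $\|J_{\Phi,\Omega}\Phi\|=\|\Phi\|$ and $\|J_\Omega B^*\Omega\|=\|B^*\Omega\|$ at $\mathrm{Im}(z)=-1$ gives a uniform modulus bound, and at $z=t$ real the unitary factors may be moved across the inner product to recover the first line of \eqref{kms_func}.

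The main obstacle is computing $G_{A,B}(t-i)$. The plan is to split each exponent $-1$ as $-\tfrac12-\tfrac12$, writing $\Delta_{\Phi,\Omega}^{-i(t+i)}\Omega = \Delta_{\Phi,\Omega}^{-it}\Delta_{\Phi,\Omega}^{1/2}\bigl(\Delta_{\Phi,\Omega}^{1/2}\Omega\bigr)$ on the left and an analogous rewriting on the right, then apply $\Delta_{\Phi,\Omega}^{1/2}\Omega = J_{\Phi,\Omega}\Phi$ and $\Delta_\Omega^{1/2}B\Omega = J_\Omega B^*\Omega$, and transpose the two antiunitary conjugations across the inner product. Combined with the covariance identities \eqref{eq:rel_mod_flow} to commute $A$ past the surviving modular flows and the intertwining of $J_{\Phi,\Omega}$, $J_\Omega$ with their spectral powers, this should rearrange the expression into $\langle\Phi,\,B\,\Delta_{\Phi,\Omega}^{it}A\Delta_\Omega^{-it}\,\Phi\rangle$, the second line of \eqref{kms_func}. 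The delicate part is managing the interaction of the two antilinear conjugations with the unbounded half-powers; the safest route is to carry the calculation out first on a norm-dense set of $A,B\in\mathcal{R}$ that are entire analytic for the relevant modular flows and then extend to all of $\mathcal{R}$ by the uniform bound established above. Uniqueness from either boundary value follows immediately: a competitor $\tilde G_{A,B}$ would differ from $G_{A,B}$ by a continuous bounded function on the closed strip, holomorphic inside and vanishing on one edge, and Schwarz reflection across that edge yields a bounded entire function with a zero on an interval of the real line, hence identically zero.
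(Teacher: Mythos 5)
The paper does not actually prove this theorem: it is quoted as a known result with a citation to \cite{Bertozzini}, so your attempt has to be judged on its own merits. There is a genuine gap, and it sits exactly where you flag the ``delicate part'': the candidate $G_{A,B}(z)=\bigl\langle \Delta_{\Phi,\Omega}^{-i\bar z}\Omega,\,A\,\Delta_\Omega^{-iz}B\Omega\bigr\rangle$ is not well defined on the closed strip. For $z=t+is$ with $s\in[-1,0]$ the left factor requires $\Omega\in\mathrm{Dom}(\Delta_{\Phi,\Omega}^{\rho})$ for $\rho$ up to $1$, and the right factor requires $B\Omega\in\mathrm{Dom}(\Delta_\Omega^{\rho})$ for $\rho$ down to $-1$. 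What the polar decompositions of $S_{\Phi,\Omega}$ and $S_\Omega$ actually give is only $\mathcal{R}\Omega\subset\mathrm{Dom}(\Delta_{\Phi,\Omega}^{1/2})\cap\mathrm{Dom}(\Delta_\Omega^{1/2})$ --- positive powers up to $\tfrac12$. The ``$\tfrac12+\tfrac12$'' splitting does not repair this: $\Delta_{\Phi,\Omega}\Omega=\Delta_{\Phi,\Omega}^{1/2}\bigl(\Delta_{\Phi,\Omega}^{1/2}\Omega\bigr)$ requires the intermediate vector $J_{\Phi,\Omega}\Phi$ to lie again in $\mathrm{Dom}(\Delta_{\Phi,\Omega}^{1/2})$, which is not guaranteed; and for the right factor the needed power is \emph{negative}, where the only available identity, $J_\Omega\Delta_\Omega^{-1/2}=S_\Omega^{*}$, is defined on $\mathcal{R}'\Omega$, not on $\mathcal{R}\Omega$, so $\Delta_\Omega^{-1/2}B\Omega$ simply need not exist for $B\in\mathcal{R}$. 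Since the whole construction of the analytic function fails, the three-lines bound you want to use later (to extend from entire analytic elements to all of $\mathcal{R}$) has no function to bound; the fallback you sketch is circular as written.

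The standard repair is either (i) to prove the statement first for $A,B$ in the $*$-subalgebra of elements entire analytic for the relevant modular flows, where one can legitimately define the continuation, and only then obtain the general case by approximation plus Phragm\'en--Lindel\"of applied to the \emph{differences} of the already-constructed functions; or, more cleanly, (ii) the balanced-weight ($2\times 2$ matrix) trick: realize $\Delta_\Omega$, $\Delta_{\Phi,\Omega}$ and the cocycle $u_{\Phi,\Omega}(t)=\Delta_{\Phi,\Omega}^{it}\Delta_\Omega^{-it}$ as corners of the single modular operator of the state $\omega\oplus\phi$ on $M_2(\mathbb{C})\otimes\mathcal{R}$, and read off the relative KMS condition from the ordinary (one-operator) KMS condition, which avoids all domain juggling. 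A minor further point: in your uniqueness argument the Schwarz reflection produces a function holomorphic on a strip of width $2$ vanishing on an interior line, which is zero by the identity theorem; it is not an entire function, so Liouville-type language should be dropped.
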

As it happens for the modular flow, $\Delta_{\Phi,\Omega}^{it}\notin\mathcal{R}\cup\mathcal{R}'$
in general. However, we can define the one-parameter family of unitaries\footnote{This one-parameter family of operators is not a one-parameter group.}
\begin{equation}
u_{\Phi,\Omega}\left(t\right)=\Delta_{\Phi,\Omega}^{it}\Delta_{\Omega}^{-it}\,,\label{crn}
\end{equation}
whose belong $u_{\Phi,\Omega}\left(t\right)\in\mathcal{R}$ for all
$t\in\mathbb{R}$. This family of unitaries is best known as Connes
Radon-Nikodym cocycle.

\subsection{Araki formula for relative entropy}

The definition of the relative entropy for a general von Neumann algebra
is due to Araki \cite{Araki_entropy}. 
\begin{defn}
Let $\mathcal{R}\subset\mathcal{B}\left(\mathcal{H}\right)$ be a
vN algebra in standard form. For any given two $\omega,\phi$ two
faithful normal states, there exists cyclic and separating vector
representatives $\Omega,\Phi\in\mathcal{H}$.\footnote{In particular, choose them in the natural cone of the standard vector
of $\mathcal{R}$.} Then the relative entropy $S_{R}\left(\phi\mid\omega\right)$ is
defined using the relative modular Hamiltonian $K_{\Omega,\Phi}$
as\footnote{Contrary to the notation employed in sections \ref{sec:Intro} and
\ref{sec:lamda_bounds}, on the l.h.s. expression \eqref{eq:rel_ent}
we emphasize that the relative entropy depends on the states rather
than the vector representatives used to define it. We use this new
notation in the rest of the paper.}
\begin{equation}
S_{R}\left(\phi\mid\omega\right):=\left\langle \Phi,K_{\Omega,\Phi}\Phi\right\rangle _{\mathcal{H}}\,.\label{eq:rel_ent}
\end{equation}
\end{defn}
It can be shown that the above formula is independent of the choice
of the vector representatives for the given states, and it also satisfies
the well-known properties of strict-positivity, monotonicity, convexity
and lower semi-continuity. All these are well discussed in Araki's
original work \cite{Araki_entropy}. When the relative entropy is
finite (in particular, when $\Omega$ belongs to the domain of $K_{\Omega,\Phi}$),
the following useful expression holds
\begin{equation}
S_{R}\left(\phi\mid\omega\right)=i\lim_{t\rightarrow0}\frac{\left\langle \Phi,\Delta_{\Omega,\Phi}^{it}\Phi\right\rangle _{\mathcal{H}}-1}{t}\,.\label{eq:rel_ent_f}
\end{equation}

\section{Relative entropy for coherent states\label{sec:Explicit-calculation}}

In this section, we compute the relative entropy between a coherent
state and the vacuum for the Rindler wedge. Before doing that, we
study some relations concerning the relative entropy which are valid
for any kind of regions. These are explained in the following subsection.

\subsection{Generalities}

Coherent states come from acting with a Weyl operator to the vacuum
vector. Weyl unitaries have the very interesting property that implements,
by adjoint action, automorphism for any local algebra $\mathcal{R}\left(\mathcal{O}\right)$.
Indeed, for any $\boldsymbol{h}\in\mathfrak{H}$ and any Weyl operator
$W\left(f\right)\in\mathcal{R}\left(\mathcal{O}\right)$ ($supp\left(f\right)\subset\mathcal{O}$)
we have that
\begin{equation}
W\left(\boldsymbol{h}\right)^{*}W\left(f\right)W\left(\boldsymbol{h}\right)=\mathrm{e}^{2i\mathrm{Im}\left\langle f,\boldsymbol{h}\right\rangle _{\mathfrak{H}}}W\left(f\right)\in\mathcal{R}\left(\mathcal{O}\right)\,,\label{eq:case_3}
\end{equation}
which implies that $W\left(\boldsymbol{h}\right)^{*}\mathcal{R}\left(\mathcal{O}\right)W\left(\boldsymbol{h}\right)=\mathcal{R}\left(\mathcal{O}\right)$.
This property has an interesting implication for the relative entropy
itself. Indeed, it implies that the relative entropy between a coherent
state and the vacuum is symmetric. In order to justify this property,
we prove the following lemmas.
\begin{lem}
\textup{\label{lema_auto}Let $\mathcal{R}\subset\mathcal{B}\left(\mathcal{H}\right)$
be a vN algebra and $\Omega,\Phi\in\mathcal{H}$ a cyclic and separating
vectors and $U\in\mathcal{B}\left(\mathcal{H}\right)$ unitary such
$U^{*}\mathcal{R}U=\mathcal{R}$. Then,}
\end{lem}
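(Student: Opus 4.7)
The conclusion I expect to be proved is the covariance of the relative modular data under the automorphism implemented by $U$; namely, that $U\Phi$ and $U\Omega$ are again cyclic and separating for $\mathcal{R}$ and
\begin{equation*}
S_{U\Phi,U\Omega}=U\,S_{\Phi,\Omega}\,U^{*},\qquad \Delta_{U\Phi,U\Omega}=U\,\Delta_{\Phi,\Omega}\,U^{*},\qquad J_{U\Phi,U\Omega}=U\,J_{\Phi,\Omega}\,U^{*}.
\end{equation*}
My plan is to prove this directly from the defining equation \eqref{eq:rel_mod} of the relative modular involution and the uniqueness part of the lemma that introduces it.

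First I would dispose of the preliminary claim that $U\Omega$ and $U\Phi$ are cyclic and separating for $\mathcal{R}$. For cyclicity, note that $\mathcal{R}\,U\Omega=U(U^{*}\mathcal{R}U)\Omega=U\mathcal{R}\Omega$, which is dense by cyclicity of $\Omega$. For the separating property, if $A\in\mathcal{R}$ and $A\,U\Omega=0$, then $(U^{*}AU)\Omega=0$; since $U^{*}AU\in\mathcal{R}$ and $\Omega$ is separating, $U^{*}AU=0$, hence $A=0$. The same argument works with $\Phi$ in place of $\Omega$. This guarantees that the operator $S_{U\Phi,U\Omega}$ is well-defined by the previous lemma.

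Next I would verify the defining identity for $U S_{\Phi,\Omega}U^{*}$. Take $A\in\mathcal{R}$ and observe that $B:=U^{*}AU\in\mathcal{R}$ by the hypothesis $U^{*}\mathcal{R}U=\mathcal{R}$. Then, using antilinearity of $S_{\Phi,\Omega}$ and its defining property applied to $B$,
\begin{equation*}
\bigl(U S_{\Phi,\Omega}U^{*}\bigr)\,A\,(U\Omega)=U\,S_{\Phi,\Omega}\,B\,\Omega=U\,B^{*}\,\Phi=U(U^{*}A^{*}U)\Phi=A^{*}\,(U\Phi).
\end{equation*}
Since $U S_{\Phi,\Omega}U^{*}$ is closed and antilinear, the uniqueness clause of the lemma forces $S_{U\Phi,U\Omega}=U S_{\Phi,\Omega}U^{*}$. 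Finally, writing the polar decomposition
\begin{equation*}
U S_{\Phi,\Omega}U^{*}=\bigl(U J_{\Phi,\Omega}U^{*}\bigr)\bigl(U\,\Delta_{\Phi,\Omega}^{1/2}\,U^{*}\bigr),
\end{equation*}
where the right factor is positive self-adjoint and the left factor is antiunitary, the uniqueness of the polar decomposition yields the announced identities for $\Delta$ and $J$.

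The only subtle point, and the one I would handle carefully rather than gloss over, is the interplay between antilinearity and the adjoint: one has to be mindful that $(U S_{\Phi,\Omega}U^{*})^{*}$ involves antilinear adjoints, so the polar decomposition uniqueness has to be applied in the correct category (closed antilinear operators). Apart from this bookkeeping, the argument is a clean application of the uniqueness of the relative modular involution, and no KMS or analytic input is needed at this stage.
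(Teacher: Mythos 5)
Your proposal is correct and follows essentially the same route as the paper: both verify the defining relation $\bigl(US_{\Phi,\Omega}U^{*}\bigr)A\,U\Omega=A^{*}U\Phi$ on the dense domain $\mathcal{R}\,U\Omega$ using $U^{*}AU\in\mathcal{R}$, invoke uniqueness of the (relative) modular involution, and then read off $\Delta$ and $J$ from the polar decomposition, after first checking that $U\Omega$ and $U\Phi$ remain cyclic and separating by the identical density/separation argument. The only cosmetic difference is the ordering of the subscripts (the paper writes $S_{U\Omega,U\Phi}=US_{\Omega,\Phi}U^{*}$, which is your identity with the roles of the two vectors interchanged), and your extra care about the antilinear polar decomposition is a welcome but inessential refinement.
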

\begin{enumerate}
\item $U\Omega$ and $U\Phi$ are cyclic and separating.
\item $S_{U\Omega}=US_{\Omega}U^{*}$ $\Rightarrow$ $\Delta_{U\Omega}=U\Delta_{\Omega}U^{\text{*}}$.
\item $S_{U\Omega,U\Phi}=US_{\Omega,\Phi}U^{*}$ $\Rightarrow$ $\Delta_{U\Omega,U\Phi}=U\Delta_{\Omega,\Phi}U^{*}.$
\end{enumerate}
\begin{proof}
(1) $\overline{\mathcal{R}U\Omega}=\overline{U\mathcal{R}\Omega}=U\overline{\mathcal{R}\Omega}=\mathcal{H}$
implies that $U\Omega$ is cyclic. $AU\Omega=0\Leftrightarrow U^{*}AU\Omega=0\Leftrightarrow U^{*}AU=0\Leftrightarrow A=0$
implies that $U\Omega$ is separating. Idem for $U\Phi$. (2) For
any $A\in\mathcal{R}$, we have $\left(US_{\Omega}U^{*}\right)AU\Omega=US_{\Omega}\left(U^{*}AU\right)\Omega=U\left(U^{*}AU\right)^{*}\Omega=A^{*}U\Omega$.
Then, applying the polar decomposition we have $\Delta_{U\Omega}=U\Delta_{\Omega}U^{\text{*}}$.
(3) For any $A\in\mathcal{R}$, we have $\left(US_{\Omega,\Phi}U^{*}\right)AU\Phi=US_{\Omega,\Phi}\left(U^{*}AU\right)\Phi=U\left(U^{*}AU\right)^{*}\Omega=A^{*}U\Omega\,.$
Then $\Delta_{U\Omega,U\Phi}=U\Delta_{\Omega,\Phi}U^{*}$ follows
from the polar decomposition.
\end{proof}
Given a state $\omega$ of a vN algebra $\mathcal{R}\subset\mathcal{B}\left(\mathcal{H}\right)$
and a unitary $U\in\mathcal{B}\left(\mathbb{\mathcal{H}}\right)$,
we denote by $\omega_{U}$ the state defined through $\omega_{U}\left(\cdot\right):=\omega\left(U^{*}\cdot U\right)$.
\begin{lem}
\textup{\label{sim_ent_conj}Given $\mathcal{R}\subset\mathcal{B}\left(\mathcal{H}\right)$
a vN algebra in standard form, $\omega$ a faithful normal state and
$U\in\mathcal{B}\left(\mathcal{H}\right)$ unitary such that $U^{*}\mathcal{R}U=\mathcal{R}$, then
\begin{equation}
S_{R}\left(\omega_{U}\mid\omega\right)=S_{R}\left(\omega\mid\omega_{U^{*}}\right)\,.
\end{equation}
}
\end{lem}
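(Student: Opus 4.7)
The strategy is to pick vector representatives for the three states involved, use Lemma~\ref{lema_auto} to intertwine the two relevant relative modular operators by the unitary $U$, and then observe that the Araki-formula expectation values on the two sides of the identity coincide after the intertwiner cancels.

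First I would choose the representatives. Since $\mathcal{R}$ is in standard form and $\omega$ is faithful and normal, there is a cyclic and separating vector $\Omega$ implementing $\omega$. The hypothesis $U^{*}\mathcal{R}U=\mathcal{R}$ is equivalent to $U\mathcal{R}U^{*}=\mathcal{R}$, so Lemma~\ref{lema_auto}(1) guarantees that both $U\Omega$ and $U^{*}\Omega$ are also cyclic and separating for $\mathcal{R}$. A one-line computation, $\langle U\Omega,A\,U\Omega\rangle=\omega(U^{*}AU)=\omega_{U}(A)$ for every $A\in\mathcal{R}$, shows that $U\Omega$ is a vector representative of $\omega_{U}$, and analogously $U^{*}\Omega$ represents $\omega_{U^{*}}$.

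The key step is to apply Lemma~\ref{lema_auto}(3) with the unitary $U$ and with the cyclic/separating pair $U^{*}\Omega,\Omega$ playing the roles of $\Omega,\Phi$ in the statement of that lemma. This gives the intertwining relation
\[
\Delta_{\Omega,U\Omega}=U\,\Delta_{U^{*}\Omega,\Omega}\,U^{*},
\]
and therefore, by the Borel functional calculus applied to $-\log$,
\[
K_{\Omega,U\Omega}=U\,K_{U^{*}\Omega,\Omega}\,U^{*}.
\]
Substituting this into the Araki formula \eqref{eq:rel_ent} and using that the definition is independent of the particular choice of vector representatives, I compute
\begin{align*}
S_{R}(\omega_{U}\mid\omega) &= \langle U\Omega,\,K_{\Omega,U\Omega}\,U\Omega\rangle_{\mathcal{H}} \\
&= \langle U\Omega,\,U\,K_{U^{*}\Omega,\Omega}\,U^{*}U\Omega\rangle_{\mathcal{H}} \\
&= \langle \Omega,\,K_{U^{*}\Omega,\Omega}\,\Omega\rangle_{\mathcal{H}} \\
&= S_{R}(\omega\mid\omega_{U^{*}}),
\end{align*}
the last equality being the Araki formula read in the opposite direction, with $\Omega$ representing $\omega$ and $U^{*}\Omega$ representing $\omega_{U^{*}}$.

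I do not expect a substantive obstacle. The only care required is the bookkeeping of the two index orderings used in the paper (the convention $S_{\Phi,\Omega}A\Omega=A^{*}\Phi$ in the definition of the relative modular operator versus the reversed ordering in the Araki formula $\langle\Phi,K_{\Omega,\Phi}\Phi\rangle$), which one must track carefully in order to apply Lemma~\ref{lema_auto}(3) with the assignment of vectors that produces precisely $K_{\Omega,U\Omega}=U K_{U^{*}\Omega,\Omega}U^{*}$, since it is this relation, and not the naive one obtained by conjugating with $U^{*}$, that makes the Hilbert-space vectors on the two sides collapse after cancelling the intertwiners.
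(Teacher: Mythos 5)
Your proposal is correct and follows essentially the same route as the paper: both choose the representatives $\Omega$, $U\Omega$, $U^{*}\Omega$, apply Lemma~\ref{lema_auto} (parts 1 and 3) with exactly the same assignment of vectors to obtain $S_{\Omega,U\Omega}=US_{U^{*}\Omega,\Omega}U^{*}$, and then cancel the intertwiner inside the Araki expectation value. Your closing remark about tracking the index-ordering conventions is a fair point of care, but it does not constitute a deviation from the paper's argument.
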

\begin{proof}
Let $\Omega$ be the cyclic and separating vector representative of
$\omega$. Then $U\Omega$, $U^{*}\Omega$ are the vector representatives
of $\omega_{U}$,$\omega_{U^{*}}$ and they are cyclic and separating
because of 1. in lemma \ref{lema_auto}. Using 3. of the same lemma
we have $S_{\Omega,U\Omega}=S_{UU^{*}\Omega,U\Omega}=US_{U^{*}\Omega,\Omega}U^{*}$.
Then $S_{R}\left(\omega_{U}\mid\omega\right)=\left\langle U\Omega,K_{\Omega,U\Omega}U\Omega\right\rangle _{\mathcal{H}}=\left\langle U\Omega,UK_{U^{*}\Omega,\Omega}U^{*}U\Omega\right\rangle _{\mathcal{H}}=\left\langle \Omega,K_{U^{*}\Omega,\Omega}\Omega\right\rangle _{\mathcal{H}}=S_{R}\left(\omega\mid\omega_{U^{*}}\right)$.
\end{proof}
Now, we come back to coherent states. From now on $\omega\left(\cdot\right)=\left\langle \Omega,\cdot\,\Omega\right\rangle _{\mathcal{H}}$
denotes the vacuum state. And given any $f\in\mathcal{S}\left(\mathbb{R}^{d},\mathbb{R}\right)$
we define the coherent state $\omega_{f}\left(\cdot\right)=\left\langle \Omega,W\left(f\right)^{*}\cdot\,W\left(f\right)\Omega\right\rangle _{\mathcal{H}}$. The Reeh-Schlieder theorem asserts that the vacuum vector $\Omega$
is cyclic and separating for any local algebra $\mathcal{R}\left(\mathcal{O}\right)$,
and lemma \ref{lema_auto} ensures the same for the coherent vector
$W\left(f\right)\Omega$. Then, lemma \ref{sim_ent_conj} implies
\begin{equation}
S_{R}\left(\omega_{f}\mid\omega\right)=S_{R}\left(\omega\mid\omega_{-f}\right)\,,\label{sr-antisym}
\end{equation}
for any coherent state $\omega_{f}$ and any local algebra $\mathcal{R}\left(\mathcal{O}\right)$.
Moreover, the net algebra of the free scalar field has a global $\mathbb{Z}_{2}$-symmetry
implemented by an operator $z=z^{-1}=z^{*}$ such that\footnote{In the Lagrangian approach to QFT, this is the usual symmetry $\phi\left(x\right)\rightarrow-\phi\left(x\right)$.}
\begin{equation}
zW\left(f\right)z=W\left(-f\right)=W\left(f\right)^{*}\,,\quad\mathrm{and}\quad z\Omega=\Omega\,.\label{z2-sym}
\end{equation}
This motivates the following lemma.
\begin{lem}
\textup{For any local algebra $\mathcal{R}\left(\mathcal{O}\right)$,
the relative entropy between a coherent state $\omega_{f}$ and the
vacuum state $\omega$ satisfies 
\begin{equation}
S_{R}\left(\omega_{f}\mid\omega\right)=S_{R}\left(\omega_{-f}\mid\omega\right)\,.\label{z2-sym-sr}
\end{equation}
}
\end{lem}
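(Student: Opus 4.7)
The plan is to exploit the $\mathbb{Z}_2$ symmetry $z$ introduced in \eqref{z2-sym}, which implements an automorphism of every local algebra $\mathcal{R}(\mathcal{O})$ and fixes the vacuum. The core observation is that $z$ relates the coherent vectors $W(f)\Omega$ and $W(-f)\Omega$ in a simple way: since $z\Omega = \Omega$ and $zW(f)z = W(-f)$, one gets $W(-f)\Omega = zW(f)z\Omega = zW(f)\Omega$. Thus the two vector representatives differ only by the unitary $z$, which preserves $\mathcal{R}(\mathcal{O})$.

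Concretely, I would apply item (3) of Lemma \ref{lema_auto} with $U=z$ and $\Phi = W(f)\Omega$. Since $z^{*}=z$ and $z\Omega = \Omega$, this gives
\begin{equation}
\Delta_{\Omega,\,W(-f)\Omega} \;=\; \Delta_{z\Omega,\,zW(f)\Omega} \;=\; z\,\Delta_{\Omega,\,W(f)\Omega}\,z,
\end{equation}
and therefore $K_{\Omega,W(-f)\Omega} = z\,K_{\Omega,W(f)\Omega}\,z$ by functional calculus applied to $-\log$.

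Finally, substituting into the Araki formula \eqref{eq:rel_ent} and using $z^{*}z = \mathbf{1}$ together with $W(-f)\Omega = zW(f)\Omega$, I get
\begin{equation}
S_{R}(\omega_{-f}\mid\omega)=\big\langle W(-f)\Omega,\,K_{\Omega,W(-f)\Omega}\,W(-f)\Omega\big\rangle=\big\langle zW(f)\Omega,\,z\,K_{\Omega,W(f)\Omega}\,z\cdot zW(f)\Omega\big\rangle=\big\langle W(f)\Omega,\,K_{\Omega,W(f)\Omega}\,W(f)\Omega\big\rangle,
\end{equation}
which equals $S_{R}(\omega_{f}\mid\omega)$, as desired.

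There is no real obstacle: every ingredient is already in place. The only point requiring a brief justification is that $z$ actually preserves every local algebra, i.e.\ $z\mathcal{R}(\mathcal{O})z = \mathcal{R}(\mathcal{O})$; this follows immediately from $zW(f)z = W(-f)$ and the definition of $\mathcal{R}(\mathcal{O})$ as the von Neumann algebra generated by $\{W(f):\mathrm{supp}(f)\subset\mathcal{O}\}$, so that the hypotheses of Lemma \ref{lema_auto} are satisfied.
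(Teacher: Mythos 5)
Your proof is correct and follows essentially the same route as the paper: both arguments hinge on the $\mathbb{Z}_2$ unitary $z$ with $z\Omega=\Omega$ and $zW(f)z=W(-f)$ to conjugate the relative modular operator, the only difference being that you invoke the already-established Lemma \ref{lema_auto}(3) with $U=z$ whereas the paper redoes the dense-domain computation $\left(zS_{\Omega,f}z\right)W(g)W(f)^{*}\Omega=W(g)^{*}\Omega$ by hand. Your final substitution into the Araki formula is also written more carefully (with the vector $W(f)\Omega$ rather than $\Omega$ in the expectation value), consistent with definition \eqref{eq:rel_ent}.
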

\begin{proof}
Let $\Omega$, $W\left(f\right)\Omega$ and $W\left(f\right)^{*}\Omega$
be the vector representatives of the states $\omega$, $\omega_{f}$
and $\omega_{-f}$. If $S_{\Omega,f}$ is the relative modular involution
associated to $\left\{ \mathcal{R}\left(\mathcal{O}\right),W\left(f\right)\Omega,\Omega\right\} $
and employing the $\mathbb{Z}_{2}$-symmetry \eqref{z2-sym}, we have
that
\begin{equation}
\left(zS_{\Omega,f}z\right)W\left(g\right)W\left(f\right)^{*}\Omega=zS_{\Omega,f}W\left(g\right)^{*}W\left(f\right)\Omega=zW\left(g\right)\Omega=W\left(g\right)^{\text{*}}\Omega\,,
\end{equation}
for all $W\left(g\right)\in\mathcal{R}\left(\mathcal{O}\right)$.
Then $S_{\Omega,-f}=zS_{\Omega,f}z$ and hence $K_{\Omega,-f}=zK_{\Omega,f}z$.
Finally, $S_{R}\left(\omega_{f}\mid\omega\right)=\left\langle \Omega,K_{\Omega,f}\Omega\right\rangle _{\mathcal{H}}=\left\langle \Omega,K_{\Omega,-f}\Omega\right\rangle _{\mathcal{H}}=S_{R}\left(\omega_{-f}\mid\omega\right)$.
\end{proof}
\begin{rem}
The above lemma should apply to any scalar theory with $\mathbb{Z}_{2}$-symmetry
as above, satisfying the Wightman axioms.
\end{rem}
Finally, combining \eqref{sr-antisym} and \eqref{z2-sym-sr} we have
the following theorem concerning the symmetry for the relative entropy
between coherent states. 
\begin{thm}
\textup{\label{thm_sym_sr}For any local algebra $\mathcal{R}\left(\mathcal{O}\right)$,
the relative entropy between a coherent state $\omega_{f}$ and the
vacuum state $\omega$ is symmetric, i.e. 
\begin{equation}
S_{R}\left(\omega_{f}\mid\omega\right)=S_{R}\left(\omega\mid\omega_{f}\right)\,.\label{sym-sr}
\end{equation}
}
\end{thm}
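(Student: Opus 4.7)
The proof is essentially a one-line combination of the two identities already at our disposal, so my plan is to set this up cleanly and verify the chain of equalities goes through for arbitrary $f \in \mathcal{S}(\mathbb{R}^d,\mathbb{R})$ and any local algebra $\mathcal{R}(\mathcal{O})$.

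First I would invoke equation \eqref{z2-sym-sr}, the $\mathbb{Z}_2$-symmetry of the relative entropy between a coherent state and the vacuum, to replace $\omega_f$ by $\omega_{-f}$ on the left argument:
\begin{equation*}
S_R(\omega_f \mid \omega) = S_R(\omega_{-f} \mid \omega).
\end{equation*}
Next I would apply \eqref{sr-antisym}, which says that for any $g$ we have $S_R(\omega_g \mid \omega) = S_R(\omega \mid \omega_{-g})$, in the special case $g = -f$. This yields
\begin{equation*}
S_R(\omega_{-f} \mid \omega) = S_R(\omega \mid \omega_f).
\end{equation*}
Chaining the two equalities gives the claimed symmetry $S_R(\omega_f \mid \omega) = S_R(\omega \mid \omega_f)$.

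There is no real obstacle here: everything rests on the inputs already proved, namely Lemma \ref{sim_ent_conj} (which delivered \eqref{sr-antisym} using that $W(f)$ implements an automorphism of $\mathcal{R}(\mathcal{O})$ via \eqref{eq:case_3}) and the previous lemma using the global $\mathbb{Z}_2$-symmetry implemented by $z$ (which delivered \eqref{z2-sym-sr}). The only mild subtlety to double-check is that both lemmas apply under the same hypotheses as the theorem, i.e.\ that $W(f)\Omega$ is cyclic and separating for $\mathcal{R}(\mathcal{O})$ so that the relative entropy is defined via the Araki formula; this follows from Reeh--Schlieder together with item (1) of Lemma \ref{lema_auto}. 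Once this is noted, writing the two-step chain above completes the proof.
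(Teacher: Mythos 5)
Your proposal is correct and is exactly the argument the paper intends: the theorem is stated as following by "combining \eqref{sr-antisym} and \eqref{z2-sym-sr}", and your two-step chain $S_R(\omega_f\mid\omega)=S_R(\omega_{-f}\mid\omega)=S_R(\omega\mid\omega_f)$ is the explicit form of that combination. The remark about cyclicity and separability of $W(f)\Omega$ via Reeh--Schlieder and Lemma \ref{lema_auto} is also consistent with how the paper sets up the preceding lemmas.
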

To end, we have the following theorem concerning the relative entropy
between two coherent states.\footnote{This result has been found in the past using other methods. For example,
see \cite{Lashkari:2015dia} for a derivation using the replica trick
for $2d$ free CFTs.}
\begin{thm}
\textup{For any local algebra $\mathcal{R}\left(\mathcal{O}\right)$,
the relative entropy between two coherent states $\omega_{f}$ and
$\omega_{g}$ satisfies
\begin{equation}
S_{R}\left(\omega_{f}\mid\omega_{g}\right)=S_{R}\left(\omega_{f-g}\mid\omega\right)\,.
\end{equation}
}
\end{thm}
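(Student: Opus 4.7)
The plan is to reduce the general case to the already known quantities via the key invariance property \eqref{eq:case_3}: for any $\boldsymbol{h}\in\mathfrak{H}$, the Weyl unitary $W(\boldsymbol{h})$ normalizes $\mathcal{R}(\mathcal{O})$ under adjoint action. This places us squarely in the hypothesis of Lemma \ref{lema_auto}, applied with the unitary $U:=W(g)^{*}$.

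First I would pick canonical cyclic and separating vector representatives $\Phi:=W(f)\Omega$ and $\Psi:=W(g)\Omega$ of $\omega_{f}$ and $\omega_{g}$; these are cyclic and separating by Reeh--Schlieder combined with Lemma \ref{lema_auto}(1). Using the Weyl CCR $W(\boldsymbol{h}_{1})W(\boldsymbol{h}_{2})=e^{-i\,\mathrm{Im}\langle \boldsymbol{h}_{1},\boldsymbol{h}_{2}\rangle_{\mathfrak{H}}}W(\boldsymbol{h}_{1}+\boldsymbol{h}_{2})$, a direct computation gives
\begin{equation*}
U\Psi = W(g)^{*}W(g)\Omega = \Omega, \qquad U\Phi = W(g)^{*}W(f)\Omega = e^{i\,\mathrm{Im}\langle g,f\rangle_{\mathfrak{H}}}\,W(f-g)\Omega.
\end{equation*}
Thus $U\Psi$ is (literally) the vacuum representative of $\omega$, while $U\Phi$ is, up to a global phase, the canonical representative $W(f-g)\Omega$ of $\omega_{f-g}$.

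Next I would invoke Lemma \ref{lema_auto}(3) for this $U$, which yields $\Delta_{U\Psi,U\Phi}=U\Delta_{\Psi,\Phi}U^{*}$ and hence $K_{U\Psi,U\Phi}=U K_{\Psi,\Phi}U^{*}$. Substituting into the Araki formula \eqref{eq:rel_ent} and using unitarity of $U$ gives
\begin{equation*}
S_{R}(\omega_{f}\mid\omega_{g}) = \langle\Phi,K_{\Psi,\Phi}\Phi\rangle_{\mathcal{H}} = \langle U\Phi,\,K_{U\Psi,U\Phi}\,U\Phi\rangle_{\mathcal{H}} = S_{R}(\omega_{f-g}\mid\omega),
\end{equation*}
where in the last equality I identify $U\Psi=\Omega$ and $U\Phi=e^{i\alpha}W(f-g)\Omega$ as vector representatives of $\omega$ and $\omega_{f-g}$ respectively.

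There is essentially no obstacle, the argument being a direct application of the covariance lemma. The only minor subtlety is the phase factor $e^{i\,\mathrm{Im}\langle g,f\rangle_{\mathfrak{H}}}$ produced by the Weyl CCR, which is harmless: a global phase in a vector representative leaves the associated normal state unchanged, and hence leaves the relative entropy --- defined in terms of states via the Araki formula --- invariant (equivalently, one checks from the polar decomposition that for antilinear $S$ and unimodular $c$, the operator $|cS|^{2}=S^{*}S$, so $\Delta$ is unaffected).
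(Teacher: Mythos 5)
Your proof is correct and follows essentially the same route as the paper: both apply Lemma \ref{lema_auto}(3) with the unitary $U=W\left(g\right)^{*}$ (which normalizes $\mathcal{R}\left(\mathcal{O}\right)$ by \eqref{eq:case_3}) to conjugate the relative modular objects of the pair $\left\{ W\left(f\right)\Omega,W\left(g\right)\Omega\right\} $ onto those of $\left\{ W\left(g\right)^{*}W\left(f\right)\Omega,\Omega\right\} $, and then identify $W\left(g\right)^{*}W\left(f\right)\Omega$ as a representative of $\omega_{f-g}$. Your explicit handling of the Weyl phase $\mathrm{e}^{i\,\mathrm{Im}\left\langle g,f\right\rangle _{\mathfrak{H}}}$ is a small point the paper leaves implicit, but it changes nothing of substance.
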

\begin{proof}
Let $\Omega$, $W\left(f\right)\Omega$, $W\left(g\right)\Omega$
and the vector representatives of the states $\omega$, $\omega_{f}$
and $\omega_{g}$. If $S_{g,f}$ is the relative modular involution
associated to $\left\{ \mathcal{R}\left(\mathcal{O}\right),W\left(f\right)\Omega,W\left(g\right)\Omega\right\} $,
then because of 3. in lemma\ref{lema_auto} we have that $S_{U\Omega,U\Psi}=W\left(g\right)^{*}S_{g,f}W\left(g\right)$
is the relative modular involution associated to $\left\{ \mathcal{R}\left(\mathcal{O}\right),W\left(g\right)^{*}W\left(f\right)\Omega,\Omega\right\} $.
Since $W\left(g\right)^{*}W\left(f\right)\Omega$ is a vector representative
of $\omega_{f-g}$ , we have $S_{R}\left(\omega_{f}\mid\omega_{g}\right)=\left\langle W\left(f\right)\Omega,S_{g,f}W\left(f\right)\Omega\right\rangle _{\mathcal{H}}=S_{R}\left(\omega_{f-g}\mid\omega\right)\,.$
\end{proof}

\subsection{Relative entropy for the Rindler wedge}

Let $\mathcal{R}_{\mathcal{W}}$ be the right Rindler wedge algebra,
$\omega$ the vacuum state and $\omega_{f}$ a coherent state with
$f\in\mathcal{S}\left(\mathbb{R}^{d},\mathbb{R}\right)$. Let call
$\Omega$ and $\Phi:=W\left(f\right)\Omega$ its vector representatives.
The aim of this subsection is to compute the relative entropy $S_{R}\left(\omega_{f}\mid\omega\right)$,
and for that we need to calculate the relative modular Hamiltonian
$K_{\Omega,\Phi}$ (or $K_{\Phi,\Omega}$ according to theorem \ref{thm_sym_sr}).
As we explained in the last subsection, the vectors $\Omega$ and
$\Phi:=W\left(f\right)\Omega$ are cyclic and separating. We distinguish
between two cases,
\begin{eqnarray}
\textrm{easy case} & : & f=f_{L}+f_{R}\,,\label{eq:case_prod}\\
\textrm{hard case} & : & f\neq f_{L}+f_{R}\,,\label{eq:case_no_prod}
\end{eqnarray}
where $supp\left(f_{L}\right)\in\mathcal{W}'$ and $supp\left(f_{R}\right)\in\mathcal{W}$.\footnote{In particular, the easy case includes the cases when $W\left(f\right)\in\mathcal{R}{}_{\mathcal{W}}$
or $W\left(f\right)\in\mathcal{R}{}_{\mathcal{W}'}$.} In the following subsections, we deal with each case \eqref{eq:case_prod}
and \eqref{eq:case_no_prod} separately. 

\subsubsection{Easy case: $f=f_{L}+f_{R}$}

In this case, we have that the coherent vector can be written as $W\left(f\right)=W\left(f_{L}\right)W\left(f_{R}\right)$
with $W\left(f_{L}\right)\in\mathcal{R}_{\mathcal{W}'}$ and $W\left(f_{R}\right)\in\mathcal{R}_{\mathcal{W}}$.
This case can be solved in general using the following lemma.
\begin{lem}
\textup{\label{lema_easy_case}Given $\mathcal{R}\subset\mathcal{B}\left(\mathcal{H}\right)$
a vN algebra, $\Omega$ a cyclic and separating and $U\in\mathcal{R}$
and $U'\in\mathcal{R}'$ unitaries. Then $\Phi=U'U\Omega$ is cyclic
and separating and
\begin{equation}
S_{\Omega,\Phi}=US_{\Omega}U'^{*}\,,
\end{equation}
and by polar decomposition we have $J_{\Omega,\Phi}=UJ_{\Omega}U'^{*}$,
$\Delta_{\Omega,\Phi}=U'\Delta_{\Omega}U'^{*}$ and $K_{\Omega,\Phi}=U'K_{\Omega}U'^{*}$.}
\end{lem}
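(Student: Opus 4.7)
The plan is to verify the defining identity of the relative modular involution on the dense subspace $\mathcal{R}\Phi$ for the candidate operator $\tilde{S}:=US_{\Omega}U'^{*}$, and then invoke uniqueness of the closed antilinear operator satisfying that identity. First I would check that $\Phi=U'U\Omega$ is cyclic and separating for $\mathcal{R}$. This follows either by a direct computation, $\overline{\mathcal{R}\Phi}=U'\overline{(\mathcal{R}U)\Omega}=U'\overline{\mathcal{R}\Omega}=U'\mathcal{H}=\mathcal{H}$ with the separating property argued analogously, or simply by applying the earlier Lemma \ref{lema_auto} to the unitary $V:=U'U$, which satisfies $V^{*}\mathcal{R}V=\mathcal{R}$ since $U'$ commutes with $\mathcal{R}$ and $U\in\mathcal{R}$.

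Next, for arbitrary $A\in\mathcal{R}$ I would compute, using $[A,U']=0$ and the defining property of $S_{\Omega}$ applied to $AU\in\mathcal{R}$,
\begin{equation}
\tilde{S}\,A\Phi \;=\; U\,S_{\Omega}\,U'^{*}\,A\,U'\,U\Omega \;=\; U\,S_{\Omega}(AU)\Omega \;=\; U(AU)^{*}\Omega \;=\; A^{*}\Omega.
\end{equation}
Since $\mathcal{R}\Phi$ is a core for $S_{\Omega,\Phi}$ by cyclicity, and $\tilde{S}$ is closed (being a unitary conjugate of the closed operator $S_{\Omega}$), uniqueness of the relative modular involution forces $S_{\Omega,\Phi}=\tilde{S}=US_{\Omega}U'^{*}$.

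For the polar decomposition, I would insert $U'^{*}U'=\mathbf{1}$ between $\Delta_{\Omega}^{1/2}$ and $U'^{*}$, obtaining
\begin{equation}
US_{\Omega}U'^{*} \;=\; UJ_{\Omega}\Delta_{\Omega}^{1/2}U'^{*} \;=\; \bigl(UJ_{\Omega}U'^{*}\bigr)\bigl(U'\Delta_{\Omega}U'^{*}\bigr)^{1/2},
\end{equation}
where I used $U'\Delta_{\Omega}^{1/2}U'^{*}=(U'\Delta_{\Omega}U'^{*})^{1/2}$, which is immediate from the functional calculus for the positive self-adjoint operator $\Delta_{\Omega}$. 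The first factor is antiunitary (composition of unitary, antiunitary, unitary) and the second is positive self-adjoint, so by uniqueness of polar decomposition we read off $J_{\Omega,\Phi}=UJ_{\Omega}U'^{*}$ and $\Delta_{\Omega,\Phi}=U'\Delta_{\Omega}U'^{*}$. Applying $-\log(\cdot)$ and again using the functional calculus yields $K_{\Omega,\Phi}=U'K_{\Omega}U'^{*}$.

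The main obstacle is nothing conceptual: the proof amounts to pushing the unitaries through the definition and invoking uniqueness. The only care required is domain-theoretic, namely checking that $\tilde{S}$ is genuinely closed on the relevant domain and that $\mathcal{R}\Phi$ is a core, which is the standard content already used in the existence statement of the relative modular involution. The key algebraic observation, that $U$ can be absorbed into $A$ on the right while $U'^{*}$ commutes freely through $A$, is precisely why the result factorizes with $U$ on the left and $U'^{*}$ on the right in $S_{\Omega,\Phi}$, and why only $U'$ survives in the conjugation giving $\Delta_{\Omega,\Phi}$ and $K_{\Omega,\Phi}$.
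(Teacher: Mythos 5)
Your proposal is correct and follows essentially the same route as the paper: verify that the candidate operator $US_{\Omega}U'^{*}$ reproduces the defining relation $A\Phi\mapsto A^{*}\Omega$ on $\mathcal{R}\Phi$ by absorbing $U$ into $A$ and commuting $U'^{*}$ through, then invoke uniqueness and read off the polar decomposition. The extra remarks you make about closedness, the core property, and the functional-calculus step $U'\Delta_{\Omega}^{1/2}U'^{*}=(U'\Delta_{\Omega}U'^{*})^{1/2}$ are sound and only make explicit what the paper leaves implicit.
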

\begin{proof}
$\overline{\mathcal{R}\Phi}\mathcal{H}=\overline{\mathcal{R}U'U\Omega}=U'\overline{\mathcal{R}U\Omega}=U'\overline{\mathcal{R}{}_{\mathcal{W}}\Omega}=\overline{\mathcal{R}\Omega}=\mathcal{H}$
implies $\Phi$ is cyclic. Since the same argument holds for $\mathcal{R}'$,
then $\Phi$ is separating for $\mathcal{R}$. For any $A\in\mathcal{R}$,
we have that $\left(US_{\Omega}U'^{*}\right)A\Phi=US_{\Omega}U'^{*}AU'U\Phi=U_{R}S_{\Omega}\left(AU\right)\Omega=U\left(AU\right)^{*}\Omega=A^{*}\Omega$
$\Rightarrow\,S_{\Omega,\Phi}=US_{\Omega}U'^{*}$.
\end{proof}
\begin{cor}
\textup{\label{cor_re_easy}In the context of the above lemma, if
$\Omega$ and $\Phi$ are vector representatives of the states $\omega$
and $\phi$, then $S_{R}\left(\phi\mid\omega\right)=\left\langle \Phi,U'K_{\Omega}U'^{*}\Phi\right\rangle _{\mathcal{H}}=\left\langle \Omega,U^{*}K_{\Omega}U\Omega\right\rangle _{\mathcal{H}}$.}
\end{cor}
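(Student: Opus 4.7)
The corollary is essentially a bookkeeping consequence of the preceding lemma combined with the Araki definition of relative entropy, so my plan is to unfold the definitions and apply the identity $K_{\Omega,\Phi} = U' K_\Omega U'^{*}$ directly.

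First, I would start from the Araki formula
\begin{equation}
S_R(\phi\mid\omega) = \langle \Phi, K_{\Omega,\Phi}\Phi\rangle_{\mathcal{H}},
\end{equation}
which is available because Lemma \ref{lema_easy_case} has already verified that $\Phi = U'U\Omega$ is cyclic and separating for $\mathcal{R}$, so the relative modular Hamiltonian $K_{\Omega,\Phi}$ exists and the definition applies.

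Next, I would invoke the explicit form $K_{\Omega,\Phi} = U' K_\Omega U'^{*}$ established in the lemma to obtain the first equality
\begin{equation}
S_R(\phi\mid\omega) = \langle \Phi, U' K_\Omega U'^{*}\Phi\rangle_{\mathcal{H}}.
\end{equation}
The second equality is then pure manipulation: using unitarity of $U'$ to move it to the other side of the inner product, together with the identity $U'^{*}\Phi = U'^{*}U'U\Omega = U\Omega$, one rewrites
\begin{equation}
\langle \Phi, U' K_\Omega U'^{*}\Phi\rangle_{\mathcal{H}} = \langle U'^{*}\Phi, K_\Omega U'^{*}\Phi\rangle_{\mathcal{H}} = \langle U\Omega, K_\Omega U\Omega\rangle_{\mathcal{H}},
\end{equation}
and finally moves the unitary $U$ across the inner product to land on $\langle \Omega, U^{*}K_\Omega U\Omega\rangle_{\mathcal{H}}$.

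There is really no obstacle here: all the non-trivial content — cyclicity/separating property of $\Phi$, the commutation of $U$ with $\mathcal{R}'$ (hence with $S_\Omega$ from the right through $U'^{*}$), and the resulting conjugation formula for $K_{\Omega,\Phi}$ — has already been absorbed into Lemma \ref{lema_easy_case}. The only remark worth making is that the final expression $\langle\Omega,U^{*}K_\Omega U\Omega\rangle_{\mathcal{H}}$ is independent of $U'$, as it must be, since the state $\phi$ on $\mathcal{R}$ depends only on $U$ (the factor $U' \in \mathcal{R}'$ implements an automorphism of $\mathcal{R}'$ that is invisible from $\mathcal{R}$), and this reflects the fact that relative entropy is a function of states, not of their vector representatives.
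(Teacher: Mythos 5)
Your proof is correct and follows exactly the route the paper intends: the corollary is an immediate unwinding of the Araki definition $S_{R}\left(\phi\mid\omega\right)=\left\langle \Phi,K_{\Omega,\Phi}\Phi\right\rangle _{\mathcal{H}}$ together with the identity $K_{\Omega,\Phi}=U'K_{\Omega}U'^{*}$ and $U'^{*}\Phi=U\Omega$ from Lemma \ref{lema_easy_case}, which is precisely why the paper states it without a separate proof. Your closing remark on the $U'$-independence also matches the paper's own comment following the corollary.
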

The above corollary shows explicitly that the relative entropy does
not depend on the unitary $U'$. This is expected because the relative
entropy is a measure of indistinguishability of the states in $\mathcal{R}$,
and indeed has to be invariant under changes of the states outside
$\mathcal{R}$.

Now we apply the corollary \ref{cor_re_easy} to the case of a coherent
state, i.e. $U=W\left(f_{R}\right)$ with $supp\left(f_{R}\right)\subset\mathcal{W}$.
Remembering that the second quantized Poincaré unitary operator $U\left(\Lambda_{1}^{s},0\right)=\mathrm{e}^{iK_{1}s}$,
acting on the the Fock space $\mathcal{H}$, is constructed from the
Poincaré unitary operator $u\left(\Lambda_{1}^{s},0\right)=\mathrm{e}^{ik_{1}s}$,
acting on the one-particle Hilbert space $\mathfrak{H}$, then we have
that
\begin{equation}
S_{R}\left(\phi\mid\omega\right)=\left\langle \Omega,U^{*}K_{\Omega}U\Omega\right\rangle _{\mathcal{H}}=2\pi\left\langle \Omega,W\left(f_{R}\right)^{*}K_{1}W\left(f_{R}\right)\Omega\right\rangle _{\mathcal{H}}=2\pi\left\langle f_{R},k_{1}f_{R}\right\rangle _{\mathfrak{H}}\,,\label{eq:rel_ent_easy}
\end{equation}
where the last equality is fully calculated in Appendix \eqref{subsec:appendix_coherent}.
Thus, the relative entropy between the coherent state and the vacuum,
can be expressed, in the one-particle Hilbert space $\mathfrak{H}$,
in terms of the expectation value of the boost operator $k_{1}$ in
the vector $E\left(f\right)\in\mathfrak{H}$ which generates the coherent
state. At the end, following from \ref{eq:rel_ent_easy} we get the
following theorem.
\begin{thm}
\textup{\label{thm_sr_coh_par}Let $f_{L},f_{R}\in\mathcal{S}\left(\mathbb{R}^{d},\mathbb{R}\right)$
with $supp\left(f_{L}\right)\in\mathcal{W}$ and $supp\left(f_{R}\right)\in\mathcal{W}'$,
and $f=f_{L}+f_{R}$. Then the relative entropy between the coherent
state $\omega_{f}$ and the vacuum $\omega$, for the right Rindler
wedge algebra $\mathcal{R}_{\mathcal{W}}$, is
\begin{equation}
S_{R}\left(\omega_{f}\mid\omega\right)=2\pi\int_{x^{1}>0}\negthickspace d^{d-1}x\,x^{1}\left.\frac{1}{2}\left(\left(\frac{\partial F}{\partial x^{0}}\right)^{2}+\left|\nabla F\right|^{2}+m^{2}F^{2}\right)\right|_{x^{0}=0},\label{eq:rel_ent_easy2}
\end{equation}
where $F\left(x\right)=\int_{\mathbb{R}^{d}}\Delta\left(x-y\right)\,f\left(y\right)d^{d}y=\int_{\mathbb{R}^{d}}\Delta\left(x-y\right)\,\left[f_{L}\left(y\right)+f_{R}\left(y\right)\right]d^{d}y$
. In addition, formula \ref{eq:rel_ent_easy2} does not depend in
the function $f_{L}$ (with support in $\mathcal{W}'$) chosen.}
\end{thm}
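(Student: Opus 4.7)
The plan is to start from identity \eqref{eq:rel_ent_easy}, namely $S_R(\omega_f\mid\omega)=2\pi\langle f_R,k_1 f_R\rangle_{\mathfrak{H}}$, and to identify the one-particle boost expectation with the classical boost charge of the Klein--Gordon solution $F_R(x)=\int\Delta(x-y)f_R(y)\,d^dy$ evaluated on the Cauchy surface $x^0=0$. The content of the theorem then splits into two parts: first, the core identity
\begin{equation*}
\langle f_R,k_1 f_R\rangle_{\mathfrak{H}}=\int_{\mathbb{R}^{d-1}}d^{d-1}x\,x^1\cdot\tfrac{1}{2}\bigl((\partial_0 F_R)^2+|\nabla F_R|^2+m^2 F_R^2\bigr)\Big|_{x^0=0};
\end{equation*}
second, a short causality argument that both restricts the integration region to $\{x^1>0\}$ and establishes independence from $f_L$.

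For the core identity I would work on the one-particle Hilbert space in the momentum representation. Differentiating the defining relation $(u(\Lambda_1^s)\boldsymbol f)(\bar p)=\boldsymbol f(\Lambda_1^{-s}\bar p)$ at $s=0$ shows that $k_1$ acts on the mass shell as the first-order differential operator $k_1=i\omega(\bar p)\,\partial_{p^1}$. I would then substitute the decomposition $E(f_R)=E_\varphi(f_\varphi^{(R)})+E_\pi(f_\pi^{(R)})$ of Section \ref{subsec:Relation-between-algebras}, where $(f_\varphi^{(R)},f_\pi^{(R)})$ are the Cauchy data of $F_R$ given by \eqref{eq:f_fixed_time}. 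Expanding the inner product, integrating by parts in $p^1$ (with Schwartz decay killing boundary terms), and applying Plancherel turns the expression into a spatial integral whose integrand, after collecting like terms, is exactly $\tfrac{1}{2}((f_\varphi^{(R)})^2+|\nabla f_\pi^{(R)}|^2+m^2(f_\pi^{(R)})^2)$ weighted by $x^1$. Using \eqref{eq:f_fixed_time} to translate back $(f_\varphi^{(R)},f_\pi^{(R)})=(-\partial_0 F_R,F_R)|_{x^0=0}$ then yields the target identity.

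The causal argument is essentially a one-liner. Since $\Delta$ is supported in the closed light cone and $\mathrm{supp}(f_R)\subset\mathcal{W}$, the solution $F_R$ vanishes at every point spacelike to $\mathcal{W}$. Any point $(0,\bar x)$ with $x^1<0$ is spacelike to the entire right wedge, so both $F_R$ and $\partial_0 F_R$ vanish there, and the integration reduces to $\{x^1>0\}$. By the mirror argument for $F_L$, whose Cauchy data vanishes on $\{x^1>0\}$ because $\mathrm{supp}(f_L)\subset\mathcal{W}'$, we may replace $F_R$ by $F=F_L+F_R$ inside the $\{x^1>0\}$ integral without changing its value. This simultaneously yields the stated spacetime formula and its independence from $f_L$.

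The main obstacle will be the momentum-space bookkeeping of the core identity: $k_1$ mixes the $\varphi$ and $\pi$ sectors through the $\omega(\bar p)$ factors appearing in $E_\varphi,E_\pi$, so one must execute the integration by parts carefully and then algebraically recombine several cross terms into the final manifestly positive combination $(\partial_0 F_R)^2+|\nabla F_R|^2+m^2F_R^2$. A cleaner but essentially equivalent route is to invoke the classical--quantum correspondence $\langle W(f)\Omega,K_1 W(f)\Omega\rangle_{\mathcal H}=\langle f,k_1 f\rangle_{\mathfrak{H}}$ (as in Appendix \ref{subsec:appendix_coherent}) together with the identification of $K_1$ as $\int d^{d-1}x\,x^1\,T_{00}^{\mathrm{can}}$, whose vacuum-subtracted coherent-state expectation equals the classical stress tensor evaluated on $F_R$; this reduces, however, to exactly the same Cauchy-data calculation.
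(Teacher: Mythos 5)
Your proposal is correct and follows essentially the same route as the paper: both start from $S_R(\omega_f\mid\omega)=2\pi\langle f_R,k_1f_R\rangle_{\mathfrak H}$, identify this one-particle boost expectation with the classical boost energy of the Cauchy data of $F_R$ (the paper's Appendix C does this by differentiating $\mathrm{Im}\langle f,f^s\rangle_{\mathfrak H}$ at $s=0$ in position space rather than by your momentum-space integration by parts with $k_1=i\omega(\bar p)\partial_{p^1}$, but the two calculations are equivalent), and then both use the identical causal-support argument to restrict the integral to $x^1>0$ and to establish independence from $f_L$. No gaps.
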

\begin{proof}
A straightforward calculation explained in appendix \ref{subsec:appendix_2}
allows us to rewrite the expression \eqref{eq:rel_ent_easy} as equation
\eqref{apend_sr_easy}. However, there are already two differences
between \eqref{apend_sr_easy} and \eqref{eq:rel_ent_easy2} (beyond
the obvious $2\pi$ in front of the expression). The first one is
that in \eqref{apend_sr_easy} the integral is along the whole space
$\mathbb{R}^{d-1}$, and the second one is that the function $F$
in \eqref{apend_sr_easy} is computed using only $f_{R}$. To finally
pass from \eqref{apend_sr_easy} to \eqref{eq:rel_ent_easy2} we have
to make the following two changes. First notice that because $supp\left(f_{R}\right)\subset\mathcal{W}\;\Rightarrow supp\left(\left.F\right|_{x^{0}=0}\right)\subset\Sigma$, this allows us to replace the integration region in \eqref{apend_sr_easy}
by $\Sigma$. Similarly, because $supp\left(f_{L}\right)\subset\mathcal{W}'\;\Rightarrow$
the function $F_{L}\left(x\right):=\int_{\mathbb{R}^{d}}\Delta\left(x-y\right)\,f_{L}\left(y\right)d^{d}y$
vanishes along $\Sigma$ and hence \eqref{eq:rel_ent_easy2} holds.
This also implies that \eqref{eq:rel_ent_easy2} does not depend
on $f_{L}$.
\end{proof}
As a remark, the outcome of the above theorem coincides with \eqref{cali}
for the canonical stress tensor \eqref{eq:stress}.

\subsubsection{Hard case: $f\protect\neq f_{L}+f_{R}$ \label{subsec:Hard-case}}

In this section, we assume that the function $f\in\mathcal{S}\left(\mathbb{R}^{d},\mathbb{R}\right)$
has $supp\left(f\right)\not\subset\mathcal{W},\mathcal{W}'$. Moreover,
we assume that $supp\left(f\right)$ is compact in order to avoid
some possible complications coming from integrals along regions of
infinite size. At the end, we are interested in the behavior of the
relative entropy around the boundary of the wedge region $\partial\Sigma=\left\{ \bar{x}\in\mathbb{R}^{d-1}\,:\,x^{1}=0\right\} $,
which can be captured with a compactly supported coherent state.

Before we continue, we remark that, in this case, the relative
entropy must be finite. The proof is as follows. Since $supp\left(f\right)$
is compact, there exists a ``bigger'' right wedge $\tilde{\mathcal{W}}_{R}\supset\mathcal{W}$
such that $W\left(f\right)\in\tilde{\mathcal{W}}_{R}$. Then the relative
entropy between this coherent and the vacuum in the algebra $\mathcal{R}\left(\tilde{\mathcal{W}}_{R}\right)$
is as the one computed in the previous section, which is finite because
the generating function $f$ is smooth. Then by monotonicity, the
relative entropy for the original wedge $\mathcal{W}$ must be finite.
In particular, we are allowed to use expression \eqref{eq:rel_ent_f}.

The first question which arises is whether we could split the unitary into
two unitaries, one belonging to the right wedge $\mathcal{W}$ and
the other to the left wedge $\mathcal{W}'$. In other words, if there
exists unitaries $U_{R}\in\mathcal{R}_{\mathcal{W}}$ and $U_{L}\in\mathcal{R}_{\mathcal{W}'}$
unitaries such that $W\left(f\right)=U_{L}U_{R}$. Unfortunately the
answer is no, almost for the most general interesting case. This fact
arises when we try to explicitly split $W\left(f\right)$. To begin,
it seems natural to split the function $f$ simply as
\begin{eqnarray}
f_{R}\left(x\right)\!\!\!\! & := & \!\!\!\!\Theta_{\mathcal{W}}\left(x\right)f\left(x\right)\,,\label{eq:partir_mal}\\
f_{L}\left(x\right)\!\!\!\! & := & \!\!\!\!\Theta_{\mathcal{W}'}\left(x\right)f\left(x\right)\,,
\end{eqnarray}
where $\Theta_{\mathcal{W}}$ is the characteristic function of the
right Rindler wedge (equivalently for $\Theta_{\mathcal{W}'}$). However,
it leads to a wrong result, since $f_{R}+f_{L}\neq f$. Moreover,
if for example we start with a function $f$ supported in the
upper light cone $V^{+}:=\left\{ x\in\mathbb{R}^{d}\,:\,x^{0}>\left|\bar{x}\right|\right\} $,
then equation \eqref{eq:partir_mal} implies that $f_{R}\equiv0$
and hence we obtain $S_{R}\left(\phi\mid\omega\right)=0$, which is
obviously the wrong result. To make a consistent splitting, we must
use the relations explained in subsection \ref{subsec:Relation-between-algebras}.
Given the spacetime function $f\in\mathcal{S}\left(\mathbb{R}^{d},\mathbb{R}\right)$
we can construct $f_{\varphi},\,f_{\pi}\in\mathcal{S}\left(\mathbb{R}^{d-1},\mathbb{R}\right)$
satisfying the relation \eqref{eq:segal_vs_weyl}. The correct result
is to split these functions $f_{\varphi},\,f_{\pi}$, which are the
initial data at $x^{0}=0$ of the Klein-Gordon solution generated
by $f$. The assumption $supp\left(f\right)\not\subset\mathcal{W},\mathcal{W}'$
implies that an open neighborhood of the origin $x=0$ is included
in the supports of $f_{\varphi}$ and $f_{\pi}$. Now, we write
\begin{eqnarray}
f_{\varphi}=f_{\varphi,L}+f_{\varphi,R} & \mathrm{and} & f_{\pi}=f_{\pi,L}+f_{\pi,R}\:\textrm{,}
\end{eqnarray}
with $supp\left(f_{\nu,L}\right)\in\Sigma'$ and $supp\left(f_{\nu,R}\right)\in\Sigma$
($\nu=\varphi,\,\pi$). The right way to do this is taking
\begin{eqnarray}
f_{\nu,L}\left(\bar{x}\right):=f_{\nu}\left(\bar{x}\right)\cdot\Theta\left(-x^{1}\right) & \mathrm{and} & f_{\nu,R}\left(\bar{x}\right):=f_{\nu}\left(\bar{x}\right)\cdot\Theta\left(x^{1}\right)\,,
\end{eqnarray}
where $\Theta$ is the usual step Heaviside function. The problem
is that $f_{\nu,L}$ and $f_{\nu,R}$ are no longer smooth, and nothing
guarantees that $E_{\nu}\left(f_{\nu,R}\right)\in\mathfrak{H}_{\nu}$
(the same problem occurs for $f_{\nu,L}$). More precisely, since
$f_{\nu,R}\in L^{2}\left(\mathbb{R}^{d-1},\mathbb{R}\right)=H^{0}\left(\mathbb{R}^{d-1},\mathbb{R}\right)$,
and because of the inclusions (see Appendix \ref{subsec:appendix_Sobolev})
\begin{equation}
H^{\frac{1}{2}}\left(\mathbb{R}^{d-1},\mathbb{R}\right)\subset H^{0}\left(\mathbb{R}^{d-1},\mathbb{R}\right)\subset H^{-\frac{1}{2}}\left(\mathbb{R}^{d-1},\mathbb{R}\right)\,,
\end{equation}
we have that $f_{\varphi,R}\in\mathfrak{H}_{\varphi}$ but $f_{\pi,R}\notin\mathfrak{H}_{\pi}$.
In other words, $f_{\pi,R}$ is not an appropriate smear function
for the canonical conjugate field $\pi\left(\bar{x}\right)$. This
problem does not arise because the test function is no longer smooth,
it is just because $f_{\pi,R}$ is no longer continuous. On the other
hand, if $f_{\pi,R}$ is continuous, the problem can be solved due
to the following lemma. 
\begin{thm}
\label{par:lemma-1}\textup{Let $f\in L^{2}\left(\mathbb{R}^{n}\right)\cap C^{0}\left(\mathbb{R}^{n}\right)\cap C_{t}^{1}\left(\mathbb{R}^{n}\right)$
and $\partial_{j}f\in L^{2}\left(\mathbb{R}^{n}\right)$ for $j=1,\ldots,n$.}\footnote{$C_{t}^{1}\left(\mathbb{R}^{n}\right)$ is the set of piecewise differentiable
functions. See Appendix \ref{subsec:appendix_Sobolev} for a proper
definition.}\textup{ Then $f\in H^{1}\left(\mathbb{R}^{n}\right)$.} 
\end{thm}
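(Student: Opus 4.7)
The plan is to show that the classical almost-everywhere-defined partial derivatives of $f$ coincide with its distributional derivatives; combined with $f\in L^{2}$ and the classical derivatives being in $L^{2}$ by hypothesis, this immediately yields $f\in H^{1}$. Throughout, let $F_{j}$ denote the classical $j$-th partial derivative of $f$, defined almost everywhere and belonging to $L^{2}(\mathbb{R}^{n})$ by assumption.

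First I would reduce the problem to a one-dimensional identity via Fubini. Fix a test function $\phi\in C_{c}^{\infty}(\mathbb{R}^{n})$; it suffices to prove
\begin{equation}
\int_{\mathbb{R}^{n}} f(x)\,\partial_{j}\phi(x)\,d^{n}x \;=\; -\int_{\mathbb{R}^{n}} F_{j}(x)\,\phi(x)\,d^{n}x.
\end{equation}
Writing $\hat{x}:=(x_{1},\ldots,\widehat{x_{j}},\ldots,x_{n})$ and applying Fubini, this reduces to showing that for almost every $\hat{x}\in\mathbb{R}^{n-1}$,
\begin{equation}
\int_{\mathbb{R}}\bigl[f(\hat{x},t)\,\partial_{j}\phi(\hat{x},t)+F_{j}(\hat{x},t)\,\phi(\hat{x},t)\bigr]\,dt \;=\; 0.
\end{equation}
The integrand is, at every point where $f$ is differentiable in the $j$-th variable, exactly the classical derivative of the one-dimensional function $g_{\hat{x}}(t):=f(\hat{x},t)\,\phi(\hat{x},t)$.

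Next I would invoke the piecewise-$C^{1}$ hypothesis together with the global continuity of $f$. For almost every $\hat{x}$, the function $g_{\hat{x}}$ is compactly supported (inheriting this from $\phi$), continuous (inheriting this from $f\cdot\phi$), and piecewise $C^{1}$ on its support, with the exceptional set on the line being negligible. The elementary fundamental theorem of calculus for continuous, piecewise-$C^{1}$, compactly supported one-variable functions then yields $\int_{\mathbb{R}} g_{\hat{x}}'(t)\,dt = 0$: on each smooth piece one uses the ordinary fundamental theorem, and the boundary contributions at the interior break-points cancel exactly because $g_{\hat{x}}$ is continuous there (while those at $\pm\infty$ vanish by compact support). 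This is precisely the required one-dimensional equation, and integrating over $\hat{x}$ finishes the argument.

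The main obstacle, and the place that demands care, is the measurability-and-slicing step: one must know that, for almost every $\hat{x}$, the restriction $t\mapsto f(\hat{x},t)$ is itself continuous and piecewise $C^{1}$, that its one-variable classical derivative agrees a.e.\ with $F_{j}(\hat{x},\cdot)$, and that $F_{j}(\hat{x},\cdot)\in L^{2}(\mathbb{R})$ so the integrals make sense. All of this rests on the precise definition of $C_{t}^{1}(\mathbb{R}^{n})$ given in Appendix \ref{subsec:appendix_Sobolev}: the set where $f$ fails to be differentiable is a negligible (typically lower-dimensional) set that, by Fubini, meets almost every axis-parallel line in a set of one-dimensional measure zero and at only countably many isolated break-points on each generic slice. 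Once this technicality is in place, the three steps above combine directly to conclude $f\in H^{1}(\mathbb{R}^{n})$.
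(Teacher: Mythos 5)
Your proof is correct in substance but follows a genuinely different route from the paper's. The paper reduces everything to Lemma \ref{lem:derivatives}: it invokes a result of Scholtes to the effect that a continuous, piecewise-$C^{1}$ function is locally Lipschitz, hence locally absolutely continuous, hence weakly differentiable with weak derivative equal to the pointwise one a.e.; combined with the characterization \eqref{eq: sobolev_2} of $H^{1}$ in terms of weak derivatives, the theorem then follows in one line. You instead verify the defining integration-by-parts identity directly, reducing by Fubini to one-dimensional slices and applying the fundamental theorem of calculus to the continuous, compactly supported, piecewise-$C^{1}$ slice functions $g_{\hat{x}}$, with the interior boundary terms cancelling by continuity of $f$. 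Your route is more self-contained (no external Lipschitz lemma) and makes visible exactly where the global continuity of $f$ enters, at the price of the slicing technicalities you flag. One caution there: what you actually need on a generic slice is that the break points form a finite or countable closed set of isolated points over which one can telescope, not merely a set of one-dimensional measure zero --- continuity plus $C^{1}$ off a null set plus an $L^{2}$ pointwise derivative is \emph{not} sufficient (the Cantor function is the standard counterexample), so the first half of your stated slicing claim would not by itself close the argument. For the decompositions to which the paper applies the theorem (a smooth function cut by $\Theta(x^{1})$, so the exceptional set is a hyperplane meeting each transverse line in a single point), this is automatic; and the paper's own proof is no more careful on this score, since its definition of $C_{t}^{1}$ does not literally match the hypotheses of the Scholtes corollary it cites.
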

\begin{proof}
See appendix \ref{subsec:appendix_Sobolev}. 
\end{proof}
Then, having this in mind, the strategy we adopt below is to make
a splitting for some other smear function which, by construction,
we know is continuous. 

\subsubsection{A lemma for the relative modular flow}

In this subsection, we prove a lemma that gives a general expression
for the relative modular flow, under the assumption that some non-local
operator can be written as a product of two new operators, one belonging
to $\mathcal{R}$ and another to $\mathcal{R}'$. In the following subsection,
we prove that this assumption is already valid for the free hermitian
scalar field. For simplicity and due to the symmetry relation \eqref{sym-sr},
in the following we work with the modular operator $\Delta_{\Phi,\Omega}$
instead of $\Delta_{\Omega,\Phi}$.

As a motivation, we remember that, contrary to the modular flow $\Delta_{\Omega}^{it}$
and the relative modular flow $\Delta_{\Phi,\Omega}^{it}$, the Connes
Radon-Nikodym cocycle $u_{\Phi,\Omega}\left(t\right)=\Delta_{\Phi,\Omega}^{it}\Delta_{\Omega}^{it}$
belongs to the algebra $\mathcal{R}$. This makes us think that the
computation of $u_{\Phi,\Omega}\left(t\right)$ may involve the splitting
of some test function, which at the end, will lead to a well-defined
operator. To gain some intuition, using lemmas \ref{lema_auto} and
\ref{lema_easy_case}, we know that
\begin{equation}
u_{\Phi,\Omega}\left(t\right)=U^{*}\Delta_{\Omega}^{it}U\Delta_{\Omega}^{-it}\quad\textrm{when }\Phi=U'U\Omega\textrm{ with }U\in\mathcal{R},\,U'\in\mathcal{R}'\,.
\end{equation}
This expression motivates the following lemma.
\begin{lem}
\textup{\label{l_par_uni}Let $\mathcal{R}\subset\mathcal{B}\left(\mathcal{H}\right)$
be a vN factor,}\footnote{A vN algebra $\mathcal{R}\subset\mathcal{B}\left(\mathcal{H}\right)$
is said to be a \textit{factor} if its center is trivial, i.e. $\mathcal{R}\cap\mathcal{R}'=\left\{ \lambda\cdot\mathbf{1}\right\} $.}\textup{ $\Omega$ a cyclic and separating vector, $U\in\mathcal{B}\left(\mathcal{H}\right)$
a unitary such $U^{*}\mathcal{R}U=\mathcal{R}$ and $\Phi=U\Omega$.
If there exists a family of unitaries}\footnote{They are not necessarily one-parameter groups for $t\in\mathbb{R}$.}\textup{
$V\left(t\right)\in\mathcal{R},\:V'\left(t\right)\in\mathcal{R}'$
such that
\begin{align}
 & \begin{cases}
U^{*}\Delta_{\Omega}^{it}U\Delta_{\Omega}^{-it}=V\left(t\right)V'\left(t\right)\,,\quad\forall t\in\mathbb{R}\,,\\
V\left(0\right)=V'\left(0\right)=\mathbf{1}\,.
\end{cases}\label{eq:partir_unitario}
\end{align}
Then there exists a real function $\alpha:\mathbb{R}\rightarrow\mathbb{R}$
with $\alpha\left(0\right)=0$ such
\begin{equation}
\Delta_{\Phi,\Omega}^{it}=\mathrm{e}^{-i\alpha\left(t\right)}V\left(t\right)\Delta_{\Omega}^{it}\,\textrm{.}\label{eq:rel_mod_decom_2}
\end{equation}
}
\end{lem}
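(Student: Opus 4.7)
The plan is to prove the identity by showing that the candidate unitary $W(t) := V(t)\Delta_\Omega^{it}$ and the relative modular flow $\Delta_{\Phi,\Omega}^{it}$ induce the same inner automorphism of $\mathcal{B}(\mathcal{H})$. Their ratio $C(t) := W(t)\Delta_{\Phi,\Omega}^{-it}$ would then lie in $\mathcal{R}\cap\mathcal{R}'$, which, by the factor hypothesis, equals $\mathbb{C}\mathbf{1}$. Since $C(t)$ is a product of unitaries it is unitary, so $C(t) = e^{-i\alpha(t)}\mathbf{1}$ for some real $\alpha(t)$; evaluating at $t=0$ with $V(0)=V'(0)=\mathbf{1}$ and $\Delta_{\Phi,\Omega}^{0}=\mathbf{1}$ forces $\alpha(0)=0$, and rearranging gives the desired formula.

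I would first verify the easier commutation with $\mathcal{R}'$. For $A'\in\mathcal{R}'$, Tomita--Takesaki gives $\sigma_t^\Omega(A') := \Delta_\Omega^{it}A'\Delta_\Omega^{-it}\in\mathcal{R}'$, so it commutes with $V(t)\in\mathcal{R}$, yielding
\[
W(t)A'W(t)^{*} = V(t)\,\sigma_t^\Omega(A')\,V(t)^{*} = \sigma_t^\Omega(A').
\]
By \eqref{eq:rel_mod_flow} this is exactly $\Delta_{\Phi,\Omega}^{it}A'\Delta_{\Phi,\Omega}^{-it}$, so $C(t)$ commutes with $\mathcal{R}'$ and therefore lies in $(\mathcal{R}')' = \mathcal{R}$.

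The main technical step is the analogous matching on $\mathcal{R}$: one must show $V(t)\sigma_t^\Omega(A)V(t)^{*} = \sigma_t^\Phi(A)$ for all $A\in\mathcal{R}$, where the right-hand side equals $U\sigma_t^\Omega(U^{*}AU)U^{*}$ by Lemma~\ref{lema_auto} together with the normalizing hypothesis $U^{*}\mathcal{R}U=\mathcal{R}$. The crucial input here is assumption \eqref{eq:partir_unitario}, which, after multiplication by $U$ and taking adjoints, reads $\sigma_t^\Omega(U) = UV(t)V'(t)$ and $\sigma_t^\Omega(U^{*}) = V'(t)^{*}V(t)^{*}U^{*}$. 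I would substitute these into $V(t)\sigma_t^\Omega(A)V(t)^{*}$ and then repeatedly use that $V'(t)\in\mathcal{R}'$ commutes with $\sigma_t^\Omega(A)\in\mathcal{R}$ to move the $V'(t)$-factors next to each other so that they cancel, leaving the desired expression in terms of $U$, $\sigma_t^\Omega(U^{\pm})$, and $\sigma_t^\Omega(A)$ alone. I expect this algebraic cancellation to be the principal obstacle: it is where the interplay between the normalizing property of $U$ and the $\mathcal{R}/\mathcal{R}'$ splitting of the cocycle $U^{*}\Delta_\Omega^{it}U\Delta_\Omega^{-it}$ has to be used in a non-trivial way. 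Once the equality is secured, $C(t)$ commutes with $\mathcal{R}$ as well, so $C(t)\in\mathcal{R}\cap\mathcal{R}'=\mathbb{C}\mathbf{1}$, and the phase extraction in the opening paragraph completes the proof.
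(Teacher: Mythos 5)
Your proposal follows the paper's proof essentially verbatim: show that $V(t)\Delta_{\Omega}^{it}$ implements the same automorphism as $\Delta_{\Phi,\Omega}^{it}$ on both $\mathcal{R}$ and $\mathcal{R}'$, and then invoke factoriality to conclude that the two unitaries differ by a phase, fixed to $1$ at $t=0$. The one step you defer as "the principal obstacle" is exactly the paper's two-line telescoping — insert $V'(t)\cdots V'(t)^{*}$ around $\Delta_{\Omega}^{it}A\Delta_{\Omega}^{-it}$ (free, since $V'(t)\in\mathcal{R}'$ commutes with it), substitute the cocycle hypothesis for $V(t)V'(t)$, and let the $\Delta_{\Omega}^{\pm it}$ factors collapse to $\Delta_{\Phi}^{it}A\Delta_{\Phi}^{-it}=\Delta_{\Phi,\Omega}^{it}A\Delta_{\Phi,\Omega}^{-it}$ — so your plan goes through.
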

\begin{proof}
We first see that $V\left(t\right)\Delta_{\Omega}^{it}$ has the same
action as $\Delta_{\Phi,\Omega}^{it}$ over every$A\in\mathcal{R}$
and $A'\in\mathcal{R}'$. Indeed
\begin{eqnarray}
\mathcal{R}\ni V\left(t\right)\Delta^{it}A\Delta^{-it}V\left(t\right)^{*}\!\!\!\! & = & \!\!\!\!V\left(t\right)V'\left(t\right)\Delta_{\Omega}^{it}A\Delta_{\Omega}^{-it}V\left(t\right)^{*}V'\left(t\right)^{*}=U\Delta_{\Omega}^{it}U^{*}\Delta_{\Omega}^{-it}\,\Delta_{\Omega}^{it}A\Delta_{\Omega}^{-it}\,\Delta_{\Omega}^{it}U\Delta_{\Omega}^{-it}U^{*}\nonumber \\
 & = & \!\!\!\!U\Delta_{\Omega}^{it}U^{*}AU\Delta_{\Omega}^{-it}U^{*}=\Delta_{\Phi}^{it}A\Delta_{\Phi}^{-it}=\Delta_{\Phi,\Omega}^{it}A\Delta_{\Phi,\Omega}^{-it}\,\textrm{,}
\end{eqnarray}
where we have used 2. in lemma \ref{lema_auto}. Similarly,
\begin{equation}
V\left(t\right)\underset{\in\mathcal{R}{}_{\mathcal{W}'}}{\underbrace{\Delta_{\Omega}^{it}A'\Delta_{\Omega}^{-it}}}V\left(t\right)^{*}=V\left(t\right)V\left(t\right)^{*}\,\Delta_{\Omega}^{it}A'\Delta_{\Omega}^{-it}=\Delta_{\Omega}^{it}A'\Delta_{\Omega}^{-it}=\Delta_{\Phi,\Omega}^{it}A'\Delta_{\Phi,\Omega}^{-it}\,.
\end{equation}
Then for all $B\in\mathcal{R}\cup\mathcal{R}'$ we have
\begin{equation}
\left(V\left(t\right)\Delta_{\Omega}^{it}\right)B\left(V\left(t\right)\Delta_{\Omega}^{it}\right)^{*}=\Delta_{\Phi,\Omega}^{it}B\Delta_{\Phi,\Omega}^{-it}\:\Rightarrow\:\left[B,\left(V\left(t\right)\Delta_{\Omega}^{it}\right)^{*}\Delta_{\Phi,\Omega}^{it}\right]=0\,,
\end{equation}
and hence $\left(V\left(t\right)\Delta_{\Omega}^{it}\right)^{*}\Delta_{\Phi,\Omega}^{it}$
belongs to the center $\left(\mathcal{R}\cup\mathcal{R}'\right)'=\mathcal{R}\cap\mathcal{R}'=\left\{ \lambda\cdot\mathbf{1}\right\} $,
which is trivial since $\mathcal{R}$ is a factor. This means that
there exists a function $\lambda:\mathbb{R}\rightarrow\mathbb{C}$
such that
\begin{equation}
\Delta_{\Phi,\Omega}^{it}=\lambda\left(t\right)V\left(t\right)\Delta_{\Omega}^{it}\,.\label{eq:rel_mod_decom}
\end{equation}
Moreover, evaluating the above expression at $t=0$ we get that $\lambda\left(0\right)=1$.
Finally, since all operators in \eqref{eq:rel_mod_decom} are unitaries, we have that $\lambda\left(t\right)=\mathrm{e}^{-i\alpha\left(t\right)}$
for some function $\alpha:\mathbb{R}\rightarrow\mathbb{R}$ with $\alpha\left(0\right)=0$,
and then \eqref{eq:rel_mod_decom_2} holds.
\end{proof}
Under the hypothesis of the above lemma, we obtain the relative
modular Hamiltonian deriving \eqref{eq:rel_mod_decom_2} at $t=0$,
\begin{equation}
K_{\Phi,\Omega}=i\,\left.\frac{\mathrm{d}}{\mathrm{d}t}\right|_{t=0}\Delta_{\Phi,\Omega}^{it}=i\,\left.\frac{\mathrm{d}}{\mathrm{d}t}\right|_{t=0}\mathrm{e}^{-i\alpha\left(t\right)}V\left(t\right)\Delta_{\Omega}^{it}=\alpha'\left(0\right)\mathbf{1}+i\dot{V}\left(0\right)+K_{\Omega}\,,\label{eq:rel_mod_ham_p}
\end{equation}
where the derivative in \eqref{eq:rel_mod_ham_p} has to be understood
as a limit in the strong operator topology of $\mathcal{H}$. This
formula gives a well-defined expression for the relative modular Hamiltonian
up to a constant. One way to determine such a constant is using that
$\Delta_{\Phi,\Omega}^{it}$ is a one-parameter group of unitaries and
must fulfil the concatenation equation
\begin{eqnarray}
\Delta_{\Phi,\Omega}^{it_{1}}\Delta_{\Phi,\Omega}^{it_{2}}=\Delta_{\Phi,\Omega}^{i\left(t_{1}+t_{2}\right)}\,, &  & \forall t_{1},t_{2}\in\mathbb{R}\,.\label{eq:concatenar}
\end{eqnarray}
We discuss the computation to determine $\alpha'\left(0\right)$
in subsection \ref{subsec:determ_alpha}.

\subsubsection{Relative modular flow for coherent states\label{subsec:Computation-the-splitting}}

In this subsection, we apply lemma \ref{l_par_uni} for the theory
of a real scalar field. More concretely, we show that the splitting
of such lemma can be done for a general coherent state. Indeed we
have the following theorem.
\begin{thm}
\textup{\label{th_mod_rel_coh}For the algebra of the Rindler wedge
$\mathcal{R}_{\mathcal{W}}$, a Weyl unitary $U=W\left(f\right)$
with $f\in\mathcal{S}\left(\mathbb{R}^{d},\mathbb{R}\right)$, the
vacuum vector $\Omega$ and $\Phi=U\Omega$, the hypothesis in lemma
\eqref{l_par_uni} holds. In particular we have that
\begin{equation}
\Delta_{\Phi,\Omega}^{it}=\mathrm{e}^{i\alpha\left(s\right)}W_{\varphi}\left(g_{\varphi,R}^{s}\right)W_{\pi}\left(g_{\pi,R}^{s}\right)\Delta_{\Omega}^{it}=\mathrm{e}^{i\alpha\left(s\right)}\mathrm{e}^{i\varphi\left(g_{\varphi,R}^{s}\right)}\mathrm{e}^{i\pi\left(g_{\pi,R}^{s}\right)}\mathrm{e}^{isK_{1}}\:\textrm{,}\label{rel_mod_flow_coh}
\end{equation}
where we have denoted $s:=-2\pi t$ and
\begin{eqnarray}
g_{\varphi,R}^{s}\left(\bar{x}\right)\!\!\!\! & = & \!\!\!\!-\frac{\partial G^{s}}{\partial x^{0}}\left(0,\bar{x}\right)\Theta\left(x^{1}\right)\,,\\
g_{\pi,R}^{s}\left(\bar{x}\right)\!\!\!\! & = & \!\!\!\!G^{s}\left(0,\bar{x}\right)\Theta\left(x^{1}\right)\,,\\
G^{s}\left(x\right)\!\!\!\! & = & \!\!\!\!\int_{\mathbb{R}^{d}}\Delta\left(x-y\right)\,\left[f\left(\Lambda_{1}^{-s}y\right)-f\left(y\right)\right]d^{d}y\,.
\end{eqnarray}
}
\end{thm}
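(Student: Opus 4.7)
The plan is to verify the hypotheses of Lemma \ref{l_par_uni} for $\mathcal{R} = \mathcal{R}_{\mathcal{W}}$ and $U = W(f)$; once we have the factorization $V(t)V'(t)$ with $V(t)\in\mathcal{R}_{\mathcal{W}}$, $V'(t)\in\mathcal{R}_{\mathcal{W}'}$, formula \eqref{rel_mod_flow_coh} follows directly. Three of the four hypotheses are automatic: $\mathcal{R}_{\mathcal{W}}$ is known to be a factor (type III$_1$ for the free scalar); $\Omega$ is cyclic and separating by Reeh--Schlieder; and $W(f)^{*}\mathcal{R}_{\mathcal{W}}\,W(f)=\mathcal{R}_{\mathcal{W}}$ by the Weyl relation \eqref{eq:case_3}. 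The real work is producing the splitting in \eqref{eq:partir_unitario}.

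Writing $s:=-2\pi t$ and using Bisognano--Wichmann, $\Delta_{\Omega}^{it}=e^{isK_{1}}$; by Poincaré covariance of the Weyl map, $\Delta_{\Omega}^{it}W(f)\Delta_{\Omega}^{-it}=W(f\circ\Lambda_{1}^{s})$. The Weyl CCR then give
\begin{equation}
W(f)^{*}\Delta_{\Omega}^{it}W(f)\Delta_{\Omega}^{-it}=e^{i\beta(s)}W(h^{s}),\qquad h^{s}(x):=f(\Lambda_{1}^{s}x)-f(x),
\end{equation}
for an explicit phase $\beta(s)$. Next I would pass to fixed-time data using \eqref{eq:segal_vs_weyl}, obtaining $W(h^{s})=e^{i\gamma(s)}W_{\varphi}(g_{\varphi}^{s})W_{\pi}(g_{\pi}^{s})$, where $g_{\varphi}^{s},g_{\pi}^{s}$ are the Cauchy data at $x^{0}=0$ of $G^{s}(x)=\int\Delta(x-y)h^{s}(y)d^{d}y$; by the Lorentz invariance of $\Delta$ this is simply $G^{s}(x)=F(\Lambda_{1}^{s}x)-F(x)$ with $F=\Delta * f$. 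These are precisely the functions displayed in the theorem.

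The decisive geometric observation is that the boost $\Lambda_{1}^{s}$ fixes the entangling surface $\partial\Sigma=\{x^{0}=x^{1}=0\}$ pointwise, so $G^{s}$ vanishes on $\partial\Sigma$; equivalently, $g_{\pi}^{s}(\bar{x})=G^{s}(0,\bar x)$ vanishes at $x^{1}=0$. This is what makes the naive splitting
\begin{equation}
g_{\nu}^{s}=g_{\nu,R}^{s}+g_{\nu,L}^{s},\qquad g_{\nu,R}^{s}:=g_{\nu}^{s}\,\Theta(x^{1}),\quad g_{\nu,L}^{s}:=g_{\nu}^{s}\,\Theta(-x^{1}),\qquad \nu=\varphi,\pi,
\end{equation}
work in the Sobolev sense: $g_{\pi,R}^{s}$ is continuous, piecewise $C^{1}$, with $L^{2}$ derivatives, so Theorem \ref{par:lemma-1} places it in $H^{1}\subset H^{1/2}$, ensuring that $W_{\pi}(g_{\pi,R}^{s})$ is a well-defined element of $\mathcal{R}_{0}(\Sigma)=\mathcal{R}_{\mathcal{W}}$. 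For $g_{\varphi}^{s}$ there is no issue since $L^{2}\subset H^{-1/2}$, so $W_{\varphi}(g_{\varphi,R}^{s})\in\mathcal{R}_{\mathcal{W}}$ automatically. Analogously the $L$ pieces lie in $\mathcal{R}_{\mathcal{W}'}$.

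Finally, because the supports of $g_{\varphi,L}^{s}$ and $g_{\pi,R}^{s}$ are disjoint, the symplectic form \eqref{eq:img_pe} between them vanishes, so $W_{\varphi}(g_{\varphi,L}^{s})$ and $W_{\pi}(g_{\pi,R}^{s})$ commute, and the product factors as
\begin{equation}
W_{\varphi}(g_{\varphi}^{s})W_{\pi}(g_{\pi}^{s})=\underbrace{W_{\varphi}(g_{\varphi,R}^{s})W_{\pi}(g_{\pi,R}^{s})}_{V(t)\,\in\,\mathcal{R}_{\mathcal{W}}}\cdot\underbrace{W_{\varphi}(g_{\varphi,L}^{s})W_{\pi}(g_{\pi,L}^{s})}_{V'(t)\,\in\,\mathcal{R}_{\mathcal{W}'}},
\end{equation}
verifying \eqref{eq:partir_unitario} up to a scalar phase (which is harmless: it can be absorbed into $V(t)$ or equivalently into $\alpha(s)$). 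The boundary condition $V(0)=V'(0)=\mathbf{1}$ holds since $G^{0}\equiv 0$. Applying Lemma \ref{l_par_uni} yields \eqref{rel_mod_flow_coh}, the remaining phase $\alpha(s)$ collecting the contributions from $\beta(s)$, $\gamma(s)$, and the undetermined cocycle phase of the lemma. The main technical obstacle throughout is the Sobolev regularity of $g_{\pi,R}^{s}$; without the vanishing of $G^{s}$ on $\partial\Sigma$ the naive splitting would fail and $W_{\pi}(g_{\pi,R}^{s})$ would not define an operator, so the geometric fact that the boost fixes the entangling surface is doing the crucial work.
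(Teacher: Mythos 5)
Your proposal is correct and follows essentially the same route as the paper's proof: computing the cocycle $W(f)^{*}\Delta_{\Omega}^{it}W(f)\Delta_{\Omega}^{-it}$ via Bisognano--Wichmann and the Weyl relations, converting to fixed-time Weyl operators, splitting the Cauchy data sharply with $\Theta(\pm x^{1})$, invoking Theorem \ref{par:lemma-1} to get $g_{\pi,R}^{s}\in H^{1}\subset H^{1/2}$ (which hinges on $g_{\pi}^{s}$ vanishing at $x^{1}=0$), and using the vanishing symplectic form between disjointly supported pieces to commute the left and right factors before applying Lemma \ref{l_par_uni}. Your explicit observation that the continuity of $g_{\pi,R}^{s}$ at the entangling surface comes from the boost fixing $\partial\Sigma$ pointwise is left implicit in the paper but is exactly the mechanism at work.
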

\begin{proof}
From relations (\ref{eq:axioms_ft_1st}-\ref{eq:axioms_ft_last})
we have that $\mathcal{R}_{\mathcal{W}}$ is a vN factor. From the Reeh-Schlieder
theorem, we have that the vacuum vector $\Omega$ is cyclic and separating.
And we have already discussed that any Weyl unitary satisfies $W\left(\boldsymbol{h}\right)^{*}\mathcal{R}_{\mathcal{W}}W\left(\boldsymbol{h}\right)=\mathcal{R}_{\mathcal{W}}$.
From now on, we set $s=-2\pi t$ and we replace $U=W\left(f\right)$
in \eqref{eq:partir_unitario}
\begin{eqnarray}
W\left(f\right)^{*}\Delta_{\Omega}^{it}W\left(-f\right)\Delta_{\Omega}^{-it}\!\!\!\! & = & \!\!\!\!W\left(-f\right)\mathrm{e}^{isK_{1}}W\left(f\right)\mathrm{e}^{-isK_{1}}=W\left(-f\right)W\left(f_{\left(\Lambda_{1}^{s},0\right)}\right)\nonumber \\
 & = & \!\!\!\!\mathrm{e}^{i\mathrm{Im}\left\langle f,f_{\left(\Lambda_{1}^{s},0\right)}\right\rangle _{\mathfrak{H}}}W\left(f_{\left(\Lambda_{1}^{s},0\right)}-f\right)=\mathrm{e}^{i\mathrm{Im}\left\langle f,f^{s}\right\rangle _{\mathfrak{H}}}W\left(f^{s}-f\right)\,,\label{eq:partir_unitario_2}
\end{eqnarray}
where we have defined $f^{s}:=f_{\left(\Lambda_{1}^{s},0\right)}$.
Applying the decomposition \eqref{eq:segal_vs_weyl_1p} to $g^{s}:=f^{s}-f$
we have
\begin{eqnarray}
W\left(f\right)^{*}\Delta_{\Omega}^{it}W\left(f\right)\Delta_{\Omega}^{-it}\!\!\!\! & = & \!\!\!\!\mathrm{e}^{i\mathrm{Im}\left\langle f,f^{s}\right\rangle _{\mathfrak{H}}}W\left(g^{s}\right)=\mathrm{e}^{i\mathrm{Im}\left\langle f,f^{s}\right\rangle _{\mathfrak{H}}}\mathrm{e}^{i\mathrm{Im}\left\langle g_{\varphi}^{s},g_{\pi}^{s}\right\rangle _{\mathfrak{H}}}W_{\varphi}\left(g_{\varphi}^{s}\right)W_{\pi}\left(g_{\pi}^{s}\right)\,,\label{eq:partir_unitario_3}
\end{eqnarray}
with
\begin{eqnarray}
g_{\varphi}^{s}\left(\bar{x}\right)\!\!\!\! & = & \!\!\!\!-\frac{\partial G^{s}}{\partial x^{0}}\left(0,\bar{x}\right)=-\cosh\left(s\right)\frac{\partial F}{\partial x^{0}}\left(\bar{x}^{s}\right)+\sinh\left(s\right)\frac{\partial F}{\partial x^{1}}\left(\bar{x}^{s}\right)+\frac{\partial F}{\partial x^{0}}\left(0,\bar{x}\right)\,,\label{eq:fpi_partible}\\
g_{\pi}^{s}\left(\bar{x}\right)\!\!\!\! & = & \!\!\!\!G^{s}\left(0,\bar{x}\right)=F\left(\bar{x}^{s}\right)-F\left(0,\bar{x}\right)\,,
\end{eqnarray}
where $\bar{x}^{s}:=\left(\Lambda_{1}^{-s}x\right)_{x^{0}=0}=\left(-x^{1}\sinh\left(s\right),x^{1}\cosh\left(s\right),\bar{x}_{\bot}\right)$
and
\begin{eqnarray}
G^{s}\left(x\right)\!\!\!\! & = & \!\!\!\!\int_{\mathbb{R}^{d}}\Delta\left(x-y\right)\,\left[f^{s}\left(y\right)-f\left(y\right)\right]d^{d}y\,.\label{eq:Gs}
\end{eqnarray}
Now, we explicitly split the unitaries $W_{\varphi}\left(g_{\varphi}^{s}\right)$
and $W_{\pi}\left(g_{\pi}^{s}\right)$ in equation \eqref{eq:partir_unitario_3}
defining
\begin{eqnarray}
g_{\varphi,R}^{s}\left(\bar{x}\right):=g_{\varphi}^{s}\left(\bar{x}\right)\Theta\left(x^{1}\right) & \mathrm{and} & g_{\varphi,L}^{s}\left(\bar{x}\right):=g_{\varphi}^{s}\left(\bar{x}\right)\Theta\left(-x^{1}\right)\,,\label{eq:split_phi}\\
g_{\pi,R}^{s}\left(\bar{x}\right):=g_{\pi}^{s}\left(\bar{x}\right)\Theta\left(x^{1}\right) & \mathrm{and} & g_{\pi,L}^{s}\left(\bar{x}\right):=g_{\pi}^{s}\left(\bar{x}\right)\Theta\left(-x^{1}\right)\,,\label{eq:split_pi}
\end{eqnarray}
which clearly implies that $g_{\varphi,L}^{s}+g_{\varphi,R}^{s}=g_{\varphi}^{s}$
and $g_{\pi,L}^{s}+g_{\pi,R}^{s}=g_{\pi}^{s}$. Moreover
\begin{equation}
\begin{array}{cc}
\left.\begin{array}{c}
g_{\varphi,R}^{s},g_{\varphi,L}^{s}\in L^{2}\left(\mathbb{R}^{d-1},\mathbb{R}\right)\subset H^{-\frac{1}{2}}\left(\mathbb{R}^{d-1},\mathbb{R}\right)\\
supp\left(g_{\varphi,R}^{s}\right)\subset\Sigma\textrm{ and }supp\left(g_{\pi,L}^{s}\right)\subset\Sigma'
\end{array}\right\}  & \Rightarrow\end{array}g_{\varphi,R}^{s}\in K_{\varphi}\left(\Sigma\right)\textrm{ and }g_{\varphi,L}^{s}\in K_{\varphi}\left(\Sigma'\right)\,.
\end{equation}
Furthermore, we have that $g_{\pi,R}^{s},\,g_{\pi,L}^{s}$ are real-valued
functions and they satisfy the hypothesis in lemma \eqref{par:lemma-1}.
Then
\begin{equation}
\begin{array}{cc}
\left.\begin{array}{c}
g_{\pi,R}^{s},g_{\pi,L}^{s}\in H^{1}\left(\mathbb{R}^{d-1},\mathbb{R}\right)\subset H^{\frac{1}{2}}\left(\mathbb{R}^{d-1},\mathbb{R}\right)\\
supp\left(g_{\pi,R}^{s}\right)\subset\Sigma\textrm{ and }supp\left(g_{\pi,L}^{s}\right)\subset\Sigma'
\end{array}\right\}  & \Rightarrow\end{array}g_{\pi,R}^{s}\in K_{\pi}\left(\Sigma\right)\textrm{ and }g_{\pi,L}^{s}\in K_{\pi}\left(\Sigma'\right)\,,
\end{equation}
which means that the splits (\ref{eq:split_phi}-\ref{eq:split_pi})
work. Coming back to \eqref{eq:partir_unitario_3}, we have that
\begin{eqnarray}
W\left(f\right)\Delta_{\Omega}^{it}W\left(f\right)^{*}\Delta_{\Omega}^{-it}\!\!\!\! & = & \!\!\!\!\mathrm{e}^{i\mathrm{Im}\left\langle f,f^{s}\right\rangle _{\mathfrak{H}}}\mathrm{e}^{i\mathrm{Im}\left\langle g_{\varphi}^{s},g_{\pi}^{s}\right\rangle _{\mathfrak{H}}}W_{\varphi}\left(g_{\varphi,L}^{s}+g_{\varphi,R}^{s}\right)W_{\pi}\left(g_{\pi,L}^{s}+g_{\pi,R}^{s}\right)\nonumber \\
 & = & \!\!\!\!\mathrm{e}^{i\mathrm{Im}\left\langle f,f^{s}\right\rangle _{\mathfrak{H}}}\mathrm{e}^{i\mathrm{Im}\left\langle g_{\varphi}^{s},g_{\pi}^{s}\right\rangle _{\mathfrak{H}}}W_{\varphi}\left(g_{\varphi,L}^{s}\right)W_{\varphi}\left(g_{\varphi,R}^{s}\right)W_{\pi}\left(g_{\pi,L}^{s}\right)W_{\pi}\left(g_{\pi,R}^{s}\right)\\
 & = & \!\!\!\!\mathrm{e}^{i\mathrm{Im}\left\langle f,f^{s}\right\rangle _{\mathfrak{H}}}\mathrm{e}^{i\mathrm{Im}\left\langle g_{\varphi}^{s},g_{\pi}^{s}\right\rangle _{\mathfrak{H}}}\underset{=0}{\underbrace{\mathrm{e}^{-2i\mathrm{Im}\left\langle g_{\varphi,R}^{s},g_{\pi,L}^{s}\right\rangle _{\mathfrak{H}}}}}\underset{\in\mathcal{R}\mathcal{_{W}}}{\underbrace{W_{\varphi}\left(g_{\varphi,R}^{s}\right)W_{\pi}\left(g_{\pi,R}^{s}\right)}}\underset{\in\mathcal{R}\mathcal{_{W}}'}{\underbrace{W_{\varphi}\left(g_{\varphi,L}^{s}\right)W_{\pi}\left(g_{\pi,L}^{s}\right)}}\,.\nonumber 
\end{eqnarray}
Finally, replacing $V\left(t\right)=W_{\varphi}\left(g_{\varphi,R}^{s}\right)W_{\pi}\left(g_{\pi,R}^{s}\right)$
into \eqref{eq:rel_mod_decom_2} we arrive at \eqref{rel_mod_flow_coh}.
\end{proof}
\begin{rem}
Using the fact that the relative modular flow $\Delta_{\Phi,\Omega}^{it}$
is strongly continuous and that the relative entropy $S_{R}\left(\omega_{f}\mid\omega\right)$
is finite (see the discussion at the beginning of section \ref{subsec:Hard-case})
and hence the expression \eqref{eq:rel_ent_f} holds, it is not difficult
to show that the function $t\mapsto\left\langle \Omega,\Delta_{\Phi,\Omega}^{it}\Omega\right\rangle _{\mathcal{H}}$
is continuous differentiable. Furthermore, taking the vacuum expectation
value on the r.h.s of \eqref{rel_mod_flow_coh}, it can be proven
that the function $\alpha\left(s\right)\in C^{1}\left(\mathbb{R}\right)$. 
\end{rem}
Finally, from \eqref{eq:rel_mod_ham_p} we get the following expression
for the relative modular Hamiltonian
\begin{equation}
K_{\Phi,\Omega}=2\pi\left(\alpha'\left(0\right)\mathbf{1}+\varphi\left(h_{\varphi,R}\right)+\pi\left(h_{\pi,R}\right)+K_{1}\right)\,,\label{eq:explicit_rel_mod}
\end{equation}
where
\begin{eqnarray}
h_{\varphi,R}\left(\bar{x}\right)\!\!\!\! & := & \!\!\!\!\left.\frac{\mathrm{d}}{\mathrm{d}s}\right|_{s=0}g_{\varphi,R}^{s}\left(\bar{x}\right)=\left(x^{1}\frac{\partial^{2}F}{\left(\partial x^{0}\right)^{2}}\left(0,\bar{x}\right)+\frac{\partial F}{\partial x^{1}}\left(0,\bar{x}\right)\right)\cdot\Theta\left(x^{1}\right)\,,\\
h_{\pi,R}\left(\bar{x}\right)\!\!\!\! & := & \!\!\!\!\left.\frac{\mathrm{d}}{\mathrm{d}s}\right|_{s=0}g_{\pi,R}^{s}\left(\bar{x}\right)=\left(-x^{1}\frac{\partial F}{\partial x^{0}}\left(0,\bar{x}\right)\right)\cdot\Theta\left(x^{1}\right)\,.\label{eq:functions_explicit_rel_mod}
\end{eqnarray}
With similar arguments used above, we have that $h_{\varphi,R}\in K_{\varphi}\left(\Sigma\right)$
and $h_{\pi,R}\in K_{\pi}\left(\Sigma\right)$.\footnote{An explicit computation of the strong derivative in equation \eqref{eq:explicit_rel_mod}
shows that the vacuum vector $\Omega$, any coherent vector and any
vector of finite number of particles belong to the domain of $K_{\Psi,\Omega}$.}

Before we proceed to obtain the constant $\alpha'\left(0\right)$,
we emphasize its importance,
\begin{equation}
S_{R}\left(\omega_{f}\mid\omega\right)=\left\langle \Omega,K_{\Phi,\Omega}\Omega\right\rangle _{\mathcal{H}}=2\pi\alpha'\left(0\right)\,.\label{srel_alpha}
\end{equation}
Thus, the constant $\alpha'\left(0\right)$ gives the desired result
for the relative entropy. Regardless of the problem of computing the
value of $\alpha'\left(0\right)$, expressions (\ref{eq:explicit_rel_mod}-\ref{eq:functions_explicit_rel_mod})
gives us an explicit exact expression for the relative modular Hamiltonian
$K_{\Phi,\Omega}$ up to a constant. It is interesting to notice that
the difference $K_{\Phi,\Omega}-K_{\Omega}$ is just a linear term
on the fields operators plus a constant term. We expect that this
structure holds not just for the Rindler wedge, but for any kind of
region as long as $\Phi=W\left(f\right)\Omega$ is a coherent vector.

\subsubsection{Determination of $\alpha'\left(0\right)$ and the relative entropy\label{subsec:determ_alpha}}

As we have already explained in equation \eqref{srel_alpha}, we need
to determine the constant $\alpha'\left(0\right)$. Most of the calculation
is straightforward and we present the detailed computations in Appendix
\ref{subsec:appendix_3}. As in theorem \ref{th_mod_rel_coh}, throughout
this section we set $s:=-2\pi t$.

We start taking the vacuum expectation value on both sides in expression
\eqref{eq:concatenar},
\begin{equation}
\left\langle \Omega,\Delta_{\Psi,\Omega}^{it_{1}}\Delta_{\Psi,\Omega}^{it_{2}}\Omega\right\rangle _{\mathcal{H}}=\left\langle \Omega,\Delta_{\Psi,\Omega}^{i\left(t_{1}+t_{2}\right)}\Omega\right\rangle _{\mathcal{H}}\,,\label{eq:concatenar2}
\end{equation}
and we replace the expression \eqref{rel_mod_flow_coh} obtained for
the relative modular flow (see equations \ref{apen_conca_1}-\ref{apen_conca_2}).
Applying $\left.\frac{\mathrm{d}}{\mathrm{d}s_{1}}\right|_{s_{1}=0}=-\frac{1}{2\pi}\left.\frac{\mathrm{d}}{\mathrm{d}t_{1}}\right|_{t_{1}=0}$
on both sides of \eqref{eq:concatenar2} (equations \ref{apen_der_1}-\ref{apen_der_2}),\footnote{Analytic properties of the relative modular flow ensures that both
sides of \eqref{eq:concatenar2} are continuous differentiable functions
on $t_{1}$ and $t_{2}$. } and matching real and imaginary parts separately we get\footnote{The $\frac{\mathrm{d}}{\mathrm{d}s_{2}}$ in \eqref{eq:ec_dif_alpha}
appears because in some terms the dependance on $s_{1}$ of the expression
is through $s_{1}+s_{2}$. }

\begin{eqnarray}
\alpha'\left(s_{2}\right)-\frac{\mathrm{d}}{\mathrm{d}s_{2}}\mathrm{Im}\left\langle g_{\varphi,R}^{s_{2}},g_{\pi,R}^{s_{2}}\right\rangle _{\mathfrak{H}}\!\!\!\! & = & \!\!\!\!\alpha'\left(0\right)-\left.\frac{\mathrm{d}}{\mathrm{d}s_{1}}\right|_{s_{1}=0}\mathrm{Im}\left\langle g_{R}^{s_{1}},g_{R}^{s_{2}}\right\rangle _{\mathfrak{H}}\,,\label{eq:ec_dif_alpha}\\
\nonumber \\
\left.\frac{\mathrm{d}}{\mathrm{d}s_{1}}\right|_{s_{1}=0}\left\Vert g_{R}^{s_{1}+s_{2}}\right\Vert _{\mathfrak{H}}^{2}\!\!\!\! & = & \!\!\!\!\left.\frac{\mathrm{d}}{\mathrm{d}s_{1}}\right|_{s_{1}=0}\left\Vert g_{R}^{s_{1}}+u\left(\Lambda_{1}^{s_{1}},0\right)g_{R}^{s_{2}}\right\Vert _{\mathfrak{H}}^{2}\,,\label{eq:obvia}
\end{eqnarray}
where $g_{R}^{s}=E_{\varphi}\left(g_{\varphi,R}^{s}\right)+E_{\pi}\left(g_{\pi,R}^{s}\right)$.
The second equation is useless to determine $\alpha'\left(0\right)$,
then we concentrate in the first one which is a differential equation
for $\alpha'\left(s\right)$, with the particularity that $\alpha'\left(0\right)$
appears on it. To solve it, let us analyze the second term on the
right-hand side of equation \eqref{eq:ec_dif_alpha}. In Appendix
\ref{subsec:appendix_3} we compute 
\begin{eqnarray}
2\mathrm{Im}\left\langle g_{R}^{s_{1}},g_{R}^{s_{2}}\right\rangle _{\mathfrak{H}}\!\!\!\! & = & \!\!\!\!2\mathrm{Im}\left\langle g_{\varphi,R}^{s_{1}}+g_{\pi,R}^{s_{1}},g_{\varphi,R}^{s_{2}}+g_{\pi,R}^{s_{2}}\right\rangle _{\mathfrak{H}}\nonumber \\
 & = & \!\!\!\!\underset{:=P\left(s_{1}\right)}{\underbrace{\int_{x^{1}>0}\negthickspace f_{\varphi}\left(\bar{x}\right)f_{\pi}^{s_{1}}\left(\bar{x}\right)\:d^{d-1}x-\int_{x^{1}>0}\negthickspace f_{\varphi}^{s_{1}}\left(\bar{x}\right)f_{\pi}\left(\bar{x}\right)\:d^{d-1}x}}\nonumber \\
 &  & \!\!\!\!+\underset{:=Q\left(s_{1},s_{2}\right)}{\underbrace{\int_{x^{1}>0}\negthickspace f_{\varphi}^{s_{1}}\left(\bar{x}\right)f_{\pi}^{s_{2}}\left(\bar{x}\right)\:d^{d-1}x}}-\underset{:=R\left(s_{1},s_{2}\right)}{\underbrace{\int_{x^{1}>0}\negthickspace f_{\varphi}^{s_{2}}\left(\bar{x}\right)f_{\pi}^{s_{1}}\left(\bar{x}\right)\:d^{d-1}x}}+\gamma\left(s_{2}\right)\:\textrm{,}\label{eq:QyR}
\end{eqnarray}
The function $\gamma$ includes all the $s_{1}$-independent terms,
which they do not contribute to \eqref{eq:ec_dif_alpha}. In the same
appendix we analyze $P,\,Q,\,R$ carefully and we get

\begin{equation}
\left.\frac{\mathrm{d}P}{\mathrm{d}s_{1}}\right|_{s_{1}=0}=\int_{x^{1}\geq0}\negthickspace d^{d-1}x\,x^{1}\left.\left(\left(\frac{\partial F}{\partial x^{0}}\right)^{2}+\left(\nabla F\right)^{2}+m^{2}F^{2}\right)\right|_{x^{0}=0}=:\boldsymbol{S}\:,\label{eq:cool_relation_1}
\end{equation}
\begin{equation}
\left.\frac{\mathrm{d}}{\mathrm{d}s_{1}}\left(Q-R\right)\right|_{s_{1}=0}=-\left.\frac{\mathrm{d}}{\mathrm{d}s_{2}}\left(Q-R\right)\right|_{s_{1}=0}\,.\label{eq:cool_relation_2}
\end{equation}
Coming back to \eqref{eq:ec_dif_alpha}, we have that
\begin{eqnarray}
\alpha'\left(s_{2}\right)-\frac{\mathrm{d}}{\mathrm{d}s_{2}}\mathrm{Im}\left\langle g_{\varphi,R}^{s_{2}},g_{\pi,R}^{s_{2}}\right\rangle _{\mathfrak{H}}\!\!\!\! & = & \!\!\!\!\alpha'\left(0\right)-\left.\frac{\mathrm{d}}{\mathrm{d}s_{1}}\right|_{s_{1}=0}\mathrm{Im}\left\langle g_{R}^{s_{1}},g_{R}^{s_{2}}\right\rangle _{\mathfrak{H}}\nonumber \\
 & = & \!\!\!\!\alpha'\left(0\right)-\frac{1}{2}\left.\frac{\mathrm{d}}{\mathrm{d}s_{1}}\right|_{s_{1}=0}\left(P(s_{1})+Q\left(s_{1},s_{2}\right)-R\left(s_{1},s_{2}\right)\right)\\
 & = & \!\!\!\!\alpha'\left(0\right)-\frac{1}{2}\boldsymbol{S}+\frac{1}{2}\frac{\mathrm{d}}{\mathrm{d}s_{2}}\left(Q\left(0,s_{2}\right)-R\left(0,s_{2}\right)\right)\,.\nonumber 
\end{eqnarray}
Then, integrating this last equation with respect to $s_{2}$ we have
\begin{equation}
\alpha\left(s_{2}\right)-\mathrm{Im}\left\langle g_{\varphi,R}^{s_{2}},g_{\pi,R}^{s_{2}}\right\rangle _{\mathfrak{H}}=\alpha'\left(0\right)\,s_{2}-\frac{1}{2}\boldsymbol{S}\,s_{2}+\frac{1}{2}\left(Q\left(0,s_{2}\right)-R\left(0,s_{2}\right)\right)\,,\label{eq:alpha}
\end{equation}
where we have used $g_{\varphi,R}^{s_{2=0}}=g_{\pi,R}^{s_{2=0}}=0\Rightarrow\mathrm{Im}\left\langle g_{\varphi,R}^{s_{2}=0},g_{\pi,R}^{s_{2}=0}\right\rangle _{\mathfrak{H}}=0$,
and $Q\left(0,0\right)-R\left(0,0\right)=0$ which follows from the
definitions of $Q$ and $R$. To determine $\alpha'\left(0\right)$,
we use the KMS-condition stated in theorem \ref{thm_kms}. Using
$A=B=\boldsymbol{1}$ in equation \eqref{kms_func} and simply calling
$G\left(z\right)$ to the underlying function, we have that
\begin{eqnarray}
G\left(t\right)=\left\langle \Omega,\Delta_{\Psi,\Omega}^{it}\Omega\right\rangle _{\mathcal{H}} & \underset{t\rightarrow-i}{\longrightarrow} & G\left(-i\right)=\left\langle \Phi,\Phi\right\rangle _{\mathcal{H}}=1\,.\label{eq:periodicidad}
\end{eqnarray}
In terms of the real variable $s=-2\pi t$, the function $G\left(s\right)$
is in analytic on $\mathbb{R}+i\left(0,2\pi\right)$, and relation
\eqref{eq:periodicidad} must hold for $s\rightarrow2\pi i$. Using
\eqref{rel_mod_flow_coh}, we have that
\begin{equation}
G\left(s\right)=\mathrm{e}^{i\alpha\left(s\right)}\left\langle \Omega,\mathrm{e}^{i\varphi\left(g_{\varphi,R}^{s}\right)}\mathrm{e}^{i\pi\left(g_{\pi,R}^{s}\right)}\Omega\right\rangle _{\mathcal{H}}=\mathrm{e}^{i\alpha\left(s\right)-i\mathrm{Im}\left\langle g_{\varphi,R}^{s},g_{\pi,R}^{s}\right\rangle _{\mathfrak{H}}-\frac{1}{2}\left\Vert g_{R}^{s}\right\Vert _{\mathfrak{H}}^{2}}\,,\label{eq:kms_explicita}
\end{equation}
and hence
\begin{eqnarray}
i\alpha\left(s\right)-i\mathrm{Im}\left\langle g_{\varphi,R}^{s},g_{\pi,R}^{s}\right\rangle _{\mathfrak{H}}-\frac{1}{2}\left\Vert g_{R}^{s}\right\Vert _{\mathfrak{H}}^{2} & \underset{s\rightarrow2\pi i}{\longrightarrow} & i2n\pi\,,\quad n\in\mathbb{Z}\,.\label{eq:n_molesto}
\end{eqnarray}
Taking this into account, we come back to \eqref{eq:alpha} and write
\begin{equation}
i\alpha\left(s\right)-i\mathrm{Im}\left\langle g_{\varphi,R}^{s},g_{\pi,R}^{s}\right\rangle _{\mathfrak{H}}-\frac{1}{2}\left\Vert g_{R}^{s}\right\Vert _{\mathfrak{H}}^{2}=i\alpha'\left(0\right)\,s-\frac{i}{2}\boldsymbol{S}\,s+\frac{i}{2}\left(Q\left(0,s\right)-R\left(0,s\right)\right)-\frac{1}{2}\left\Vert g_{R}^{s}\right\Vert _{\mathfrak{H}}^{2}\,.\label{eq: tomar_limite}
\end{equation}
Before we take limit $s\rightarrow2\pi i$, we may notice that $\bar{x}^{s}=\left(-x^{1}\sinh\left(s\right),x^{1}\cosh\left(s\right),\bar{x}_{\bot}\right)\underset{s\rightarrow2\pi i}{\longrightarrow}\left(0,\bar{x}\right)$,
which informally suggests that
\begin{eqnarray}
g_{R}^{s}\underset{s\rightarrow2\pi i}{\longrightarrow}0\!\!\! & \Longrightarrow & \!\!\!\left\Vert g_{R}^{s}\right\Vert _{\mathfrak{H}}^{2}\underset{s\rightarrow2\pi i}{\longrightarrow}0\,,\\
f_{\nu}^{s}\underset{s\rightarrow2\pi i}{\longrightarrow}f_{\nu}\!\!\! & \Longrightarrow & \!\!\!Q\left(0,s\right)-R\left(0,s\right)\underset{s\rightarrow2\pi i}{\longrightarrow}0\,,\quad\textrm{where }\nu=\varphi,\pi\,.\label{eq: is_zero}
\end{eqnarray}
We prove in Appendix \ref{subsec:Analytic-continuation} that the
function
\begin{equation}
N\left(s\right):=\frac{i}{2}\left(Q\left(0,s\right)-R\left(0,s\right)\right)-\frac{1}{2}\left\Vert g_{R}^{s}\right\Vert _{\mathfrak{H}}^{2}\textrm{ ,}
\end{equation}
of the variable $s\in\mathbb{R}$, can be analytically continued on
the strip $\mathbb{R}+i\left(0,2\pi\right)$ and that $\lim_{s\rightarrow2\pi i}N\left(s\right)=0$.
Then, taking the limit $s\rightarrow2\pi i$ on \eqref{eq: tomar_limite}
we get
\begin{equation}
i2n\pi=-\alpha'\left(0\right)\,2\pi+\frac{1}{2}\boldsymbol{S}\,2\pi\,.
\end{equation}
Since $\alpha'\left(0\right),\boldsymbol{S}\in\mathbb{R}$ then it
must be $n=0$, an hence we finally get $\alpha'\left(0\right)=\frac{1}{2}\boldsymbol{S}$. 

All these together can be summarized in the following theorem which
generalizes the theorem \ref{thm_sr_coh_par}.
\begin{thm}
\textup{Let $f\in\mathcal{S}\left(\mathbb{R}^{d},\mathbb{R}\right)$
with $supp\left(f\right)$ compact. Then the relative entropy between
the coherent state $\omega_{f}$ and the vacuum $\omega$, for the
right Rindler wedge algebra $\mathcal{R}_{\mathcal{W}}$, is
\begin{equation}
S_{R}\left(\omega_{f}\mid\omega\right)=2\pi\int_{x^{1}>0}\negthickspace d^{d-1}x\,x^{1}\left.\frac{1}{2}\left(\left(\frac{\partial F}{\partial x^{0}}\right)^{2}+\left|\nabla F\right|^{2}+m^{2}F^{2}\right)\right|_{x^{0}=0},\label{sr_coh_final}
\end{equation}
where $F\left(x\right)=\int_{\mathbb{R}^{d}}\Delta\left(x-y\right)\,f\left(y\right)d^{d}y$
. In addition, formula \eqref{sr_coh_final} only depends in the behavior
of $f$ in $\mathbb{R}^{d}-\mathcal{W}'$.}
\end{thm}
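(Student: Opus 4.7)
My plan is to reduce the theorem to computing a single real number $\alpha'(0)$ and recognising it as $\boldsymbol{S}/2$, where $\boldsymbol{S}$ is the stress-tensor integral appearing on the right-hand side. By Theorem \ref{th_mod_rel_coh} the relative modular flow has the form $\Delta_{\Phi,\Omega}^{it}=\mathrm{e}^{i\alpha(s)}W_{\varphi}(g_{\varphi,R}^{s})W_{\pi}(g_{\pi,R}^{s})\Delta_{\Omega}^{it}$, with $s=-2\pi t$, and differentiating at $t=0$ gives $K_{\Phi,\Omega}$ up to the constant $2\pi\alpha'(0)$. Taking the vacuum expectation value then yields $S_R(\omega_f\mid\omega)=2\pi\alpha'(0)$ via the Araki formula \eqref{srel_alpha}, since the linear field terms in $K_{\Phi,\Omega}$ and $K_{\Omega}=2\pi K_1$ annihilate $\Omega$ under $\langle\Omega,\cdot\,\Omega\rangle$.

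To extract $\alpha'(0)$ I would use the one-parameter group law of $\Delta_{\Phi,\Omega}^{it}$. Taking $\langle\Omega,\cdot\,\Omega\rangle$ of the identity $\Delta_{\Phi,\Omega}^{it_1}\Delta_{\Phi,\Omega}^{it_2}=\Delta_{\Phi,\Omega}^{i(t_1+t_2)}$, inserting the explicit formula above, and differentiating in $t_1$ at $t_1=0$, separating real and imaginary parts produces (after organising the imaginary parts of inner products of the split functions into the pieces $P(s_1)$, $Q(s_1,s_2)$, $R(s_1,s_2)$ as in \eqref{eq:QyR}) a first-order ODE for $\alpha(s_2)$. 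A direct computation using the wave equation for $F$ and an integration by parts localised on $\Sigma$ gives $\frac{d}{ds_1}|_{s_1=0}P=\boldsymbol{S}$, and the bilinear pieces obey the symmetry $\partial_{s_1}(Q-R)|_{0}=-\partial_{s_2}(Q-R)|_{0}$. Integrating the ODE in $s_2$ with $\alpha(0)=0$ then determines $\alpha(s)$ up to the single unknown constant $\alpha'(0)$.

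To fix $\alpha'(0)$ I would close the argument with the KMS condition (Theorem \ref{thm_kms}) applied to $A=B=\mathbf{1}$: the function $G(t)=\langle\Omega,\Delta_{\Phi,\Omega}^{it}\Omega\rangle$ extends analytically to $\mathrm{Im}\,t\in(-1,0)$ with $G(-i)=\langle\Phi,\Phi\rangle=1$. Using the explicit form \eqref{eq:kms_explicita} of $G$ and substituting the $\alpha(s)$ produced above, the KMS identity at $s=2\pi i$ becomes $i2\pi n = -2\pi\alpha'(0)+\pi\boldsymbol{S}+\lim_{s\to 2\pi i}N(s)$ for some integer $n$, where $N(s)=\tfrac{i}{2}(Q(0,s)-R(0,s))-\tfrac12\|g_R^s\|_{\mathfrak{H}}^{2}$. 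Provided $N(s)$ admits analytic continuation to the strip with $\lim_{s\to 2\pi i}N(s)=0$, reality of $\alpha'(0)$ and $\boldsymbol{S}$ forces $n=0$ and yields $\alpha'(0)=\boldsymbol{S}/2$, which is the advertised formula. The claim that the answer depends only on $f|_{\mathbb{R}^d-\mathcal{W}'}$ is immediate from the observation that altering $f$ inside $\mathcal{W}'$ changes $F$ by a Klein-Gordon solution whose Cauchy data vanish on $\Sigma$, so the integrand is unaffected.

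The main obstacle is the analytic continuation step $\lim_{s\to 2\pi i}N(s)=0$. Heuristically this is the point where Bisognano-Wichmann periodicity kicks in: as $s\to 2\pi i$, the boosted point $\bar{x}^s=(-x^1\sinh s,\, x^1\cosh s,\,\bar{x}_\perp)$ collapses onto $(0,\bar x)$ and hence $g_R^s\to 0$ and $f_\nu^s\to f_\nu$, making each of the pieces $\|g_R^s\|_{\mathfrak{H}}^{2}$, $Q(0,s)$ and $R(0,s)$ individually vanish. Turning this into a rigorous statement requires handling the characteristic function $\Theta(x^1)$ that defines the spatial split together with the complex boost, controlling integrability uniformly in $s$ in the strip, and justifying that the natural integral representations of $N(s)$ really are holomorphic there; this is the technical heart of the argument and I expect it to be the step that must be outsourced to an appendix.
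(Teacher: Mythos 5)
Your proposal follows essentially the same route as the paper: reduce everything to $\alpha'(0)$ via the explicit relative modular flow of Theorem \ref{th_mod_rel_coh}, extract a differential equation for $\alpha$ from the group law and the $P$, $Q$, $R$ decomposition, and fix the constant with the KMS condition at $s\rightarrow 2\pi i$, concluding $\alpha'(0)=\boldsymbol{S}/2$. The one step you flag as needing to be outsourced --- the analytic continuation of $N(s)$ to the strip and the vanishing of its limit at $2\pi i$, including the regularization of the sharp cut $\Theta(x^{1})$ --- is exactly what the paper relegates to Appendix \ref{subsec:Analytic-continuation}, so your plan is complete and matches the published argument.
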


\section*{Acknowledgments}

We acknowledge discussions with Stefan Hollands and Pedro Martinez. We also
thank Thomas Faulkner for his useful comments on an earlier version
of the manuscript. This work was partially supported by CONICET, CNEA
and Universidad Nacional de Cuyo, Argentina. H.C. acknowledges support
from an It From Qubit grant of the Simons Foundation.

\appendix

\section{Appendix\label{sec:Appendix}}

\subsection{Sobolev spaces\label{subsec:appendix_Sobolev}}

For the definition and properties of Sobolev spaces, we follow \cite{Evans}.
Here we adapt the notation to our convenience.

Consider the test function space $\mathcal{D}\left(\mathbb{R}^{n}\right):=C_{c}^{\infty}\left(\mathbb{R}^{n}\right)\varsubsetneq\mathcal{S}\left(\mathbb{R}^{n}\right)$
of smooth and compactly supported functions, with its usual topology.
The $n$-dimensional complex \textit{Sobolev space} of order $\alpha\in\mathbb{R}$
is defined as 
\begin{equation}
H^{\alpha}\left(\mathbb{R}^{n}\right):=\left\{ f\in\mathcal{D}'\left(\mathbb{R}^{n}\right)\,:\,\hat{f}\left(\bar{p}\right)\omega_{\bar{p}}^{\alpha}\in L^{2}\left(\mathbb{R}^{n}\right)\right\} \,,
\end{equation}
where $\omega_{\bar{p}}=\sqrt{\bar{p}^{2}+1}$ and $\hat{f}\left(\bar{p}\right):=\left(2\pi\right)^{-\frac{n}{2}}\int_{\mathbb{R}^{n}}f\left(\bar{x}\right)\mathrm{e}^{-i\bar{p}\cdot\bar{x}}d^{n}x$
is the usual Fourier transform. From the definition follows that $H^{0}\left(\mathbb{R}^{n}\right)=L^{2}\left(\mathbb{R}^{n}\right)$
and $H^{\alpha}\left(\mathbb{R}^{n}\right)\subset H^{\alpha'}\left(\mathbb{R}^{n}\right)$
if $\alpha>\alpha'$.

The Sobolev space $H^{\alpha}\left(\mathbb{R}^{n}\right)$ is a Hilbert
space under the inner product 
\begin{equation}
\left\langle f,g\right\rangle _{H^{\alpha}}:=\left\langle \hat{f}\omega_{\bar{p}}^{\alpha},\hat{g}\omega_{\bar{p}}^{\alpha}\right\rangle _{L^{2}}=\int_{\mathbb{R}^{n}}d^{n}p\,\hat{f}\left(\bar{p}\right)^{*}\hat{g}\left(\bar{p}\right)\omega_{\bar{p}}^{2\alpha}\,.
\end{equation}
Furthermore, for $f\in H^{\alpha}\left(\mathbb{R}^{n}\right)$ we
have that $\left\Vert f\right\Vert _{H^{\alpha'}}\leq\left\Vert f\right\Vert _{H^{\alpha}}$
if $\alpha>\alpha'$, and hence the natural injections $H^{\alpha}\left(\mathbb{R}^{n}\right)\hookrightarrow H^{\alpha'}\left(\mathbb{R}^{n}\right)$
for $\alpha>\alpha'$ are continuous. We also have that the set $C^{\infty}\left(\mathbb{R}^{n}\right)\subset\mathcal{S}\left(\mathbb{R}^{n}\right)$
is dense in $H^{\alpha}\left(\mathbb{R}^{n}\right)$.

When $\alpha=k\in\mathbb{N}_{0}$, there is also another useful equivalent
characterization of the Sobolev spaces in term of weak derivatives\footnote{The weak derivative of an element of $\mathcal{D}'\left(\mathbb{R}^{n}\right)$
is its usual derivative in the distributional sense.}

\begin{equation}
H^{k}\left(\mathbb{R}^{n}\right)=\left\{ f\in\mathcal{D}'\left(\mathbb{R}^{n}\right)\,:\,D^{\mu}f\in L^{2}\left(\mathbb{R}^{n}\right)\,,\textrm{ for all }\left|\mu\right|\leq k\right\} \,.\label{eq: sobolev_2}
\end{equation}
It is useful to introduce a new norm in $H^{k}\left(\mathbb{R}^{n}\right)$
as 
\begin{equation}
\left\Vert f\right\Vert '_{H^{k}}:=\left(\sum_{\left|\mu\right|\leq k}\int_{\mathbb{R}^{n}}d^{n}x\left|D^{\mu}f\left(x\right)\right|^{2}\right)^{\frac{1}{2}}\,,
\end{equation}
which is equivalent to the former norm $\left\Vert \cdot\right\Vert {}_{H^{k}}$.

The real Sobolev spaces $H^{\alpha}\left(\mathbb{R}^{n},\mathbb{R}\right)$
are defined in a similar manner as above, but restricting to real-valued functions.\\

In general, it is easier to calculate the usual pointwise derivatives
rather than the weak derivatives. Then, the following lemma states
sufficient conditions for both notions of derivatives coincide. Before
we formulate it, we need to introduce the notions of \textit{$C^{k}$}-piecewise
function. 
\begin{defn}
Let $U\subset\mathbb{R}^{n}$ open, $f\in L_{loc}^{1}\left(U\right)$
and $k\in\mathbb{N}_{0}$. We say that $f$ is a \textit{$C^{k}$-piecewise}
function iff there exists a finite family of pairwise disjoint open
sets $\left\{ \Omega_{j}\right\} _{j=1,\ldots,J}\subset U$ such that
\end{defn}
\begin{enumerate}
\item $\bigcup_{j=1}^{J}\overline{\Omega}_{j}=\overline{U}$. 
\item $f\in C^{k}\left(\Omega_{j}\right)$ for all $j=1,\ldots,J$. 
\item For all $j=1,\ldots,J$, $\forall x_{0}\in\partial\Omega_{j}$ and
for all multi-index $\left|\alpha\right|\leq k$, the $\lim_{x\rightarrow x_{0}}\left.D^{\alpha}f\left(x\right)\right|_{\Omega_{j}}$
exist and are finite (where $D^{\alpha}$ is the usual multiorder
pointwise derivative). 
\end{enumerate}
We denote by $C_{t}^{k}\left(U\right)$ the set of \textit{$C^{k}$}-piecewise
functions on $U$.\\

Now, we formulate the lemma that ensures that weak derivatives and
pointwise derivatives coincide. 
\begin{lem}
\label{lem:derivatives}

\textup{Let $U\subset\mathbb{R}^{n}$ be open and $f\in C^{0}\left(U\right)\cap C_{t}^{1}\left(U\right)$.
Then the (first order) weak derivatives of $f$ coincides with the
usual pointwise derivatives. } 
\end{lem}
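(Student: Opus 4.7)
The plan is to verify directly the definition of weak derivative: for every test function $\phi\in\mathcal{D}(U)$ and every index $i\in\{1,\ldots,n\}$, I want to establish the integration-by-parts identity
$$\int_U f(x)\,\partial_i\phi(x)\,dx = -\int_U g_i(x)\,\phi(x)\,dx,$$
where $g_i$ is defined pointwise as $\partial_i f|_{\Omega_j}$ on each $\Omega_j$ and arbitrarily (say, zero) on the negligible set $\bigcup_j\partial\Omega_j$. This identity is precisely the statement that the pointwise derivative $g_i$ is the weak derivative of $f$, and uniqueness of the weak derivative in $L^1_{loc}$ then gives the lemma.

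First I would split $\int_U f\,\partial_i\phi\,dx=\sum_{j=1}^J\int_{\Omega_j}f\,\partial_i\phi\,dx$, using conditions (1)--(2) of the $C^1_t$ definition (the $\Omega_j$ are pairwise disjoint and their closures cover $\overline U$, so $U\setminus\bigcup_j\Omega_j\subset\bigcup_j\partial\Omega_j$ has Lebesgue measure zero). On each $\Omega_j$, $f$ is classically $C^1$, and by condition (3) both $f$ and $\partial_i f$ extend continuously to $\overline{\Omega_j}$; since $\phi\in C_c^\infty(U)$ is smooth up to $\partial\Omega_j$ as well, the classical divergence theorem applied to the vector field $f\phi\,e_i$ on $\Omega_j$ yields
$$\int_{\Omega_j}f\,\partial_i\phi\,dx = -\int_{\Omega_j}(\partial_i f)\,\phi\,dx + \oint_{\partial\Omega_j}f\,\phi\,\nu^{(j)}_i\,d\sigma,$$
where $\nu^{(j)}$ is the outward unit normal of $\Omega_j$.

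Summing over $j$, I would then show that the surface integrals cancel. Up to lower-dimensional sets, $\partial\Omega_j$ decomposes into the piece $\partial\Omega_j\cap\partial U$ and the pieces $\partial\Omega_j\cap\partial\Omega_k$ shared with some other $\Omega_k$. Contributions from $\partial\Omega_j\cap\partial U$ vanish because $\phi$ has compact support inside $U$. On each shared interior face $\partial\Omega_j\cap\partial\Omega_k$ ($k\neq j$), the outward normals of $\Omega_j$ and $\Omega_k$ satisfy $\nu^{(j)}=-\nu^{(k)}$, while the hypothesis $f\in C^0(U)$ forces the two one-sided limits of $f$ to agree with the common value of $f$ on the face. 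The contributions from $j$ and from $k$ are therefore equal and opposite, and cancel pairwise. All boundary terms sum to zero, and combined with the bulk integrals this produces the desired identity.

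The principal technical obstacle is justifying the applicability of the classical divergence theorem on each $\Omega_j$: one needs enough regularity of $\partial\Omega_j$ for the surface measure $d\sigma$ and the outward normal $\nu^{(j)}$ to be defined a.e., e.g.\ Lipschitz boundaries. Such regularity is implicit in the paper's notion of ``piecewise'' and is trivially satisfied for the only concrete decompositions used later, namely the half-space split by $\Theta(x^1)$ appearing in equations \eqref{eq:split_phi}--\eqref{eq:split_pi}. If one wished to avoid any hidden regularity assumption, one could instead replace $\chi_{\Omega_j}$ by a smooth cutoff $\chi_\epsilon$ compactly supported in $\Omega_j$, do the integration by parts against $\chi_\epsilon\phi$ where everything is classical, and pass to the limit $\epsilon\to 0$, using the continuous extensions of $f$ and $\partial_i f$ up to $\partial\Omega_j$ together with dominated convergence to identify the limiting boundary contributions and verify their cancellation.
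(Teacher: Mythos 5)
Your argument is correct in substance but proceeds quite differently from the paper's. The paper disposes of the lemma in three lines: it invokes a result of Scholtes (corollary 4.1.1 of \cite{scholtes}) to conclude that a continuous, piecewise-$C^{1}$ function is locally Lipschitz, hence locally absolutely continuous and weakly differentiable, with weak derivative equal to the a.e.\ pointwise derivative. You instead verify the defining integration-by-parts identity directly: split the integral over the $\Omega_{j}$, apply the divergence theorem on each piece, and cancel the interface terms using the continuity of $f$ together with the opposite orientations of the outward normals on shared faces. Your route is more self-contained and makes the mechanism completely explicit --- it shows exactly why continuity of $f$ across the interfaces kills the jump terms, which is also precisely what one would check in the only application the paper makes of the lemma, where the decomposition is by the hyperplane $x^{1}=0$. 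The price is that you need each $\partial\Omega_{j}$ regular enough for the divergence theorem to apply and $\bigcup_{j}\partial\Omega_{j}$ to be Lebesgue-null so that the pointwise derivative is defined a.e.; neither hypothesis is written into the paper's definition of $C_{t}^{k}$. You flag this honestly, though your proposed cutoff workaround still has to identify and cancel a limiting surface contribution, so it does not genuinely remove the regularity requirement. The paper's Lipschitz route avoids any boundary integral at the cost of leaning on an external reference whose notion of piecewise differentiability must be matched to the one defined here; both proofs are legitimate, and yours trades brevity for transparency.
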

\begin{proof}
Since $f\in C^{0}\left(U\right)\cap C_{t}^{1}\left(U\right)$ we have
that $f$ is locally Lipschitz continuous on $U$ (see corollary 4.1.1
on \cite{scholtes}). Then we have that $f$ is locally absolute continuous
on $U$, and of course $f\in L_{loc}^{1}\left(U\right)$. Then $f$
is weakly differentiable and the (first order) weak and pointwise
derivatives of $f$ coincide a.e. 
\end{proof}
Now, using the above lemma and the alternative definition (eq. \ref{eq: sobolev_2})
for the Sobolev space $H^{1}\left(\mathbb{R}^{n}\right)$, the proof
in lemma \ref{par:lemma-1} is trivial.

\subsection{Calculation of $\left\langle \Omega,W\left(f_{R}\right)^{*}K_{1}W\left(f_{R}\right)\Omega\right\rangle $\label{subsec:appendix_coherent}}

Take $f_{R}\in\mathcal{S}\left(\mathbb{R}^{d},\mathbb{R}\right)$
and for simplicity call $f:=f_{R}$. Then 

\begin{eqnarray}
\left\langle \Omega,W\left(f\right)^{*}K_{1}W\left(f\right)\Omega\right\rangle _{\mathcal{H}}\!\!\!\! & = & \!\!\!\!\left\langle \mathrm{e}^{-\frac{\left\Vert f\right\Vert _{\mathfrak{H}}^{2}}{2}}\sum_{n=0}^{\infty}\frac{if^{\otimes n}}{\sqrt{n!}},K_{1}\mathrm{e}^{-\frac{\left\Vert f\right\Vert _{\mathfrak{H}}^{2}}{2}}\sum_{n=0}^{\infty}\frac{if^{\otimes n}}{\sqrt{n!}}\right\rangle _{\mathcal{H}}\\
 & = & \!\!\!\!\mathrm{e}^{-\left\Vert f\right\Vert _{\mathfrak{H}}^{2}}\left\langle \sum_{n=0}^{\infty}\frac{if^{\otimes n}}{\sqrt{n!}},K_{1}\sum_{n=0}^{\infty}\frac{if^{\otimes n}}{\sqrt{n!}}\right\rangle _{\mathcal{H}}=\mathrm{e}^{-\left\Vert f\right\Vert _{\mathfrak{H}}^{2}}\sum_{n=0}^{\infty}\frac{\left(-i\right)^{n}\left(i\right)^{n}}{n!}\left\langle f^{\otimes n},K_{1}f^{\otimes n}\right\rangle _{\mathfrak{H}^{\otimes n}}\nonumber \\
 & = & \!\!\!\!\mathrm{e}^{-\left\Vert f\right\Vert _{\mathfrak{H}}^{2}}\sum_{n=0}^{\infty}\frac{n}{n!}\left\langle f^{\otimes n},\left(k_{1}f\right)\otimes f^{\otimes n-1}\right\rangle _{\mathfrak{H}^{\otimes n}}=\mathrm{e}^{-\left\Vert f\right\Vert _{\mathfrak{H}}^{2}}\sum_{n=1}^{\infty}\frac{1}{\left(n-1\right)!}\left\langle f,k_{1}f\right\rangle _{\mathfrak{H}}\left\langle f,f\right\rangle _{\mathfrak{H}}^{n-1}\nonumber \\
 & = & \!\!\!\!\mathrm{e}^{-\left\Vert f\right\Vert _{\mathfrak{H}}^{2}}\left\langle f,k_{1}f\right\rangle _{\mathfrak{H}}\sum_{n=1}^{\infty}\frac{1}{\left(n-1\right)!}\left\langle f,f\right\rangle _{\mathfrak{H}}^{n-1}=\mathrm{e}^{-\left\Vert f\right\Vert _{\mathfrak{H}}^{2}}\left\langle f,k_{1}f\right\rangle _{\mathfrak{H}}\mathrm{e}^{\left\Vert f\right\Vert _{\mathfrak{H}}^{2}}=\left\langle f,k_{1}f\right\rangle _{\mathfrak{H}}.
\end{eqnarray}

\subsection{Calculation of $\left\langle f_{R},k_{1}f_{R}\right\rangle _{\mathfrak{H}}$\label{subsec:appendix_2}}

Take $f_{R}\in\mathcal{S}\left(\mathbb{R}^{d},\mathbb{R}\right)$
and for simplicity call $f:=f_{R}$. Then 
\begin{eqnarray}
\left\langle f,k_{1}f\right\rangle _{\mathfrak{H}}\!\!\!\! & = & \!\!\!\!\mathrm{Re}\left\langle f,k_{1}f\right\rangle _{\mathfrak{H}}=\mathrm{Re}\left(-i\left.\frac{\mathrm{d}}{\mathrm{d}s}\right|_{s=0}\left\langle f,\mathrm{e}^{ik_{1}s}f\right\rangle _{\mathfrak{H}}\right)=\left.\frac{\mathrm{d}}{\mathrm{d}s}\right|_{s=0}\mathrm{Im}\left\langle f,u\left(\Lambda_{1}^{s},0\right)f\right\rangle _{\mathfrak{H}}=\nonumber \\
 & = & \!\!\!\!\left.\frac{\mathrm{d}}{\mathrm{d}s}\right|_{s=0}\mathrm{Im}\left\langle f,f_{\left(\Lambda_{1}^{s},0\right)}\right\rangle _{\mathfrak{H}}=\left.\frac{\mathrm{d}}{\mathrm{d}s}\right|_{s=0}\mathrm{Im}\left\langle f,f^{s}\right\rangle _{\mathfrak{H}}\label{eq:evqm_1}
\end{eqnarray}
where we have defined $f^{s}=f_{\left(\Lambda_{1}^{s},0\right)}$.
As we explained in section \ref{subsec:Relation-between-algebras},
there exist functions $f_{\varphi},\,f_{\pi},\,f_{\varphi}^{s},\,f_{\pi}^{s}\in\mathcal{S}\left(\mathbb{R}^{d-1},\mathbb{R}\right)$
such that
\begin{eqnarray}
E\left(f\right)=E_{\varphi}\left(f_{\varphi}\right)+E_{\pi}\left(f_{\pi}\right) & \textrm{ and } & E\left(f^{s}\right)=E_{\varphi}\left(f_{\varphi}^{s}\right)+E_{\pi}\left(f_{\pi}^{s}\right)\:\textrm{.}
\end{eqnarray}
Replacing these in \eqref{eq:evqm_1} we get
\begin{eqnarray}
\left\langle f,k_{1}f\right\rangle _{\mathfrak{H}}\!\!\!\! & = & \!\!\!\!\left.\frac{\mathrm{d}}{\mathrm{d}s}\right|_{s=0}\mathrm{Im}\left\langle f_{\varphi}+f_{\pi},f_{\varphi}^{s}+f_{\pi}^{s}\right\rangle _{\mathfrak{H}}=\left.\frac{\mathrm{d}}{\mathrm{d}s}\right|_{s=0}\left(\mathrm{Im}\left\langle f_{\varphi},f_{\pi}^{s}\right\rangle _{\mathfrak{H}}+\mathrm{Im}\left\langle f_{\pi},f_{\varphi}^{s}\right\rangle _{\mathfrak{H}}\right)\nonumber \\
 & = & \!\!\!\!\left.\frac{\mathrm{d}}{\mathrm{d}s}\right|_{s=0}\left(\frac{1}{2}\int_{\mathbb{R}^{d-1}}f_{\varphi}\left(\bar{x}\right)\,f_{\pi}^{s}\left(\bar{x}\right)d^{d-1}x-\mathrm{\frac{1}{2}}\int_{\mathbb{R}^{d-1}}f_{\varphi}^{s}\left(\bar{x}\right)\,f_{\pi}\left(\bar{x}\right)d^{d}x\right)\,,\label{eq:evqm_2}
\end{eqnarray}
where we have used the relations \eqref{eq:rel_prod_int} in the second line
and \eqref{eq:img_pe} in the last line. From the Poincaré invariance
of the distribution $\Delta\left(x\right)$ we have that
\begin{equation}
F^{s}\left(x\right)=\int_{\mathbb{R}^{d}}\Delta\left(x-y\right)\,f^{s}\left(x\right)d^{d}y\,\textrm{,}
\end{equation}
where we have defined $F^{s}:=F_{\left(\Lambda_{1}^{s},0\right)}$.
Then
\begin{eqnarray}
f_{\varphi}\left(\bar{x}\right)\!\!\!\! & := & \!\!\!\!-\frac{\partial F}{\partial x^{0}}\left(0,\bar{x}\right)\,,\\
f_{\pi}\left(\bar{x}\right)\!\!\!\! & := & \!\!\!\!F\left(0,\bar{x}\right)\,,\\
f_{\varphi}^{s}\left(\bar{x}\right)\!\!\!\! & := & \!\!\!\!-\cosh\left(s\right)\frac{\partial F}{\partial x^{0}}\left(\bar{x}^{s}\right)+\sinh\left(s\right)\frac{\partial F}{\partial x^{1}}\left(\bar{x}^{s}\right)\,,\\
f_{\pi}^{s}\left(\bar{x}\right)\!\!\!\! & := & \!\!\!\!F\left(\bar{x}^{s}\right)\,,
\end{eqnarray}
being $\bar{x}^{s}:=\left(-x^{1}\sinh\left(s\right),x^{1}\cosh\left(s\right),x_{\bot}\right)$,
and hence 
\begin{eqnarray}
\left.\frac{\mathrm{d}}{\mathrm{d}s}\right|_{s=0}f_{\varphi}^{s}\left(\bar{x}\right)\!\!\!\! & := & \!\!\!\!x^{1}\frac{\partial^{2}F}{\left(\partial x^{0}\right)^{2}}\left(0,\bar{x}\right)+\frac{\partial F}{\partial x^{1}}\left(0,\bar{x}\right)\,,\\
\left.\frac{\mathrm{d}}{\mathrm{d}s}\right|_{s=0}f_{\pi}^{s}\left(\bar{x}\right)\!\!\!\! & := & \!\!\!\!-x^{1}\frac{\partial F}{\partial x^{0}}\left(0,\bar{x}\right)\:\textrm{.}
\end{eqnarray}
Replacing such expressions in \eqref{eq:evqm_2}, using the equation
of motion for $F$ and doing an integration by parts, we finally get
\begin{equation}
\left\langle f,k_{1}f\right\rangle _{\mathfrak{H}}=\int_{\mathbb{R}^{d-1}}d^{d-1}x\,x^{1}\,\frac{1}{2}\left.\left(\left(\frac{\partial F}{\partial x^{0}}\right)^{2}+\left|\nabla F\right|^{2}+m^{2}F^{2}\right)\right|_{x^{0}=0}\textrm{.}\label{apend_sr_easy}
\end{equation}

\subsection{Explicit computations of section \ref{subsec:determ_alpha}\label{subsec:appendix_3}}

Defining $g_{R}^{s}=E_{\varphi}\left(g_{\varphi,R}^{s}\right)+E_{\pi}\left(g_{\pi,R}^{s}\right)\in\mathfrak{H}$,
\begin{eqnarray}
\left\langle \Omega,\Delta_{\Psi,\Omega}^{it_{1}}\Delta_{\Psi,\Omega}^{it_{2}}\Omega\right\rangle _{\mathcal{H}}\!\!\!\! & = & \!\!\!\!\left\langle \Omega,\mathrm{e}^{i\alpha\left(s_{1}\right)}W_{\varphi}\left(g_{\varphi,R}^{s_{1}}\right)W_{\pi}\left(g_{\pi,R}^{s_{1}}\right)\Delta_{\Omega}^{it_{1}}\mathrm{e}^{i\alpha\left(s_{2}\right)}W_{\varphi}\left(g_{\varphi,R}^{s_{2}}\right)W_{\pi}\left(g_{\pi,R}^{s_{2}}\right)\Delta_{\Omega}^{it_{1}}\Omega\right\rangle _{\mathcal{H}}\nonumber \\
 & = & \!\!\!\!\mathrm{e}^{i\alpha\left(s_{1}\right)+i\alpha\left(s_{2}\right)}\left\langle \Omega,W_{\varphi}\left(g_{\varphi,R}^{s_{1}}\right)W_{\pi}\left(g_{\pi,R}^{s_{1}}\right)\mathrm{e}^{is_{1}K_{1}}W_{\varphi}\left(g_{\varphi,R}^{s_{2}}\right)W_{\pi}\left(g_{\pi,R}^{s_{2}}\right)\Omega\right\rangle _{\mathcal{H}}\nonumber \\
 & = & \!\!\!\!\mathrm{e}^{i\alpha\left(s_{1}\right)+i\alpha\left(s_{2}\right)-i\mathrm{Im}\left\langle g_{\varphi,R}^{s_{1}},g_{\pi,R}^{s_{1}}\right\rangle -i\mathrm{Im}\left\langle g_{\varphi,R}^{s_{2}},g_{\pi,R}^{s_{2}}\right\rangle _{\mathfrak{H}}}\left\langle \Omega,W\left(g_{R}^{s_{1}}\right)\mathrm{e}^{is_{1}K_{1}}W\left(g_{R}^{s_{2}}\right)\Omega\right\rangle _{\mathcal{H}}\label{apen_conca_1}\\
 & = & \!\!\!\!\mathrm{e}^{i\alpha\left(s_{1}\right)+i\alpha\left(s_{2}\right)-i\mathrm{Im}\left\langle g_{\varphi,R}^{s_{1}},g_{\pi,R}^{s_{1}}\right\rangle -i\mathrm{Im}\left\langle g_{\varphi,R}^{s_{2}},g_{\pi,R}^{s_{2}}\right\rangle _{\mathfrak{H}}}\left\langle \Omega,W\left(g_{R}^{s_{1}}\right)\mathrm{e}^{is_{1}K_{1}}W\left(g_{R}^{s_{2}}\right)\mathrm{e}^{-is_{1}K_{1}}\Omega\right\rangle _{\mathcal{H}}\nonumber \\
 & = & \!\!\!\!\mathrm{e}^{i\alpha\left(s_{1}\right)+i\alpha\left(s_{2}\right)-i\mathrm{Im}\left\langle g_{\varphi,R}^{s_{1}},g_{\pi,R}^{s_{1}}\right\rangle -i\mathrm{Im}\left\langle g_{\varphi,R}^{s_{2}},g_{\pi,R}^{s_{2}}\right\rangle _{\mathfrak{H}}}\left\langle \Omega,W\left(g_{R}^{s_{1}}\right)W\left(u\left(\Lambda_{1}^{s_{1}}\right)g_{R}^{s_{2}}\right)\Omega\right\rangle _{\mathcal{H}}\nonumber \\
 & = & \!\!\!\!\mathrm{e}^{i\alpha\left(s_{1}\right)+i\alpha\left(s_{2}\right)-i\mathrm{Im}\left\langle g_{\varphi,R}^{s_{1}},g_{\pi,R}^{s_{1}}\right\rangle -i\mathrm{Im}\left\langle g_{\varphi,R}^{s_{2}},g_{\pi,R}^{s_{2}}\right\rangle _{\mathfrak{H}}-i\mathrm{Im}\left\langle g_{R}^{s_{1}},u\left(\Lambda_{1}^{s_{1}}\right)g_{R}^{s_{2}}\right\rangle _{\mathfrak{H}}}\left\langle \Omega,W\left(g_{R}^{s_{1}}+u\left(\Lambda_{1}^{s_{1}}\right)g_{R}^{s_{2}}\right)\Omega\right\rangle _{\mathcal{H}}\nonumber \\
 & = & \!\!\!\!\mathrm{e}^{i\alpha\left(s_{1}\right)+i\alpha\left(s_{2}\right)-i\mathrm{Im}\left\langle g_{\varphi,R}^{s_{1}},g_{\pi,R}^{s_{1}}\right\rangle -i\mathrm{Im}\left\langle g_{\varphi,R}^{s_{2}},g_{\pi,R}^{s_{2}}\right\rangle _{\mathfrak{H}}-i\mathrm{Im}\left\langle g_{R}^{s_{1}},u\left(\Lambda_{1}^{s_{1}}\right)g_{R}^{s_{2}}\right\rangle _{\mathfrak{H}}-\frac{1}{2}\left\Vert g_{R}^{s_{1}}+u\left(\Lambda_{1}^{s_{1}}\right)g_{R}^{s_{2}}\right\Vert _{\mathfrak{H}}^{2}}\,,\nonumber 
\end{eqnarray}
and
\begin{eqnarray}
\left\langle \Omega,\Delta_{\Psi,\Omega}^{i\left(t_{1}+t_{2}\right)}\Omega\right\rangle _{\mathcal{H}}\!\!\!\! & = & \!\!\!\!\left\langle \Omega,\mathrm{e}^{i\alpha\left(s_{1}+s_{2}\right)}W_{\varphi}\left(g_{\varphi,R}^{s_{1}+s_{2}}\right)W_{\pi}\left(g_{\pi,R}^{s_{1}+s_{2}}\right)\Delta_{\Omega}^{i\left(t_{1}+t_{2}\right)}\Omega\right\rangle _{\mathcal{H}}\nonumber \\
 & = & \!\!\!\!\mathrm{e}^{i\alpha\left(s_{1}+s_{2}\right)}\left\langle \Omega,W_{\varphi}\left(g_{\varphi,R}^{s_{1}+s_{2}}\right)W_{\pi}\left(g_{\pi,R}^{s_{1}+s_{2}}\right)\Omega\right\rangle _{\mathcal{H}}\nonumber \\
 & = & \!\!\!\!\mathrm{e}^{i\alpha\left(s_{1}+s_{2}\right)-i\mathrm{Im}\left\langle g_{\varphi,R}^{s_{1}+s_{2}},g_{\pi,R}^{s_{1}+s_{2}}\right\rangle _{\mathfrak{H}}}\left\langle \Omega,W\left(g_{R}^{s_{1}+s_{2}}\right)\Omega\right\rangle _{\mathcal{H}}\label{apen_conca_2}\\
 & = & \!\!\!\!\mathrm{e}^{i\alpha\left(s_{1}+s_{2}\right)-i\mathrm{Im}\left\langle g_{\varphi,R}^{s_{1}+s_{2}},g_{\pi,R}^{s_{1}+s_{2}}\right\rangle _{\mathfrak{H}}-\frac{1}{2}\left\Vert g_{R}^{s_{1}+s_{2}}\right\Vert _{\mathfrak{H}}^{2}}\,.\nonumber 
\end{eqnarray}
Taking $\left.\frac{\mathrm{d}}{\mathrm{d}s_{1}}\right|_{s_{1}=0}$
on both expressions above we obtain,
\begin{eqnarray}
\left.\frac{\mathrm{d}}{\mathrm{d}s_{1}}\right|_{s_{1}=0}\left\langle \Omega,\Delta_{\Psi,\Omega}^{it_{1}}\Delta_{\Psi,\Omega}^{it_{2}}\Omega\right\rangle _{\mathcal{H}}\!\!\!\! & = & \!\!\!\!i\alpha'\left(0\right)-i\underset{=0}{\underbrace{\left.\frac{\mathrm{d}}{\mathrm{d}s_{1}}\right|_{s_{1}=0}\mathrm{Im}\left\langle g_{\varphi,R}^{s_{1}},g_{\pi,R}^{s_{1}}\right\rangle _{\mathfrak{H}}}}-i\left.\frac{\mathrm{d}}{\mathrm{d}s_{1}}\right|_{s_{1}=0}\mathrm{Im}\left\langle g_{R}^{s_{1}},u\left(\Lambda_{1}^{s_{1}}\right)g_{R}^{s_{2}}\right\rangle _{\mathfrak{H}}\nonumber \\
 &  & \!\!\!\!-\frac{1}{2}\left.\frac{\mathrm{d}}{\mathrm{d}s_{1}}\right|_{s_{1}=0}\left\Vert g_{R}^{s_{1}}+u\left(\Lambda_{1}^{s_{1}}\right)g_{R}^{s_{2}}\right\Vert _{\mathfrak{H}}^{2}\label{apen_der_1}\\
 & = & \!\!\!\!i\alpha'\left(0\right)-i\left.\frac{\mathrm{d}}{\mathrm{d}s_{1}}\right|_{s_{1}=0}\mathrm{Im}\left\langle g_{R}^{s_{1}},g_{R}^{s_{2}}\right\rangle _{\mathfrak{H}}-\frac{1}{2}\left.\frac{\mathrm{d}}{\mathrm{d}s_{1}}\right|_{s_{1}=0}\left\Vert g_{R}^{s_{1}}+u\left(\Lambda_{1}^{s_{1}}\right)g_{R}^{s_{2}}\right\Vert _{\mathfrak{H}}^{2}\,,\nonumber 
\end{eqnarray}
and
\begin{eqnarray}
\left.\frac{\mathrm{d}}{\mathrm{d}s_{1}}\right|_{s_{1}=0}\left\langle \Omega,\Delta_{\Psi,\Omega}^{i\left(t_{1}+t_{2}\right)}\Omega\right\rangle _{\mathcal{H}}\!\!\!\! & = & \!\!\!\!i\alpha'\left(s_{2}\right)-i\left.\frac{\mathrm{d}}{\mathrm{d}s_{1}}\right|_{s_{1}=0}\mathrm{Im}\left\langle g_{\varphi,R}^{s_{1}+s_{2}},g_{\pi,R}^{s_{1}+s_{2}}\right\rangle _{\mathfrak{H}}-\frac{1}{2}\left.\frac{\mathrm{d}}{\mathrm{d}s_{1}}\right|_{s_{1}=0}\left\Vert g_{R}^{s_{1}+s_{2}}\right\Vert _{\mathfrak{H}}^{2}\nonumber \\
 & = & \!\!\!\!i\alpha'\left(s_{2}\right)-i\frac{\mathrm{d}}{\mathrm{d}s_{2}}\mathrm{Im}\left\langle g_{\varphi,R}^{s_{2}},g_{\pi,R}^{s_{2}}\right\rangle _{\mathfrak{H}}-\frac{1}{2}\left.\frac{\mathrm{d}}{\mathrm{d}s_{1}}\right|_{s_{1}=0}\left\Vert g_{R}^{s_{1}+s_{2}}\right\Vert _{\mathfrak{H}}^{2}\:\textrm{.}\label{apen_der_2}
\end{eqnarray}
Matching real and imaginary parts of these two last expressions, we
arrive to formulas \eqref{eq:ec_dif_alpha} and \eqref{eq:obvia}.

Expressions \eqref{eq:QyR} follows from
\begin{eqnarray}
2\mathrm{Im}\left\langle g_{R}^{s_{1}},g_{R}^{s_{2}}\right\rangle _{\mathfrak{H}}\!\!\!\! & = & \!\!\!\!2\mathrm{Im}\left\langle g_{\varphi,R}^{s_{1}}+g_{\pi,R}^{s_{1}},g_{\varphi,R}^{s_{2}}+g_{\pi,R}^{s_{2}}\right\rangle _{\mathfrak{H}}\nonumber \\
 & = & \!\!\!\!\int_{\Sigma}\!d^{d-1}x\,g_{\varphi}^{s_{1}}\left(\bar{x}\right)g_{\pi}^{s_{2}}\left(\bar{x}\right)-\int_{\Sigma}\!d^{d-1}x\,g_{\varphi}^{s_{2}}\left(\bar{x}\right)g_{\pi}^{s_{1}}\left(\bar{x}\right)\nonumber \\
 & = & \!\!\!\!\int_{\Sigma}\!d^{d-1}x\left(f_{\varphi}^{s_{1}}\left(\bar{x}\right)-f_{\varphi}\left(\bar{x}\right)\right)\left(f_{\pi}^{s_{2}}\left(\bar{x}\right)-f_{\pi}\left(\bar{x}\right)\right)-\int_{\Sigma}\!d^{d-1}x\left(f_{\varphi}^{s_{2}}\left(\bar{x}\right)-f_{\varphi}\left(\bar{x}\right)\right)\left(f_{\pi}^{s_{1}}\left(\bar{x}\right)-f_{\pi}\left(\bar{x}\right)\right)\nonumber \\
 & = & \underset{:=P\left(s_{1}\right)}{\!\!\!\!\underbrace{\int_{\Sigma}\!d^{d-1}x\,f_{\varphi}\left(\bar{x}\right)f_{\pi}^{s_{1}}\left(\bar{x}\right)-\int_{\Sigma}\!d^{d-1}x\,f_{\varphi}^{s_{1}}\left(\bar{x}\right)f_{\pi}\left(\bar{x}\right)}}\nonumber \\
 &  & \!\!\!\!+\underset{:=Q\left(s_{1},s_{2}\right)}{\underbrace{\int_{\Sigma}\!d^{d-1}x\,f_{\varphi}^{s_{1}}\left(\bar{x}\right)f_{\pi}^{s_{2}}\left(\bar{x}\right)}}-\underset{:=R\left(s_{1},s_{2}\right)}{\underbrace{\int_{\Sigma}\!d^{d-1}x\,f_{\varphi}^{s_{2}}\left(\bar{x}\right)f_{\pi}^{s_{1}}\left(\bar{x}\right)}}+\gamma\left(s_{2}\right)\,,
\end{eqnarray}
where the function $\gamma$ includes all the $s_{1}$-independent
terms.

The function $P\left(s_{1}\right)$ is essentially the same as \eqref{eq:evqm_2}
in Appendix \ref{subsec:appendix_2}, with the difference that now
the integration is over the region $\Sigma=\left\{ \bar{x}\in\mathbb{R}^{d-1}\,:\,x^{1}\geq0\right\} $
instead of the whole $\mathbb{R}^{d-1}$. Despite this, the final
result is the same and hence we get\footnote{Following the computation of \eqref{eq:evqm_2} in Appendix \ref{subsec:appendix_2},
there now appears a boundary term after the integration by parts. Fortunately,
this term vanishes since the integrand is $0$ at the boundary of
$\Sigma$. }
\[
\left.\frac{\mathrm{d}P}{\mathrm{d}s_{1}}\right|_{s_{1}=0}=\int_{\Sigma}\!d^{d-1}x\,x^{1}\left.\left(\left(\frac{\partial F}{\partial x^{0}}\right)^{2}+\left(\nabla F\right)^{2}+m^{2}F^{2}\right)\right|_{x^{0}=0}=:\boldsymbol{S}\,.
\]
Now we explicitly obtain the relations \eqref{eq:cool_relation_2}.
Indeed,
\begin{eqnarray}
\left.\frac{\mathrm{d}R}{\mathrm{d}s_{1}}\right|_{s_{1}=0}\!\!\!\! & = & \!\!\!\!\left.\frac{\mathrm{d}}{\mathrm{d}s_{1}}\right|_{s_{1}=0}\int_{\Sigma}\!d^{d-1}x\,\left(-\cosh\left(s_{2}\right)\frac{\partial F}{\partial x^{0}}\left(\bar{x}^{s_{2}}\right)+\sinh\left(s_{2}\right)\frac{\partial F}{\partial x^{1}}\left(\bar{x}^{s_{2}}\right)\right)F\left(\bar{x}^{s_{1}}\right)\nonumber \\
 & = & \!\!\!\!\int_{\Sigma}\!d^{d-1}x\,\left(-\cosh\left(s_{2}\right)\frac{\partial F}{\partial x^{0}}\left(\bar{x}^{s_{2}}\right)+\sinh\left(s_{2}\right)\frac{\partial F}{\partial x^{1}}\left(\bar{x}^{s_{2}}\right)\right)\left(-x^{1}\frac{\partial F}{\partial x^{0}}\left(\bar{x}\right)\right)\nonumber \\
 & = & \!\!\!\!\int_{\Sigma}\!d^{d-1}x\,\left(-\frac{\partial F}{\partial x^{0}}\left(\bar{x}\right)\right)\left(-x^{1}\cosh\left(s_{2}\right)\frac{\partial F}{\partial x^{0}}\left(\bar{x}^{s_{2}}\right)+x^{1}\sinh\left(s_{2}\right)\frac{\partial F}{\partial x^{1}}\left(\bar{x}^{s_{2}}\right)\right)\nonumber \\
 & = & \!\!\!\!\frac{\mathrm{d}}{\mathrm{d}s_{2}}\int_{\Sigma}\!d^{d-1}x\,\left(-\frac{\partial F}{\partial x^{0}}\left(\bar{x}\right)\right)F\left(\bar{x}^{s_{2}}\right)\nonumber \\
 & = & \!\!\!\!\left.\frac{\mathrm{d}}{\mathrm{d}s_{2}}\right|_{s_{1}=0}\int_{\Sigma}\!d^{d-1}x\,\left(-\cosh\left(s_{1}\right)\frac{\partial F}{\partial x^{0}}\left(\bar{x}^{s_{1}}\right)+\sinh\left(s_{1}\right)\frac{\partial F}{\partial x^{1}}\left(\bar{x}^{s_{1}}\right)\right)F\left(\bar{x}^{s_{2}}\right)\nonumber \\
 & = & \!\!\!\!\left.\frac{\mathrm{d}}{\mathrm{d}s_{2}}\right|_{s_{1}=0}\int_{\Sigma}\!d^{d-1}x\,f_{\varphi}^{s_{1}}\left(\bar{x}\right)f_{\pi}^{s_{2}}\left(\bar{x}\right)=\left.\frac{\mathrm{d}Q}{\mathrm{d}s_{2}}\right|_{s_{1}=0}.\label{eq:RaQ}
\end{eqnarray}
Similarly we start with
\begin{eqnarray*}
\left.\frac{\mathrm{d}Q}{\mathrm{d}s_{1}}\right|_{s_{1}=0}\!\!\!\! & = & \!\!\!\!\left.\frac{\mathrm{d}}{\mathrm{d}s_{1}}\right|_{s_{1}=0}\int_{\Sigma}\!d^{d-1}x\,f_{\varphi}^{s_{1}}\left(\bar{x}\right)f_{\pi}^{s_{2}}\left(\bar{x}\right)\\
 & = & \!\!\!\!\left.\frac{\mathrm{d}}{\mathrm{d}s_{1}}\right|_{s_{1}=0}\int_{\Sigma}\!d^{d-1}x\,\left(-\cosh\left(s_{1}\right)\frac{\partial F}{\partial x^{0}}\left(\bar{x}^{s_{1}}\right)+\sinh\left(s_{1}\right)\frac{\partial F}{\partial x^{1}}\left(\bar{x}^{s_{1}}\right)\right)F\left(\bar{x}^{s_{2}}\right)\\
 & = & \!\!\!\!\int_{\Sigma}\!d^{d-1}x\,\left(x^{1}\frac{\partial^{2}F}{\left(\partial x^{0}\right)^{2}}\left(\bar{x}\right)+\frac{\partial F}{\partial x^{1}}\left(\bar{x}\right)\right)F\left(\bar{x}^{s_{2}}\right)\\
 & = & \!\!\!\!\int_{\Sigma}\!d^{d-1}x\,\left(x^{1}\left(\nabla^{2}-m^{2}\right)F\left(\bar{x}\right)+\frac{\partial F}{\partial x^{1}}\left(\bar{x}\right)\right)F\left(\bar{x}^{s_{2}}\right)\,.
\end{eqnarray*}
First we integrate the Laplacian term by parts,
\begin{eqnarray*}
\left.\frac{\mathrm{d}Q}{\mathrm{d}s_{1}}\right|_{s_{1}=0}\!\!\!\! & = & \!\!\!\!-\int_{\Sigma}\!d^{d-1}x\,x^{1}m^{2}F\left(\bar{x}\right)F\left(\bar{x}^{s_{2}}\right)-\int_{\Sigma}\!d^{d-1}x\,x^{1}\nabla_{\bot}F\left(\bar{x}\right)\cdot\nabla_{\bot}F\left(\bar{x}^{s_{2}}\right)\\
 &  & \!\!\!\!-\int_{\Sigma}\!d^{d-1}x\,x^{1}\frac{\partial F}{\partial x^{1}}\left(\bar{x}\right)\left(-\sinh\left(s_{2}\right)\frac{\partial F}{\partial x^{0}}\left(\bar{x}^{s_{2}}\right)+\cosh\left(s_{2}\right)\frac{\partial F}{\partial x^{1}}\left(\bar{x}^{s_{2}}\right)\right).
\end{eqnarray*}
After a second integration by parts we get
\begin{eqnarray*}
\left.\frac{\mathrm{d}Q}{\mathrm{d}s_{1}}\right|_{s_{1}=0}\!\!\!\! & = & \!\!\!\!\int_{\Sigma}\!d^{d-1}x\,x^{1}F\left(\bar{x}\right)\left(\nabla_{\bot}^{2}-m^{2}\right)F\left(\bar{x}^{s_{2}}\right)+\int_{\Sigma}\!d^{d-1}x\,F\left(\bar{x}\right)\left(-\sinh\left(s_{2}\right)\frac{\partial F}{\partial x^{0}}\left(\bar{x}^{s_{2}}\right)+\cosh\left(s_{2}\right)\frac{\partial F}{\partial x^{1}}\left(\bar{x}^{s_{2}}\right)\right)\\
 &  & \!\!\!\!+\int_{\Sigma}\!d^{d-1}x\,x^{1}F\left(\bar{x}\right)\left(\sinh^{2}\left(s_{2}\right)\frac{\partial^{2}F}{\left(\partial x^{0}\right)^{2}}\left(\bar{x}^{s_{2}}\right)-2\sinh\left(s_{2}\right)\cosh\left(s_{2}\right)\frac{\partial^{2}F}{\partial x^{0}\partial x^{1}}\left(\bar{x}^{s_{2}}\right)+\cosh^{2}\left(s_{2}\right)\frac{\partial^{2}F}{\left(\partial x^{1}\right)^{2}}\left(\bar{x}^{s_{2}}\right)\right).
\end{eqnarray*}
Now we form a Laplacian term in the first line and we use the equation
of motion for $F$,
\begin{eqnarray*}
\left.\frac{\mathrm{d}Q}{\mathrm{d}s_{1}}\right|_{s_{1}=0}\!\!\!\! & = & \!\!\!\!\int_{\Sigma}\!d^{d-1}x\,x^{1}F\left(\bar{x}\right)\frac{\partial^{2}F}{\left(\partial x^{0}\right)^{2}}\left(\bar{x}^{s_{2}}\right)+\int_{\Sigma}\!d^{d-1}x\,F\left(\bar{x}\right)\left(-\sinh\left(s_{2}\right)\frac{\partial F}{\partial x^{0}}\left(\bar{x}^{s_{2}}\right)+\cosh\left(s_{2}\right)\frac{\partial F}{\partial x^{1}}\left(\bar{x}^{s_{2}}\right)\right)\\
 &  & \!\!\!\!+\int_{\Sigma}\!d^{d-1}x\,x^{1}F\left(\bar{x}\right)\left(\sinh^{2}\left(s_{2}\right)\frac{\partial^{2}F}{\left(\partial x^{0}\right)^{2}}\left(\bar{x}^{s_{2}}\right)-2\sinh\left(s_{2}\right)\cosh\left(s_{2}\right)\frac{\partial^{2}F}{\partial x^{0}\partial x^{1}}\left(\bar{x}^{s_{2}}\right)+\sinh^{2}\left(s_{2}\right)\frac{\partial^{2}F}{\left(\partial x^{1}\right)^{2}}\left(\bar{x}^{s_{2}}\right)\right).
\end{eqnarray*}
Finally, a straightforward computation shows that
\begin{eqnarray}
\left.\frac{\mathrm{d}Q}{\mathrm{d}s_{1}}\right|_{s_{1}=0}\!\!\!\! & = & \!\!\!\!\int_{\Sigma}\!d^{d-1}x\,\frac{\mathrm{d}}{\mathrm{d}s_{2}}\left(-\cosh\left(s_{2}\right)\frac{\partial F}{\partial x^{0}}\left(\bar{x}^{s_{2}}\right)+\sinh\left(s_{2}\right)\frac{\partial F}{\partial x^{1}}\left(\bar{x}^{s_{2}}\right)\right)F\left(\bar{x}\right)\nonumber \\
 & = & \!\!\!\!\left.\frac{\mathrm{d}}{\mathrm{d}s_{2}}\right|_{s_{1}=0}\int_{\Sigma}\!d^{d-1}x\,\left(-\cosh\left(s_{2}\right)\frac{\partial F}{\partial x^{0}}\left(\bar{x}^{s_{2}}\right)+\sinh\left(s_{2}\right)\frac{\partial F}{\partial x^{1}}\left(\bar{x}^{s_{2}}\right)\right)F\left(\bar{x}^{s_{1}}\right)\nonumber \\
 & = & \!\!\!\!\left.\frac{\mathrm{d}}{\mathrm{d}s_{2}}\right|_{s_{1}=0}\int_{\Sigma}\!d^{d-1}x\,f_{\varphi}^{s_{2}}\left(\bar{x}\right)f_{\pi}^{s_{1}}\left(\bar{x}\right)=\left.\frac{\mathrm{d}R}{\mathrm{d}s_{2}}\right|_{s_{1}=0}.\label{eq:QaR}
\end{eqnarray}
Using \eqref{eq:RaQ} and \eqref{eq:QaR} we arrive at \eqref{eq:cool_relation_2}.

\subsection{Analytic continuation for $N\left(s\right)$\label{subsec:Analytic-continuation}}

In order to show that formulas \eqref{eq: is_zero} hold, we need
to explicitly show the analytic continuation for the function 
\begin{equation}
N\left(s\right)=\frac{i}{2}\left(Q\left(0,s\right)-R\left(0,s\right)\right)-\frac{1}{2}\left\Vert g_{R}^{s}\right\Vert _{\mathfrak{H}}^{2}\textrm{ ,}\label{eq:N(s)}
\end{equation}
or more specifically, we need to show that there exists a continuous
function $\tilde{N}:\mathbb{R}+i\left[0,2\pi\right]\rightarrow\mathbb{C}$,
analytic on $\mathbb{R}+i\left(0,2\pi\right)$ such that
\begin{equation}
\tilde{N}\left(s+i0\right)=N\left(s\right)\textrm{ .}
\end{equation}
To begin with, we notice that

\begin{eqnarray}
\frac{i}{2}Q\left(0,s\right)\!\!\!\! & = & \!\!\!\!\frac{i}{2}\int_{x^{1}>0}d^{d-1}x\,f_{\varphi}\left(\bar{x}\right)f_{\pi}^{s}\left(\bar{x}\right)=i\,\mathrm{Im}\left\langle f_{\varphi,R},f_{\pi,R}^{s}\right\rangle _{\mathfrak{H}}\,,\\
\frac{i}{2}R\left(0,s\right)\!\!\!\! & = & \!\!\!\!\frac{1}{2}\int_{x^{1}>0}d^{d-1}x\,f_{\varphi}^{s}\left(\bar{x}\right)f_{\pi}\left(\bar{x}\right)=i\,\mathrm{Im}\left\langle f_{\varphi,R}^{s},f_{\pi,R}\right\rangle _{\mathfrak{H}}\,,
\end{eqnarray}
where the above expressions make sense regardless of $f_{\pi,R}^{s}\notin\mathfrak{H}$.
This is because 
\begin{equation}
\left\langle f_{\varphi,R},f_{\pi,R}^{s}\right\rangle _{\mathfrak{H}}=\int_{\mathbb{R}^{d-1}}\frac{d^{d-1}p}{2\omega_{\bar{p}}}\hat{f}_{\varphi,R}\left(\bar{p}\right)^{*}i\omega_{\bar{p}}\hat{f}_{\pi,R}^{s}\left(\bar{p}\right)=\frac{i}{2}\left\langle \hat{f}_{\varphi,R},\hat{f}_{\pi,R}^{s}\right\rangle _{L^{2}}\,,
\end{equation}
which is convergent. The problem involving scalar products of split
functions $f_{\varphi,R}^{s}$ and $f_{\pi,R}^{s}$ happens only when
we try to compute the scalar product of two sharply cut test functions
of the momentum operator, e.g. 
\begin{equation}
\left\langle f_{\text{\ensuremath{\pi}},R},f_{\pi,R}^{s}\right\rangle _{\mathfrak{H}}=\int_{\mathbb{R}^{d-1}}\frac{d^{d-1}p}{2\omega_{\bar{p}}}\left(i\omega_{\bar{p}}\hat{f}_{\pi,R}\left(\bar{p}\right)\right)^{*}i\omega_{\bar{p}}\hat{f}_{\pi,R}^{s}\left(\bar{p}\right)=\frac{1}{2}\left\langle \hat{f}_{\pi,R},\hat{f}_{\pi,R}^{s}\right\rangle _{H^{\frac{1}{2}}}\,,\label{eq: scalar_pr_bad}
\end{equation}
which is in general divergent. Such divergency comes from the noncontinuity of the function $f_{\pi,R}\left(\bar{x}\right)=f_{\pi}\left(\bar{x}\right)\Theta\left(x^{1}\right)$
at $x^{1}=0$. To overcome this difficulty we introduce a family of
smooth functions (for $\epsilon>0$) 
\begin{equation}
f_{\varphi,R}^{\varepsilon}\left(\bar{x}\right):=f_{\varphi}\left(\bar{x}\right)\Theta_{\varepsilon}\left(x^{1}\right)\quad\mathrm{and}\quad f_{\pi,R}^{\varepsilon}\left(\bar{x}\right):=f_{\pi}\left(\bar{x}\right)\Theta_{\varepsilon}\left(x^{1}\right)\,,
\end{equation}
where $\Theta_{\varepsilon}\in C^{\infty}\left(\mathbb{R}\right)$
is a regularized Heaviside function such that
\begin{equation}
\Theta_{\varepsilon}\left(t\right)=\begin{cases}
0 & \textrm{if }t\leq\frac{\varepsilon}{2}\\
1 & \textrm{if }t\geq\varepsilon
\end{cases}\textrm{ .}\label{eq:step_smooth}
\end{equation}
Then
\begin{eqnarray}
f_{\varphi,R}^{\varepsilon}\left(\bar{x}\right)\underset{\epsilon\rightarrow0^{+}}{\longrightarrow}f_{\varphi,R}\left(\bar{x}\right) & \textrm{ and } & f_{\pi,R}^{\varepsilon}\left(\bar{x}\right)\underset{\epsilon\rightarrow0^{+}}{\longrightarrow}f_{\pi,R}\left(\bar{x}\right)\,,\label{eq: conv_funcs}
\end{eqnarray}
where the above convergence must be in a sense that we specify
opportunely below. Before we get into such convergence issues, we
notice that $f_{\varphi,R}^{\varepsilon},f_{\pi,R}^{\varepsilon}\in\mathcal{S}\left(\mathbb{R}^{d-1},\mathbb{R}\right)$
and hence the scalar product \eqref{eq: scalar_pr_bad} is now well
defined. Then we define the function
\begin{equation}
N^{\epsilon}\left(s\right):=i\,\mathrm{Im}\left\langle f_{\varphi,R}^{\epsilon},f_{\pi,R}^{s,\epsilon}\right\rangle _{\mathfrak{H}}-i\,\mathrm{Im}\left\langle f_{\varphi,R}^{s,\epsilon},f_{\pi,R}^{\epsilon}\right\rangle _{\mathfrak{H}}-\frac{1}{2}\left\Vert g_{R}^{s,\epsilon}\right\Vert _{\mathfrak{H}}^{2}\,,\label{eq:Ne(s)}
\end{equation}
which is just the regularized version of \eqref{eq:N(s)}. In the
next subsection we show that $N^{\epsilon}\left(s\right)\rightarrow N\left(s\right)$
when $\epsilon\rightarrow0^{+}$. Expression \ref{eq:Ne(s)} can be
rewritten as
\begin{eqnarray}
\hspace{-1cm}N^{\epsilon}\left(s\right)\!\!\!\! & = & \!\!\!\!i\,\mathrm{Im}\left\langle f_{\varphi,R}^{\epsilon},f_{\pi,R}^{s,\epsilon}\right\rangle _{\mathfrak{H}}-i\,\mathrm{Im}\left\langle f_{\varphi,R}^{s,\epsilon},f_{\pi,R}^{\epsilon}\right\rangle _{\mathfrak{H}}-\frac{1}{2}\left\Vert g_{R}^{s,\epsilon}\right\Vert _{\mathfrak{H}}^{2}\nonumber \\
 & = & \!\!\!\!i\,\mathrm{Im}\left\langle f_{\varphi,R}^{\epsilon},f_{\pi,R}^{s,\epsilon}\right\rangle _{\mathfrak{H}}+i\,\mathrm{Im}\left\langle f_{\pi,R}^{\epsilon},f_{\varphi,R}^{s,\epsilon}\right\rangle _{\mathfrak{H}}-\frac{1}{2}\left\langle f_{R}^{\epsilon}-f_{R}^{s,\epsilon},f_{R}^{\epsilon}-f_{R}^{s,\epsilon}\right\rangle _{\mathfrak{H}}\nonumber \\
 & = & \!\!\!\!\left\langle f_{R}^{\epsilon},f_{R}^{s,\epsilon}\right\rangle _{\mathfrak{H}}-\frac{1}{2}\left\langle f_{R}^{\epsilon},f_{R}^{\epsilon}\right\rangle _{\mathfrak{H}}-\frac{1}{2}\left\langle f_{R}^{s,\epsilon},f_{R}^{s,\epsilon}\right\rangle _{\mathfrak{H}}=\left\langle f_{R}^{-\frac{s}{2},\epsilon},f_{R}^{\frac{s}{2},\epsilon}\right\rangle _{\mathfrak{H}}-\left\langle f_{R}^{\epsilon},f_{R}^{\epsilon}\right\rangle _{\mathfrak{H}}\nonumber \\
 & = & \!\!\!\!\int_{\mathbb{R}^{d-1}}\!\!\frac{d^{d-1}p}{2\omega_{\bar{p}}}\left[\left(\hat{f}_{\varphi,R}^{-\frac{s}{2},\epsilon}+i\omega_{\bar{p}}\hat{f}_{\pi,R}^{-\frac{s}{2},\epsilon}\right)^{*}\left(\hat{f}_{\varphi,R}^{\frac{s}{2},\epsilon}+i\omega_{\bar{p}}\hat{f}_{\pi,R}^{\frac{s}{2},\epsilon}\right)-\left(\hat{f}_{\varphi,R}^{\epsilon}+i\omega_{\bar{p}}\hat{f}_{\pi,R}^{\epsilon}\right)^{*}\left(\hat{f}_{\varphi,R}^{\epsilon}+i\omega_{\bar{p}}\hat{f}_{\pi,R}^{\epsilon}\right)\right],\label{eq:estudiar_conver}
\end{eqnarray}
where in the penultimate line we have used that $f_{R}^{s_{1}+s_{2},\epsilon}=u\left(\Lambda_{1}^{s_{2}}\right)f_{R}^{s_{1},\epsilon}$
for all $s_{1},s_{2}\in\mathbb{R}$. For a moment, let assume that
this last expression converges to
\begin{equation}
N\left(s\right)=\int_{\mathbb{R}^{d-1}}\frac{d^{d-1}p}{2\omega_{\bar{p}}}\left[\left(\hat{f}_{\varphi,R}^{-\frac{s}{2}}+i\omega_{\bar{p}}\hat{f}_{\pi,R}^{-\frac{s}{2}}\right)^{*}\left(\hat{f}_{\varphi,R}^{\frac{s}{2}}+i\omega_{\bar{p}}\hat{f}_{\pi,R}^{\frac{s}{2}}\right)-\left(\hat{f}_{\varphi,R}+i\omega_{\bar{p}}\hat{f}_{\pi,R}\right)^{*}\left(\hat{f}_{\varphi,R}+i\omega_{\bar{p}}\hat{f}_{\pi,R}\right)\right]\,,\label{eq: estudiar_anal}
\end{equation}
when $\epsilon\rightarrow0^{+}$. We prove this in the next subsection.
The second term of the above integrand is independent on $s$ and
hence its analytic continuation is trivial. Let us then focus on the
first term. Using the Poincaré covariance and causality of the Klein-Gordon
equation, it is not difficult to show that

\begin{equation}
\hat{f}_{\varphi,R}^{s}\left(\bar{p}\right)+i\omega_{\bar{p}}\hat{f}_{\pi,R}^{s}\left(\bar{p}\right)=\hat{f}_{\varphi,R}\left(\Lambda_{1}^{s}\bar{p}\right)+i\,\Lambda_{1}^{s}\omega_{\bar{p}}\,\hat{f}_{\pi,R}\left(\Lambda_{1}^{s}\bar{p}\right)\,,\label{eq:cortar_boostear}
\end{equation}
where $\Lambda_{1}^{s}\bar{p}=\left(p^{1}\cosh\left(s\right)-\omega_{\bar{p}}\sinh\left(s\right),\bar{p}_{\bot}\right)$
and $\Lambda_{1}^{s}\omega_{\bar{p}}=\omega_{\bar{p}}\cosh\left(s\right)-p^{1}\sinh\left(s\right)$.
Then, the first integrand term of \eqref{eq: estudiar_anal} becomes
\begin{eqnarray}
 &  & \!\!\!\!\left(\hat{f}_{\varphi,R}^{-\frac{s}{2},\epsilon}\left(\bar{p}\right)+i\omega_{\bar{p}}\hat{f}_{\pi,R}^{-\frac{s}{2},\epsilon}\left(\bar{p}\right)\right)^{*}\left(\hat{f}_{\varphi,R}^{\frac{s}{2},\epsilon}\left(\bar{p}\right)+i\omega_{\bar{p}}\hat{f}_{\pi,R}^{\frac{s}{2},\epsilon}\left(\bar{p}\right)\right)\nonumber \\
 & = & \!\!\!\!\int_{\mathbb{R}^{2\left(d-1\right)}}\negthickspace d^{d-1}x\,d^{d-1}y\left(f_{\varphi,R}\left(\bar{x}\right)-i\omega_{\bar{p}}f_{\pi,R}\left(\bar{x}\right)\right)\left(f_{\varphi,R}\left(\bar{y}\right)+i\omega_{\bar{p}}f_{\pi,R}\left(\bar{y}\right)\right)\mathrm{e}^{i\Lambda^{-\frac{s}{2}}\left(\bar{p}\right)\cdot\bar{x}}\mathrm{e}^{-i\Lambda^{\frac{s}{2}}\left(\bar{p}\right)\cdot\bar{y}}\,,\label{eq:pi_ext}
\end{eqnarray}
where $-i\Lambda^{\frac{s}{2}}\left(\bar{p}\right)\cdot\bar{y}=-i\left(-\sinh\left(\frac{s}{2}\right)\omega_{\bar{p}}+\cosh\left(\frac{s}{2}\right)p^{1}\right)y^{1}-i\bar{p}_{\bot}\cdot\bar{y}_{\bot}$,
and equivalently for $i\Lambda^{-\frac{s}{2}}\left(\bar{p}\right)\cdot\bar{x}$.
Then 
\begin{eqnarray}
 & \!\!\!\!\!\!\!\!\!\!\!\!\!\!\!\! & \!\!\!\!-i\left(-\sinh\left(\frac{s}{2}\right)\omega_{\bar{p}}+\cosh\left(\frac{s}{2}\right)p^{1}\right)y^{1}\nonumber \\
 & \!\!\!\!\!\!\!\!\!\!\!\!\!\!\!\!\underset{s\rightarrow s+i\sigma}{\longrightarrow} & \!\!\!\!-i\left(-\sinh\left(\frac{s+i\sigma}{2}\right)\omega_{\bar{p}}+\cosh\left(\frac{s+i\sigma}{2}\right)p^{1}\right)y^{1}\nonumber \\
 & \!\!\!\!\!\!\!\!\!\!\!\!\!\!\!\!= & \!\!\!\!-i\left(-\sinh\left(\frac{s}{2}\right)\omega_{\bar{p}}+\cosh\left(\frac{s}{2}\right)p^{1}\right)y^{1}\cos\left(\frac{\sigma}{2}\right)-\underset{\geq m}{\underbrace{\left(\cosh\left(\frac{s}{2}\right)\omega_{\bar{p}}-\sinh\left(\frac{s}{2}\right)p^{1}\right)}}y^{1}\sin\left(\frac{\sigma}{2}\right)\textrm{ ,}
\end{eqnarray}
where the second term provides an exponential dumping in equation
\eqref{eq:pi_ext} when $\sigma\in\left(0,2\pi\right)$ because $supp\left(f_{\varphi,R}\right),$
$supp\left(f_{\text{\ensuremath{\pi}},R}\right)\subset\Sigma$. Equivalently
it can be shown that $i\Lambda^{-\frac{s}{2}}\left(\bar{p}\right)\cdot\bar{x}$
also provides an exponential dumping for $\sigma\in\left(0,2\pi\right)$.
Hence we have that
\begin{equation}
\tilde{N}\left(s+i\sigma\right)\textrm{ is an analytic function for }s+i\sigma\in\mathbb{R}+i\left(0,2\pi\right)\textrm{. }
\end{equation}
Looking at expressions \eqref{eq: estudiar_anal} and \eqref{eq:pi_ext},
it is easy to determine that 
\begin{equation}
\lim_{\sigma\rightarrow2\pi^{-},s=0}\tilde{N}\left(s+i\sigma\right)=0\,.
\end{equation}

\subsubsection{Convergence of $N^{\epsilon}\left(s\right)$}

In order to show that expression \eqref{eq: estudiar_anal} holds,
we need to prove the following two limits
\begin{eqnarray}
\!\!\!\!\!\!\!\!\!\!\!\!\!\!\!\!N^{\epsilon}\left(s\right)\!\!\!\! & \underset{\epsilon\rightarrow0^{+}}{\longrightarrow} & \!\!\!\!N\left(s\right)=\frac{i}{2}\left(Q\left(0,s\right)-R\left(0,s\right)\right)-\frac{1}{2}\left\Vert g_{R}^{s}\right\Vert _{\mathfrak{H}}^{2}\,,\label{eq:limite1}\\
\!\!\!\!\!\!\!\!\!\!\!\!\!\!\!\!N^{\epsilon}\left(s\right)\!\!\!\! & \underset{\epsilon\rightarrow0^{+}}{\longrightarrow} & \!\!\!\!\int_{\mathbb{R}^{d-1}}\frac{d^{d-1}p}{2\omega_{\bar{p}}}\left[\left(\hat{f}_{\varphi,R}^{-\frac{s}{2}}+i\omega_{\bar{p}}\hat{f}_{\pi,R}^{-\frac{s}{2}}\right)^{*}\left(\hat{f}_{\varphi,R}^{\frac{s}{2}}+i\omega_{\bar{p}}\hat{f}_{\pi,R}^{\frac{s}{2}}\right)-\left(\hat{f}_{\varphi,R}+i\omega_{\bar{p}}\hat{f}_{\pi,R}\right)^{*}\left(\hat{f}_{\varphi,R}+i\omega_{\bar{p}}\hat{f}_{\pi,R}\right)\right].\label{eq: limite2}
\end{eqnarray}
To do this, we must be precise in which sense the functions $f_{\varphi,R}^{s,\epsilon},f_{\pi,R}^{s,\epsilon}$
converge in \eqref{eq: conv_funcs}. To begin we choose the following
smooth step function \eqref{eq:step_smooth}
\begin{equation}
\Theta_{\varepsilon}\left(t\right)=\begin{cases}
0 & \textrm{if }t\leq\frac{\varepsilon}{2}\,,\\
\left[\exp\left(\frac{\epsilon\left(t-\frac{3\epsilon}{4}\right)}{\left(t-\frac{3\epsilon}{4}\right)^{2}-\left(\frac{\epsilon}{4}\right)^{2}}\right)+1\right]^{-1} & \textrm{if }\frac{\varepsilon}{2}<t<\varepsilon\,,\\
1 & \textrm{if }t\geq\varepsilon\textrm{ .}
\end{cases}\label{eq: theta_reg}
\end{equation}
First we focus on the limit \eqref{eq: limite2}. Looking back
to \eqref{eq:estudiar_conver}, we can rewrite the r.h.s. of that
expression as
\begin{eqnarray}
N^{\epsilon}\left(s\right)\!\!\!\! & = & \!\!\!\!\left\langle f_{R}^{-\frac{s}{2},\epsilon},f_{R}^{\frac{s}{2},\epsilon}\right\rangle _{\mathfrak{H}}-\left\langle f_{R}^{\epsilon},f_{R}^{\epsilon}\right\rangle _{\mathfrak{H}}=\left\langle f_{R}^{-\frac{s}{2},\epsilon},f_{R}^{\frac{s}{2},\epsilon}-f_{R}^{\epsilon}+f_{R}^{\epsilon}\right\rangle _{\mathfrak{H}}-\left\langle f_{R}^{\epsilon},f_{R}^{\epsilon}\right\rangle _{\mathfrak{H}}\nonumber \\
 & = & \!\!\!\!\left\langle f_{\varphi,R}^{-\frac{s}{2},\epsilon},f_{\varphi,R}^{\frac{s}{2},\epsilon}-f_{\varphi,R}^{\epsilon}\right\rangle _{\mathfrak{H}}+\left\langle f_{\varphi,R}^{-\frac{s}{2},\epsilon},f_{\pi,R}^{\frac{s}{2},\epsilon}-f_{\pi,R}^{\epsilon}\right\rangle _{\mathfrak{H}}+\left\langle f_{\varphi,R}^{-\frac{s}{2},\epsilon}-f_{\varphi,R}^{\epsilon},f_{\varphi,R}^{\epsilon}\right\rangle _{\mathfrak{H}}\nonumber \\
 &  & \!\!\!\!+\left\langle f_{\pi,R}^{-\frac{s}{2},\epsilon}-f_{\pi,R}^{\epsilon},f_{\varphi,R}^{\epsilon}\right\rangle _{\mathfrak{H}}+\left\langle f_{\pi,R}^{-\frac{s}{2},\epsilon},f_{\varphi,R}^{\frac{s}{2},\epsilon}-f_{\varphi,R}^{\epsilon}\right\rangle _{\mathfrak{H}}+\left\langle f_{\varphi,R}^{-\frac{s}{2},\epsilon}-f_{\varphi,R}^{\epsilon},f_{\pi,R}^{\epsilon}\right\rangle _{\mathfrak{H}}\\
 &  & \!\!\!\!+\underset{\varoast}{\underbrace{\left\langle f_{\pi,R}^{-\frac{s}{2},\epsilon},f_{\pi,R}^{\frac{s}{2},\epsilon}-f_{\pi,R}^{\epsilon}\right\rangle _{\mathfrak{H}}}}+\underset{\varoast}{\underbrace{\left\langle f_{\pi,R}^{-\frac{s}{2},\epsilon}-f_{\pi,R}^{\epsilon},f_{\pi,R}^{\epsilon}\right\rangle _{\mathfrak{H}}}}\,.\label{eq:analizar_conver}
\end{eqnarray}
It is not difficult to see that
\begin{eqnarray}
f_{\varphi,R}^{s,\epsilon}\underset{\epsilon\rightarrow0^{+}}{\longrightarrow}f_{\varphi,R}^{s} & \textrm{ and } & f_{\pi,R}^{s,\epsilon}\underset{\epsilon\rightarrow0^{+}}{\longrightarrow}f_{\pi,R}^{s}\,,\quad\textrm{in }L^{2}\left(\mathbb{R}^{d-1}\right)\,,\label{eq:conv_l2}
\end{eqnarray}
which implies that all terms in \eqref{eq:analizar_conver} are convergent,
except perhaps those pointed by $\varoast$. Now we concentrate in
those remaining terms, e.g.
\begin{eqnarray}
\left\langle f_{\pi,R}^{-\frac{s}{2},\epsilon},f_{\pi,R}^{\frac{s}{2},\epsilon}-f_{\pi,R}^{\epsilon}\right\rangle _{\mathfrak{H}}\!\!\!\! & = & \!\!\!\!\frac{1}{2}\int_{\mathbb{R}^{d-1}}d^{d-1}p\,\hat{f}_{\pi,R}^{-\frac{s}{2},\epsilon}\left(\bar{p}\right)\left(\hat{f}_{\pi,R}^{-\frac{s}{2},\epsilon}\left(\bar{p}\right)-\hat{f}_{\pi,R}^{\epsilon}\left(\bar{p}\right)\right)\omega_{\bar{p}}\,.\label{eq:conv_casi_final}
\end{eqnarray}
The convergence of \eqref{eq:conv_casi_final} is guaranteed by the
fact that 
\begin{eqnarray}
f_{\pi,R}^{\epsilon}-f_{\pi,R}^{-\frac{s}{2},\epsilon} & \underset{\epsilon\rightarrow0^{+}}{\longrightarrow} & f_{\pi,R}-f_{\pi,R}^{-\frac{s}{2}}\quad\textrm{in }H^{1}\left(\mathbb{R}^{d-1}\right)\,,\label{eq:conv_final}\\
 & \Downarrow\nonumber \\
\left(\hat{f}_{\pi,R}^{\epsilon}-\hat{f}_{\pi,R}^{-\frac{s}{2},\epsilon}\right)\omega_{\bar{p}} & \underset{\epsilon\rightarrow0^{+}}{\longrightarrow} & \left(\hat{f}_{\pi,R}-\hat{f}_{\pi,R}^{-\frac{s}{2}}\right)\omega_{\bar{p}}\quad\textrm{in }L^{2}\left(\mathbb{R}^{d-1}\right)\,.\label{eq:conv_final_bis}
\end{eqnarray}
In order to probe \eqref{eq:conv_final} we remember that $f_{\pi,R}\left(\bar{x}\right)-f_{\pi,R}^{s}\left(\bar{x}\right)=g_{\pi}^{s}\left(\bar{x}\right)\Theta\left(x^{1}\right)$
with $g_{\pi}^{s}\in\mathcal{S}\left(\mathbb{R}^{d-1},\mathbb{R}\right)$
and $\left.g_{\pi}^{s}\right|_{x^{1}=0}=0$. Then the following lemma
ensures \eqref{eq:conv_final}. 
\begin{lem}
\textup{Let $g\in\mathcal{S}\left(\mathbb{R}^{n}\right)$ with $\left.g\right|_{x^{1}=0}=0$,
$g_{R}\left(\bar{x}\right)=g\left(\bar{x}\right)\Theta\left(x^{1}\right)$
and $g_{R}^{\epsilon}\left(\bar{x}\right)=g\left(\bar{x}\right)\Theta_{\epsilon}\left(x^{1}\right)$
with $\Theta_{\epsilon}$ as \eqref{eq: theta_reg}. Then $g_{R}\in H^{1}\left(\mathbb{R}^{n}\right)$
and $g_{R}^{\epsilon}\underset{\epsilon\rightarrow0^{+}}{\longrightarrow}g_{R}$
in $H^{1}\left(\mathbb{R}^{n}\right)$.} 
\end{lem}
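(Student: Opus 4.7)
I would proceed in two steps, the second being the real work. First, $g_{R}\in H^{1}\left(\mathbb{R}^{n}\right)$ follows directly from Lemma \ref{lem:derivatives}: the hypothesis $\left.g\right|_{x^{1}=0}=0$ renders $g_{R}=g\,\Theta$ continuous on $\mathbb{R}^{n}$ (the jump of $\Theta$ is killed by the zero of $g$), and $g_{R}$ is clearly piecewise $C^{1}$ with respect to the decomposition $\left\{ x^{1}>0\right\} \cup\left\{ x^{1}<0\right\}$, its one-sided derivative limits all existing because $g\in\mathcal{S}\left(\mathbb{R}^{n}\right)$. The weak derivatives are then simply $\partial_{j}g_{R}=\left(\partial_{j}g\right)\Theta\in L^{2}\left(\mathbb{R}^{n}\right)$; in particular the vanishing of $g$ on $\left\{ x^{1}=0\right\}$ is what kills the boundary distribution $g\,\delta\left(x^{1}\right)$ that would otherwise appear in $\partial_{1}g_{R}$. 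The characterization \eqref{eq: sobolev_2} then yields $g_{R}\in H^{1}\left(\mathbb{R}^{n}\right)$.

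For the convergence, the difference is supported in the slab $\Omega_{\epsilon}:=\left\{ 0<x^{1}<\epsilon\right\}$, and
\begin{equation}
g_{R}^{\epsilon}-g_{R}=g\left(\Theta_{\epsilon}-\Theta\right),\qquad\partial_{1}\left(g_{R}^{\epsilon}-g_{R}\right)=\left(\partial_{1}g\right)\left(\Theta_{\epsilon}-\Theta\right)+g\,\Theta_{\epsilon}'\,,
\end{equation}
with $\partial_{j}\left(g_{R}^{\epsilon}-g_{R}\right)=\left(\partial_{j}g\right)\left(\Theta_{\epsilon}-\Theta\right)$ for $j\neq1$. Dominated convergence, with majorants in $L^{2}$ obtained from $g$ and its Schwartz derivatives together with $\left|\Theta_{\epsilon}-\Theta\right|\leq1$ and pointwise convergence $\Theta_{\epsilon}\to\Theta$ a.e., handles every contribution except the pathological term $g\,\Theta_{\epsilon}'$.

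The main obstacle is that $\Theta_{\epsilon}'$ is concentrated in $\left[\epsilon/2,\epsilon\right]$ with $\left\Vert \Theta_{\epsilon}'\right\Vert _{\infty}=\mathcal{O}\left(1/\epsilon\right)$, so a naive bound diverges. Here the hypothesis $\left.g\right|_{x^{1}=0}=0$ is decisive: the fundamental theorem of calculus along $x^{1}$ gives $g\left(x^{1},\bar{x}_{\bot}\right)=x^{1}\,\tilde{g}\left(x^{1},\bar{x}_{\bot}\right)$ with $\tilde{g}\left(x^{1},\bar{x}_{\bot}\right):=\int_{0}^{1}\left(\partial_{1}g\right)\left(sx^{1},\bar{x}_{\bot}\right)ds$, which is smooth and uniformly bounded on $\Omega_{\epsilon}$ (for $\epsilon\leq1$) by $h\left(\bar{x}_{\bot}\right):=\sup_{\left|t\right|\leq1}\left|\partial_{1}g\left(t,\bar{x}_{\bot}\right)\right|\in L^{2}\left(\mathbb{R}^{n-1}\right)$, the $L^{2}$-membership holding because $\partial_{1}g\in\mathcal{S}\left(\mathbb{R}^{n}\right)$. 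Consequently, on $\mathrm{supp}\left(\Theta_{\epsilon}'\right)$ one has $\left|g\right|^{2}\leq\epsilon^{2}\left|\tilde{g}\right|^{2}$, so that
\begin{equation}
\int_{\mathbb{R}^{n}}\left|g\,\Theta_{\epsilon}'\right|^{2}d^{n}x\leq\frac{C^{2}}{\epsilon^{2}}\cdot\epsilon^{2}\int_{\epsilon/2}^{\epsilon}\left\Vert h\right\Vert _{L^{2}\left(\mathbb{R}^{n-1}\right)}^{2}dx^{1}\leq\tfrac{1}{2}C^{2}\left\Vert h\right\Vert _{L^{2}}^{2}\,\epsilon\underset{\epsilon\rightarrow0^{+}}{\longrightarrow}0\,.
\end{equation}
Summing the three contributions yields $\left\Vert g_{R}^{\epsilon}-g_{R}\right\Vert _{H^{1}}\to0$, completing the proof. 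The cancellation between the factor $x^{1}\leq\epsilon$ coming from $\left.g\right|_{x^{1}=0}=0$ and the factor $1/\epsilon$ from $\Theta_{\epsilon}'$ is precisely the mechanism by which the absence of a $\delta$-boundary contribution in $\partial_{1}g_{R}$ (part one) turns into $H^{1}$-approximability by smooth cutoffs (part two).
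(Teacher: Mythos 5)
Your proposal is correct and uses essentially the same argument as the paper: membership in $H^{1}$ via the identification of weak and pointwise derivatives, and for the convergence the decisive cancellation between the factor $x^{1}=\mathcal{O}\left(\epsilon\right)$ supplied by $\left.g\right|_{x^{1}=0}=0$ and the factor $\left\Vert \Theta_{\epsilon}'\right\Vert _{\infty}=\mathcal{O}\left(1/\epsilon\right)$. The only cosmetic differences are that you write $g=x^{1}\tilde{g}$ via the fundamental theorem of calculus where the paper uses Taylor's theorem with remainder, and you treat general $n$ directly (with the transverse $L^{2}$ majorant $h$) where the paper proves $n=1$ and asserts the generalization is straightforward.
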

\begin{proof}
The fact that $g_{R}\in H^{1}\left(\mathbb{R}^{n}\right)$ is guaranteed
by lemma \ref{par:lemma-1}. Then we prove the convergence for $n=1$.
The generalization to $n>1$ is straightforward. Since $g_{R}$ and $g_{R}^{\epsilon}$
satisfy the hypothesis of the lemma \ref{lem:derivatives}, their
weak derivatives coincide with theirs pointwise derivatives and hence
\begin{eqnarray}
\!\!\!\!\!\!\!\!\!\!\!\!\!\!\left\Vert g_{R}^{\epsilon}-g_{R}\right\Vert '^{2}{}_{H^{1}}\!\!\!\! & = & \!\!\!\!\int_{-\infty}^{+\infty}dx\left|g\left(x\right)\Theta_{\epsilon}\left(x\right)-g\left(x\right)\Theta\left(x\right)\right|^{2}+\int_{-\infty}^{+\infty}dx\left|\partial_{x}\left[g\left(x\right)\Theta_{\epsilon}\left(x\right)-g\left(x\right)\Theta\left(x\right)\right]\right|^{2}\nonumber \\
 & \leq & \!\!\!\!\int_{-\infty}^{+\infty}dx\left|g\left(x\right)\right|^{2}\left|\Theta_{\epsilon}\left(x\right)-\Theta\left(x\right)\right|^{2}+\int_{-\infty}^{+\infty}dx\left|g'\left(x\right)\right|^{2}\left|\Theta_{\epsilon}\left(x\right)-\Theta\left(x\right)\right|^{2}\nonumber \\
 &  & \!\!\!\!+\int_{-\infty}^{+\infty}dx\left|g\left(x\right)\right|^{2}\left|\Theta'_{\epsilon}\left(x\right)\right|^{2}+2\int_{-\infty}^{+\infty}dx\left|g\left(x\right)\right|\left|g'\left(x\right)\right|\left|\Theta_{\epsilon}\left(x\right)-\Theta\left(x\right)\right|\left|\Theta'_{\epsilon}\left(x\right)\right|\nonumber \\
 & \leq & \!\!\!\!\int_{\frac{\epsilon}{2}}^{\epsilon}dx\left(\left|g\left(x\right)\right|^{2}+\left|g'\left(x\right)\right|^{2}\right)+\int_{\frac{\epsilon}{2}}^{\epsilon}dx\left|g\left(x\right)\right|^{2}\left|\Theta'_{\epsilon}\left(x\right)\right|^{2}+2\int_{\frac{\epsilon}{2}}^{\epsilon}dx\left|g\left(x\right)\right|\left|g'\left(x\right)\right|\left|\Theta'_{\epsilon}\left(x\right)\right|\,.
\end{eqnarray}
We notice that since $g\in C^{\infty}\left(\mathbb{R}\right)$ and
$g\left(0\right)=0$, by the Taylor theorem we have that $g\left(x\right)=g'\left(0\right)x+r\left(x\right)x$
with $r\left(x\right)\underset{x\rightarrow0}{\longrightarrow}0$
and $r\in C^{\infty}\left(\mathbb{R}\right)$. We also have that $\max_{x\in\mathbb{R}}\left|\Theta'_{\epsilon}\left(x\right)\right|=\frac{4}{\epsilon}$,
which follows from the definition of that function . Then using the
above properties and assuming $0<\varepsilon\leq1$,
\begin{eqnarray}
\left\Vert g_{R}^{\epsilon}-g_{R}\right\Vert '^{2}{}_{H^{1}}\!\!\!\! & \leq & \!\!\!\!\underset{x\in\left[0,1\right]}{\max}\left(\left|g\left(x\right)\right|^{2}+\left|g'\left(x\right)\right|^{2}\right)\int_{\frac{\epsilon}{2}}^{\epsilon}dx+\underset{x\in\left[0,1\right]}{\max}\left|g'\left(0\right)+r\left(x\right)\right|^{2}\frac{16}{\epsilon^{2}}\int_{\frac{\epsilon}{2}}^{\epsilon}dx\,x^{2}\nonumber \\
 &  & \!\!\!\!+\underset{x\in\left[0,1\right]}{\max}\left|g'\left(x\right)\right|\underset{x\in\left[0,1\right]}{\max}\left|g'\left(0\right)+r\left(x\right)\right|\frac{8}{\epsilon}\int_{\frac{\epsilon}{2}}^{\epsilon}dx\,x\nonumber \\
 & \leq & \!\!\!\!\underset{x\in\left[0,1\right]}{\max}\left(\left|g\left(x\right)\right|^{2}+\left|g'\left(x\right)\right|^{2}\right)\frac{\epsilon}{2}+\underset{x\in\left[0,1\right]}{\max}\left|g'\left(0\right)+r\left(x\right)\right|^{2}\frac{14}{3}\epsilon\nonumber \\
 &  & \!\!\!\!+\underset{x\in\left[0,1\right]}{\max}\left|g'\left(x\right)\right|\underset{x\in\left[0,1\right]}{\max}\left|g'\left(0\right)+r\left(x\right)\right|3\epsilon\underset{\epsilon\rightarrow0^{+}}{\longrightarrow}0\,.
\end{eqnarray}
\end{proof}
Then we have that all terms in \eqref{eq:analizar_conver} converge.
By continuity of the scalar product, the limit of \eqref{eq:analizar_conver}
is just this same expression but evaluated at $\epsilon=0$, which
coincides with the l.h.s of \eqref{eq:conv_l2}.

We use the same arguments to prove the limit \eqref{eq:limite1}.
The first two terms of \eqref{eq:Ne(s)} are convergent due to \eqref{eq:conv_l2},
and the remaining term is also convergent due to \eqref{eq:conv_final}
and \eqref{eq:conv_final_bis}. Then by continuity of the scalar product
we have that
\begin{equation}
N^{\epsilon}\left(s\right)\underset{\epsilon\rightarrow0^{+}}{\longrightarrow}N\left(s\right)=\frac{i}{2}\left(Q\left(0,s\right)-R\left(0,s\right)\right)-\frac{1}{2}\left\Vert g_{R}^{s}\right\Vert _{\mathfrak{H}}^{2}\,.
\end{equation}
Finally, expression \eqref{eq: estudiar_anal} holds.

\bibliographystyle{utphys}
\bibliography{Bibliography}

\end{document}